   \def\@citecolor{blue}%
   \def\@urlcolor{blue}%
   \def\@linkcolor{blue}%
\def\orcidID#1{\smash{\href{http://orcid.org/#1}{\protect\raisebox{-1.25pt}{\protect\includegraphics{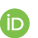}}}}}
\definecolor{hibou_col_lf}{RGB}{22, 22, 130}
\newcommand{\hlf}[1]{\textcolor{hibou_col_lf}{#1}}
\definecolor{hibou_col_ms}{RGB}{15, 86, 15}
\newcommand{\hms}[1]{\textcolor{hibou_col_ms}{#1}}
\newcommand{\shortColRed}[1]{\textcolor{red}{#1}}
\newcommand{\shortColBlue}[1]{\textcolor{blue}{#1}}
\newcommand{\shortColOrange}[1]{\textcolor{orange}{#1}}
\newcommand{\shortColViolet}[1]{\textcolor{violet}{#1}}
\newcommand{\globalInterleaving}{||}
\newcommand{\multiAppendLeft}{\!\!\symbol{94}\!\!}
\newcommand{\multiAppendRight}{\!\!\symbol{94}\!\!}
\newcommand{\macroElim}{\mathsf{rmv}}
\newcommand{\macroHide}{\mathsf{rmv}}
\newcommand{\macroOKVerdict}{Ok}
\newcommand{\macroKOVerdict}{Nok}
\newcommand{\macroRulePass}{\shortColBlue{R_o}}
\newcommand{\macroRuleFail}{\shortColRed{R_n}}
\newcommand{\macroRuleExec}{\shortColOrange{R_e}}
\newcommand{\macroRuleHide}{\shortColViolet{R_r}}
\newcommand{\multiDiamond}{\diamond}
\newcommand{\multiSeq}{;}
\newcommand{\multiInterleaving}{||}
\newcommand{\sigMultiPrefProjected}[1]{\overline{\sigma_{|#1}}}
\spnewtheorem*{theorem*}{Theorem}{\bfseries}{\itshape}
\spnewtheorem*{lemma*}{Lemma}{\bfseries}{\itshape}
\spnewtheorem*{property*}{Property}{\itshape}{\itshape}
\begin{document}
%
%

\title{Dealing with observability in interaction-based Offline Runtime Verification of Distributed Systems}

\titlerunning{Interaction-based Offline RV of Distributed Systems}

\author{Erwan Mahe\inst{1}\orcidID{0000-0002-5322-4337} \and Boutheina Bannour\inst{1}\orcidID{0000-0002-4943-7807} \and Christophe Gaston\inst{1}\orcidID{0000-0001-6865-5108} \and\\ Arnault Lapitre\inst{1}\orcidID{0000-0002-2185-4051} \and Pascale Le Gall\inst{2}\orcidID{0000-0002-8955-6835}  }

\authorrunning{E. Mahe, B. Bannour, C. Gaston, A. Lapitre, P. Le Gall}
\institute{
Université Paris-Saclay, CEA, List, F-91120, Palaiseau, France
\and
Université Paris-Saclay, CentraleSupélec, F-91192, Gif-sur-Yvette, France
}

\maketitle

\begin{abstract}
Interactions are formal models describing asynchronous communications within a Distributed System (DS). They can be drawn in the fashion of sequence diagrams and executed thanks to an operational semantics akin to that of process algebras. Executions of DS can be characterized by tuples of local traces (one per subsystem) called multi-traces. For a given execution, those local traces can be collected via monitoring and the resulting multi-trace can be analysed using offline Runtime Verification (RV). To that end, interactions may serve as formal references. In practice, however, not all subsystems may be observed and, without synchronising the end of monitoring on different subsystems, some events may not be observed, e.g. the reception of a message may be observed but not the corresponding emission. So as to be able to consider all such cases of partial observation, we propose an offline RV algorithm which uses removal operations to restrict the reference interaction on-the-fly, disregarding the parts concerning no longer observed subsystems. We prove the correctness of the algorithm and assess the performance of an implementation.
\keywords{distributed systems \and offline runtime verification \and interaction \and partial observability 
}
\end{abstract}

\section{Introduction\label{sec:introduction}}

{\em Context.}
Distributed Systems (DS) have been identified in the recent survey~\cite{surveyRV_SanchezSABBCFFK19} as one of the most challenging application domains for Runtime Verification (RV). An important bottleneck is that the formal references against which system executions are analyzed are specified using formalisms or logics usually equipped with trace semantics. Indeed, because DS are composed of subsystems deployed on different computers and communicating via message passing, their executions are more naturally represented as collections of traces observed at the level of the different subsystems' interfaces rather than as single global traces \cite{constraint_based_oracles_for_timed_distributed_systems,passive_conformance_testing_of_service_choreographies}. Those collections can be gathered using a distributed observation architecture involving several local observation devices, each one dedicated to a subsystem, and deployed on the same computer as the subsystem it is dedicated to.
An approach to confront such collections of local execution traces to formal references with a trace semantics might consist in identifying the global traces that result from all possible temporal orderings of the events occurring in the local traces. If none of those global traces conforms to the formal reference, then we might conclude that an error is observed \cite{passive_conformance_testing_of_service_choreographies}. However, the absence of a global clock implies that, in all generality, it is not possible to synchronize the endings of the different local observation processes. Therefore, in the process of reconstructing global traces, some events might be missing in local traces. 
Such problems occur whenever, for technical or legal reasons, it is not possible to observe some subsystems or else the observation has been interrupted too early.

{\em Contributions.} In this paper, we propose a RV approach dedicated to DS with an emphasis on overcoming issues of {\em partial observability}, whether due to the absence of a global clock, or to the impossibility of observing some subsystem executions. Our approach belongs to the family of offline RV techniques in which traces are logged prior to their analysis. 
As for formal references, we inherit the framework of {\em interaction} models from earlier works \cite{revisiting_semantics_of_interactions_for_trace_validity_analysis,a_small_step_approach_to_multi_trace_checking_against_interactions}.
Interactions describe actor-oriented scenarios and can be represented graphically in the fashion of UML Sequence Diagrams (UML-SD) \cite{UML} or Message Sequence Charts (MSC) \cite{MSC}. 
In \cite{a_small_step_approach_to_multi_trace_checking_against_interactions} an algorithm to decide whether or not a collection of local traces is accepted by an interaction is given. However, this algorithm cannot cope with partial observability. The core contribution of this paper is then to define an algorithm to tackle those limitations, i.e. to deal with collections of local traces with missing or incomplete ones. Theorem \ref{th:semantics_of_hidden_interactions} will enable us to relate collections of local traces reflecting partially observed executions to those of the original reference interaction.
The key operator in our algorithm is a removal operator (Definition \ref{def:hiding}) discarding parts of the interaction relative to unobserved subsystems.
We prove the correctness of our algorithm and argue how the use of the removal operations allows us to solve partial observability (Theorem \ref{th:multipref_equates_pass}). 
Finally, we present some experiments using an implementation of our algorithm, given as an extension of the HIBOU tool~\cite{hibou_label}.

{\em Paper outline.} In Section \ref{sec:context}, we discuss the nature of DS, their modelling with interactions and the challenge of applying RV to DS. In Section \ref{sec:defs}, we define multi-traces, interactions and  associated removal operations. In Section~\ref{sec:anahide_space}, we define and prove the correctness of our RV algorithm. In Section~\ref{sec:experiments}, we report experimental results and in Section~\ref{sec:related}, we overview the related works.

\section{Preliminaries\label{sec:context}}

\paragraph{Notations} 
Given a set $A$, $A^*$ is the set of words on $A$, with $\varepsilon$ the empty word and the "$.$" concatenation law.
For any word $w \in A^*$, $|w|$ is the length of $w$ and any word $w'$ is a prefix of $w$ if there exists a word $w''$, possibly empty, such that $w = w'.w''$. Let us note $\overline{w}$ the set of prefixes of a word $w \in A^*$ and $\overline{W}$ the set of prefixes of all words of a set $W \subseteq A^*$. Given a set $A$, $|A|$ designates its cardinal and $\mathcal{P}(A)$ is the set of all subsets of $A$.

\paragraph{Distributed Systems (DS)}
From a black box perspective, the atomic concept to describe the executions of DS is that of communication {\em actions} occurring on a subsystem's interface. Here a subsystem refers to a software system deployed on a single machine. Anticipating the use of interactions as models in Section~\ref{sec:interactions}, a subsystem interface is called a \emph{lifeline} and corresponds to an interaction point on which the subsystem can receive or send some messages. Lifelines are elements of a set $\mathcal{L}$ denoting the universe of lifelines. An action occurring on a lifeline is defined by its kind (emission or reception, identified resp. by the symbols $!$ and $?$) and by the message which it carries. We introduce the universe $\mathcal{M}$ of messages.
Executions observed on a lifeline $l$ can be modelled as execution {\em traces} i.e. sequences of actions.
For $l\in \mathcal{L}$, the set $\mathbb{A}_l$ of {\em actions over $l$} is $\{ l \Delta m ~|~ \Delta \in \{!,?\}, ~ m \in {\cal M} \}$ and the set $\mathbb{T}_l$ of {\em traces over $l$} is $\mathbb{A}_l^*$. For any $a \in \mathbb{A}_l$ of the form $l?m$ or $l!m$, $\theta(a)$ refers to $l$.

Fig.\ref{fig:architecture} sketches out an example of DS composed of three remote subsystems, assimilated to their interface \hlf{\texttt{bro}}, \hlf{\texttt{pub}} and \hlf{\texttt{sub}}. This DS implements a simplified publish/subscribe scheme of communications (an alternative to client-server architecture), which is a cornerstone of some protocols used in the IoT such as MQTT \cite{website_mqtt}. The publisher \hlf{\texttt{pub}} may publish messages on the broker \hlf{\texttt{bro}} which may then forward them to the subscriber \hlf{\texttt{sub}} if it is already subscribed. 
Fig.\ref{fig:example_interaction} depicts an interaction defined between the three lifelines.
Each lifeline is depicted by a vertical line labelled by its name at the top. By default, the top to bottom direction represents time passing. That is, a communication action depicted above another one on the same lifeline occurs beforehand. 
Communication actions are represented by horizontal arrows labelled with the action's message.
Whenever an arrow exits (resp. enters) a lifeline, there is a corresponding emission (resp. reception) action at that point on the line. For example, the horizontal arrow from the lifeline \hlf{\texttt{sub}} to the lifeline \hlf{\texttt{bro}} indicates that the subsystem \hlf{\texttt{sub}} sends the message \hms{\texttt{subscribe}}, denoted as \hlf{\texttt{sub}}!\hms{\texttt{subscribe}}, which is then received by the lifeline \hlf{\texttt{bro}}, denoted as  \hlf{\texttt{bro}}?\hms{\texttt{subscribe}}. More complex behaviors can be introduced through the use of operators (similar to combined fragments in UML-SD) drawn in the shape of boxes that frame sub-behaviors of interest. For instance, in Fig.\ref{fig:example_interaction}, $loop_S$ corresponds to a sequential loop. From the perspective of the $\hlf{\texttt{bro}}$ lifeline, this implies that it can observe words of the form $(\hlf{\texttt{bro}}?\hms{\texttt{publish}})^*\hlf{\texttt{bro}}?\hms{\texttt{subscribe}}(\hlf{\texttt{bro}}?\hms{\texttt{publish}}.\hlf{\texttt{bro}}!\hms{\texttt{publish}})^*$ i.e. it can receive an arbitrary number of instances of the $\hms{\texttt{publish}}$ message then one instance of $\hms{\texttt{subscribe}}$ and then it can receive and transmit an arbitrary number of $\hms{\texttt{publish}}$. 
A representative global trace specified by the interaction in Fig.\ref{fig:example_interaction} is (see Fig.\ref{fig:architecture_global}): \\
\noindent 
\centerline{{\small $\hlf{\texttt{sub}}!\hms{\texttt{subscribe}}.\hlf{\texttt{pub}}!\hms{\texttt{publish}}.\hlf{\texttt{bro}}?\hms{\texttt{subscribe}}.\hlf{\texttt{bro}}?\hms{\texttt{publish}}.\hlf{\texttt{bro}}!\hms{\texttt{publish}}.\hlf{\texttt{sub}}?\hms{\texttt{publish}}$}}  
\noindent 
This trace illustrates that the \hlf{\texttt{pub}} and \hlf{\texttt{sub}} lifelines can send their respective messages \hms{\texttt{publish}} and \hms{\texttt{subscribe}} in any order since there are no constraints on their ordering. In contrast, the reception of a message necessarily takes place after its emission.
Since the reception of the message \hms{\texttt{subscribe}} takes place before that of the \hms{\texttt{publish}} message, this last message necessarily corresponds to the one occurring in the bottom loop.
The global trace in Fig.\ref{fig:architecture_global} is a typical example of a trace {\em accepted} by the interaction in Fig.\ref{fig:example_interaction}, as this trace completely realizes the specified behavior by: unfolding zero times the first loop; realizing the passing of the message \hms{\texttt{subscribe}} between lifelines \hlf{\texttt{sub}} and \hlf{\texttt{bro}}; unfolding one time the second loop. None of the prefixes of this accepted trace is an accepted trace.

\begin{figure}[t]
    \centering
    \begin{minipage}{.475\linewidth}
            \begin{subfigure}[t]{.9\linewidth}
                \centering
                \scalebox{.625}{\begin{tikzpicture}
\tikzstyle{subsys}=[draw,hibou_col_lf,rectangle,line width=2pt,minimum height=2cm,minimum width=2cm,inner sep=0,outer sep=0]
\tikzstyle{element_blk}=[draw,black,rectangle,line width=2pt,minimum height=2cm,minimum width=2cm,inner sep=0,outer sep=0]
\node[subsys] (A) at (-3,0) {
    \input{figures/subfig/tikz_l1}
};
\node[subsys] (B) at (0,0) {
    \input{figures/subfig/tikz_l2}
};
\node[subsys] (C) at (3,0) {
    \input{figures/subfig/tikz_l3}
};
\draw[<->,line width=2.5pt,hibou_col_ms] (A) -- (B);
\draw[<->,line width=2.5pt,hibou_col_ms] (C) -- (B);
\node[rectangle, fill=black] (Alog) at (A.south) {};
\node[rectangle, fill=black] (Blog) at (B.south) {};
\node[rectangle, fill=black] (Clog) at (C.south) {};
\node[element_blk] (reord) at (-1,-2.75) {
    \input{figures/subfig/tikz_reord}
};
\node[draw,black,rectangle,line width=2pt,align=left] (tra) at (2,-2.75) {
$\hlf{\texttt{sub}}!\hms{\texttt{subscribe}}$\\
$\hlf{\texttt{pub}}!\hms{\texttt{publish}}$\\
$\hlf{\texttt{bro}}?\hms{\texttt{subscribe}}$\\
$\hlf{\texttt{bro}}?\hms{\texttt{publish}}$\\
$\hlf{\texttt{bro}}!\hms{\texttt{publish}}$\\
$\hlf{\texttt{sub}}?\hms{\texttt{publish}}$
};
\draw[line width=2pt,black] (Alog) -- ($(Alog) + (0,-.25)$);
\draw[line width=2pt,black] ($(Alog) + (0,-.25)$) -- ($(Blog) + (0,-.25)$);
\draw[line width=2pt,black] (Blog) -- ($(Blog) + (0,-.25)$);
\draw[line width=2pt,black] (Clog) -- ($(Clog) + (0,-.25)$);
\draw[line width=2pt,black] ($(Clog) + (0,-.25)$) -- ($(Blog) + (0,-.25)$);
\draw[->,line width=2pt,black] ($(reord.north) + (0,.5)$) -- (reord.north);
\draw[->,line width=2pt,black] (reord) -- (tra);
\end{tikzpicture}}
                \caption{Global observation\label{fig:architecture_global}}
            \end{subfigure} \\\vspace*{.5cm}\\
        \begin{subfigure}[b]{.9\linewidth}
            \centering
            \scalebox{.625}{\begin{tikzpicture}
\tikzstyle{subsys}=[draw,hibou_col_lf,rectangle,line width=2pt,minimum height=.6cm,minimum width=2cm,inner sep=0,outer sep=0]
\tikzstyle{element_blk}=[draw,black,rectangle,line width=2pt,minimum height=2cm,minimum width=2cm,inner sep=0,outer sep=0]
\node[subsys] (A) at (-3,0) {
    \texttt{\hlf{pub}}
};
\node[subsys] (B) at (0,0) {
    \texttt{\hlf{bro}}
};
\node[subsys] (C) at (3,0) {
    \texttt{\hlf{sub}}
};
\draw[<->,line width=2.5pt,hibou_col_ms] (A) -- (B);
\draw[<->,line width=2.5pt,hibou_col_ms] (C) -- (B);
\node[rectangle, fill=black] (Alog) at (A.south) {};
\node[rectangle, fill=black] (Blog) at (B.south) {};
\node[rectangle, fill=black] (Clog) at (C.south) {};
\node[draw,black,rectangle,line width=2pt,align=left,below=.5cm of Alog] (Atra) {
$\hlf{\texttt{pub}}!\hms{\texttt{publish}}$
};
\node[draw,black,rectangle,line width=2pt,align=left,below=.5cm of Blog] (Btra) {
$\hlf{\texttt{bro}}?\hms{\texttt{subscribe}}$\\
$\hlf{\texttt{bro}}?\hms{\texttt{publish}}$\\
$\hlf{\texttt{bro}}!\hms{\texttt{publish}}$
};
\node[draw,black,rectangle,line width=2pt,align=left,below=.5cm of Clog] (Ctra) {
$\hlf{\texttt{sub}}!\hms{\texttt{subscribe}}$\\
$\hlf{\texttt{sub}}?\hms{\texttt{publish}}$
};
\draw[->,line width=2pt,black] (Alog) -- (Atra);
\draw[->,line width=2pt,black] (Blog) -- (Btra);
\draw[->,line width=2pt,black] (Clog) -- (Ctra);
\end{tikzpicture}}
            \caption{Complete local observation\label{fig:architecture_local}}
        \end{subfigure} 
    \end{minipage}
    \begin{minipage}{.475\linewidth}
        \begin{subfigure}[t]{.9\linewidth}
            \centering
            \includegraphics[height=3.2cm]{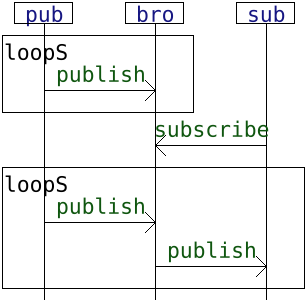}
            \caption{Interaction model\label{fig:example_interaction}}
        \end{subfigure} \\\vspace*{.5cm}\\
        \begin{subfigure}[b]{.9\linewidth}
            \centering
            \scalebox{.625}{\begin{tikzpicture}
\tikzstyle{subsys}=[draw,hibou_col_lf,rectangle,line width=2pt,minimum height=.6cm,minimum width=2cm,inner sep=0,outer sep=0]
\tikzstyle{element_blk}=[draw,black,rectangle,line width=2pt,minimum height=2cm,minimum width=2cm,inner sep=0,outer sep=0]
\node[subsys] (A) at (-3,0) {
    \texttt{\hlf{pub}}
};
\node[subsys] (B) at (0,0) {
    \texttt{\hlf{bro}}
};
\node[subsys] (C) at (3,0) {
    \texttt{\hlf{sub}}
};
\draw[<->,line width=2.5pt,hibou_col_ms] (A) -- (B);
\draw[<->,line width=2.5pt,hibou_col_ms] (C) -- (B);
\node[rectangle, fill=black] (Alog) at (A.south) {};
\node[rectangle, fill=black] (Blog) at (B.south) {};
\node[draw,black,rectangle,line width=2pt,align=left,below=.5cm of Alog] (Atra) {
$\hlf{\texttt{pub}}!\hms{\texttt{publish}}$
};
\node[draw,black,rectangle,line width=2pt,align=left,below=.5cm of Blog] (Btra) {
$\hlf{\texttt{bro}}?\hms{\texttt{subscribe}}$
};
\node[opacity=0,draw,black,rectangle,line width=2pt,align=left,below=.75cm of Clog] (anchor_transparent) {
$\hlf{\texttt{bro}}?\hms{\texttt{subscribe}}$\\
$\hlf{\texttt{bro}}?\hms{\texttt{publish}}$\\
$\hlf{\texttt{bro}}!\hms{\texttt{publish}}$
};
\draw[->,line width=2pt,black] (Alog) -- (Atra);
\draw[->,line width=2pt,black] (Blog) -- (Btra);
\end{tikzpicture}}
            \caption{Partial local observation\label{fig:architecture_partial}}
        \end{subfigure} 
    \end{minipage}
    \caption{A simple publish/subscribe example: architectures \& interaction model \label{fig:architecture}}
\end{figure}

\paragraph{Accepted multi-traces}
Following the terminology of \cite{constraint_based_oracles_for_timed_distributed_systems,a_small_step_approach_to_multi_trace_checking_against_interactions}, we call \emph{multi-trace} a collection of local traces, one per remote subsystem.
Fig.\ref{fig:architecture_local} depicts a multi-trace involving 3 local traces: 
$\hlf{\texttt{bro}}?\hms{\texttt{subscribe}}.\hlf{\texttt{bro}}?\hms{\texttt{publish}}.\hlf{\texttt{bro}}!\hms{\texttt{publish}}$ for subsystem $\hlf{\texttt{bro}}$,
$\hlf{\texttt{pub}}!\hms{\texttt{publish}}$ for $\hlf{\texttt{pub}}$, 
and 
$\hlf{\texttt{sub}}!\hms{\texttt{subscribe}}.\hlf{\texttt{sub}}?\hms{\texttt{publish}}$ for $\hlf{\texttt{sub}}$. 
It is possible to interleave these local traces to obtain the global trace in Fig.\ref{fig:architecture_global}, i.e. the multi-trace in Fig.\ref{fig:architecture_local} corresponds to the tuple of projections of the global trace in Fig.\ref{fig:architecture_global} onto each of the sub-systems. The tuple of projections of a global trace is unique. However, conversely, one might compute several global traces associated to the same tuple of local traces. This is because, in all generality, there is no ordering between actions occurring on different lifelines. For example, from the multi-trace of Fig.\ref{fig:architecture_local}, one could reconstruct the global trace:\\
\noindent 
\centerline{{\small $\hlf{\texttt{pub}}!\hms{\texttt{publish}}.\hlf{\texttt{sub}}!\hms{\texttt{subscribe}}.\hlf{\texttt{bro}}?\hms{\texttt{subscribe}}.\hlf{\texttt{bro}}?\hms{\texttt{publish}}.\hlf{\texttt{bro}}!\hms{\texttt{publish}}.\hlf{\texttt{sub}}?\hms{\texttt{publish}}$}}  
\noindent
The tuple of projections of this global trace is also the multi-trace in Fig.
\ref{fig:architecture_local}.
With the algorithm from \cite{a_small_step_approach_to_multi_trace_checking_against_interactions} one can recognize exactly accepted multi-traces (e.g. the one from Fig.\ref{fig:architecture_local}), which correspond to projections of accepted global traces (e.g. Fig.\ref{fig:architecture_global}).

\paragraph{Logging and Partial observability}
Offline RV requires to collect execution traces prior to their analyses.
In this process, it might be so that some subsystems cannot be equipped with observation devices. Moreover, due to the absence of synchronization between the local observations, the different logging processes might cease at uncorrelated moments.
For example, let us consider the multi-trace in Fig.\ref{fig:architecture_partial} as an observed execution of the system considered in Fig.\ref{fig:architecture}, where, by hypothesis, the subsystem $\hlf{\texttt{sub}}$ is not observed.
Remark that this multi-trace corresponds to a partial observation of the 
multi-trace in Fig.\ref{fig:architecture_local}. Indeed, each trace corresponding to a given subsystem in Fig.\ref{fig:architecture_partial} is a prefix of the trace corresponding to the same sub-system in Fig.\ref{fig:architecture_local}. 
Thus,
if $\hlf{\texttt{sub}}$ executions were also observed and with longer observation times for each local observation processes, it may well be that one would have observed the multi-trace in Fig.\ref{fig:architecture_local} rather than the one in Fig.\ref{fig:architecture_partial}. 
For that reason, when analysing the multi-trace in Fig.\ref{fig:architecture_partial} against the interaction in Fig.\ref{fig:example_interaction}, we need the RV process not to conclude on the occurrence of an error.
In fact, the multi-traces that we recognize as correct are those in which each of the local traces can be extended to reconstruct a multi-trace accepted by the interaction and we call them multi-prefixes of accepted multi-traces. 
Let us remark that a projection of a prefix of an accepted global trace is a prefix of accepted multi-trace. However the reverse is not true. For example, there exists no prefix of a global trace accepted by the interaction in
Fig.\ref{fig:example_interaction} that projects on the multi-trace in 
Fig.\ref{fig:architecture_partial}.
This is because the emission of $\hms{\texttt{subscribe}}$ by $\hlf{\texttt{sub}}$ would precede its reception by $\hlf{\texttt{bro}}$ in any accepted global trace. However, this emission is not observed in the multi-trace in Fig.\ref{fig:architecture_partial}. 
Therefore, dealing with partial observability does not boil down to a simple adaptation of the algorithm in
\cite{a_small_step_approach_to_multi_trace_checking_against_interactions}.
In this paper, the aforementioned two types of partial observation (unobserved subsystems and early interruption of observation) will be approached in the same manner, noting in particular that an empty local trace can be seen both as missing and incomplete. The key mathematical operator used for that purpose consists in the removal of a lifeline from both interactions and multi-traces.
This operator allows us to define an algorithm for recognizing multi-prefixes of accepted multi-traces while avoiding the complex search for a matching global execution, taking into account potential missing actions.


\section{Multi-traces, interactions, and removal operations\label{sec:defs}}

\subsection{Multi-traces\label{sec:multitraces}}

As outlined in Section~\ref{sec:context}, a DS is a collection of communicating subsystems, each having a lifeline as local interface. 
Hence a DS is characterized by a finite set of lifelines $L \subseteq \mathcal{L}$, called a {\em signature}.
For $L \subseteq \mathcal{L}$, $\mathbb{A}(L)$ denotes the set $\cup_{l \in L} \mathbb{A}_l$.

The executions of a DS are then associated to {\em multi-traces} i.e. collections of traces, one per lifeline (see Definition \ref{def:multitrace}).

\begin{definition}\label{def:multitrace}
Given $L \subseteq {\cal L}$,
the set $\mathbb{M}(L)$ of {\em multi-traces over $L$} is\footnote{Given a family $(A_i)_{i \in I}$ of sets indexed by a finite set $I$, $\prod_{i \in I} A_i$ is the set of tuples $(a_1, \ldots, a_i,\ldots)$ with $\forall i \in I, a_i \in A_i$.} $\prod_{l \in L} \mathbb{T}_l$.\\ 
For $\mu = (t_l)_{l \in L}$ in $\mathbb{M}(L)$, we denote by $\mu_{|l}$ the trace component $t_l \in \mathbb{T}_l$ and by $\overline{\mu} = \{ \mu' ~|~\mu' \in \mathbb{M}(L), \forall l \in L, \mu'_{|l} \in \overline{\mu_{|l}}\}$ the set of its {\em multi-prefixes}.  
\end{definition}

Multi-prefixes are extended to sets: $\overline{M}$ is the set of all multi-prefixes of all multi-traces in $M \subseteq \mathbb{M}(L)$.
We denote by $\varepsilon_L$ the empty multi-trace in $\mathbb{M}(L)$ defined by $\forall l \in L, {\varepsilon_L}_{|l} = \varepsilon$.
Additionally, for any $\mu \in \mathbb{M}(L)$, we use the notations $\mu[t]_l$ to designate the multi-trace $\mu$ in which the component on $l$ has been replaced by $t \in \mathbb{T}_l$ and $|\mu|$ to designate the cumulative length $|\mu| = \sum_{l \in L} |\mu_{|l}|$ of $\mu$.

As discussed in Section \ref{sec:context}, two communication actions occurring on different traces of a multi-trace cannot be temporally ordered. 
Likewise, when several subsystems are observed concurrently, there is no way to synchronize the endings of their observations. So, any multi-trace $\mu' \in \overline{\mu}$ can be understood as a partial observation of the execution characterized by $\mu$.
An edge case of this partial observation occurs when some of the subsystems are not observed at all, i.e. when some lifelines are missing. The $\macroElim_h$ function of Definition~\ref{def:lifeline_removal} simply removes the trace concerning the lifeline $h$ from a multi-trace.

\begin{definition}\label{def:lifeline_removal}
For $L \subseteq {\cal L}$, the function $\macroElim_h: \mathbb{M}(L) \rightarrow \mathbb{M}(L\setminus \{h\})$ is s.t.:
$
\forall \mu \in \mathbb{M}(L), \;
\macroElim_h(\mu) = (\mu_{|l})_{l \in L\setminus \{h\}}
$
\end{definition}

The function $\macroElim_h$ is canonically extended to sets. We introduce operations to add an action to the left (resp. right) of a multi-trace. For the sake of simplicity, we use the same symbol ~\multiAppendLeft~  for these left- and right-concatenation operations: 
\[\forall a \in \mathbb{A}(L), \forall \mu \in \mathbb{M}(L), \;  a~\multiAppendLeft~\mu = \mu[a.\mu_{|\theta(a)}]_{\theta(a)} \mbox{~~~and~~~} \mu~\multiAppendRight~a= \mu[\mu_{|\theta(a)}.a]_{\theta(a)}\] 
Note that for any $\mu$ and $a$, we have $|\mu ~\multiAppendRight~ a| = |a ~\multiAppendLeft~ \mu| = |\mu|+1$.
We extend ~\multiAppendLeft~ to sets of multi-traces as follows:
$a ~\multiAppendLeft~ T = \{ a ~\multiAppendLeft~ \mu ~|~ \mu \in T \}$ and 
$T ~\multiAppendRight~ a = \{ \mu ~\multiAppendRight~ a ~|~ \mu \in T \}$.

Property \ref{lem:elimination_append} then trivially relates the ~\multiAppendLeft~ concatenation operation with the removal operation $\macroElim$.

\begin{property}[Removing lifelines and appending actions\label{lem:elimination_append}]
For $\mu \in \mathbb{M}(L)$ and $a \in \mathbb{A}(L)$, if $\theta(a) = h$ then $\macroElim_h(a ~\multiAppendLeft~ \mu) = \macroElim_h(\mu)$ and $\macroElim_h(\mu ~\multiAppendRight~ a) = \macroElim_h(\mu)$,
else $\macroElim_h(a ~\multiAppendLeft~ \mu) = a ~\multiAppendLeft~ \macroElim_h(\mu)$ and $\macroElim_h(\mu ~\multiAppendRight~ a) = \macroElim_h(\mu) ~\multiAppendRight~ a$.
\end{property}

For two multi-traces $\mu_1$ and $\mu_2$ in $\mathbb{M}(L)$:
\begin{itemize}
    \item $\mu_1 \cup \mu_2$ denotes the alternative defined as follows: $\mu_1 \cup \mu_2 = \{ \mu_1,\mu_2 \}$;
    \item $\mu_1 \multiSeq \mu_2$ denotes their sequencing defined as follows: if $\mu_2 = \varepsilon_L$ then $\mu_1 \multiSeq \mu_2 = \mu_1$  else, $\mu_2$ can be  written  as $a ~\multiAppendLeft~ \mu_2'$ and $\mu_1 \multiSeq \mu_2 = (\mu_1 ~\multiAppendRight~ a) \multiSeq \mu_2'$;  
   \item $\mu_1 \multiInterleaving \mu_2$ denotes their interleaving and is defined as the set of multi-traces describing parallel compositions of $\mu_1$ and $\mu_2$:
\[
\begin{array}{c}
\varepsilon_L \multiInterleaving \mu_2  = \{ \mu_2 \}  ~~~~~~~~~~~~~~~~~
\mu_1 \multiInterleaving \varepsilon_L  = \{ \mu_1 \} \\
(a_1 ~\multiAppendLeft~ \mu_1)  ~\multiInterleaving~  (a_2 ~\multiAppendLeft~ \mu_2) ~ = ~ ( a_1  ~ \multiAppendLeft ~  (\mu_1  ~\multiInterleaving~ ( a_2  ~\multiAppendLeft~ \mu_2 ))) \cup (a_2 ~\multiAppendLeft~ ((a_1 ~\multiAppendLeft~ \mu_1)  ~ \multiInterleaving ~  \mu_2 )))
\end{array}
\]
\end{itemize}
Let us remark that $\mu'$ is a prefix of a multi-trace $\mu$ (i.e. 
$\mu' \in \overline{\mu}$) iff there exists $\mu''$ verifying $\mu' ; \mu'' = \mu$.
Operations $\cup$, $\multiSeq$ and $\multiInterleaving$ are extended to sets of multi-traces as $\multiDiamond : \mathcal{P}(\mathbb{M}(L))^2 \rightarrow \mathcal{P}(\mathbb{M}(L))$ for $\multiDiamond \in \{ \cup,~\multiSeq,~\multiInterleaving \}$.
Operators $\multiSeq$ and $\multiInterleaving$ being associative, this allows for the definition of repetition operators in the same manner as the Kleene star is defined over the classical concatenation. Given $\multiDiamond \in \{ \multiSeq,~\multiInterleaving \}$, the Kleene closure $^{\multiDiamond *}$ is s.t. for any set of multi-traces $T \subseteq \mathbb{M}(L)$ we have:
\[ T^{\multiDiamond *}
= \bigcup_{\substack{j \in \mathbb{N}}} T^{\multiDiamond j} \mbox{ with } 
T^{\multiDiamond 0}
= \{ \varepsilon_L \} \mbox{ and }
T^{\multiDiamond j}
=
T \multiDiamond T^{\multiDiamond (j-1)}
\mbox{ for } j > 0
\]

$\mathbb{M}(L)$ fitted with the set of algebraic operators $\mathcal{F} = \{\cup, \multiSeq, \multiInterleaving, ~^{\multiSeq *}, ~^{\multiInterleaving *} \}$ is an $\mathcal{F}$-algebra.
The operation $\macroElim_h$ preserves the algebraic structures between the $\mathcal{F}$-algebras
of signatures $L$ and $L \setminus \{h\}$.

\begin{property}[Elimination preserves operators\label{lem:elimination_preserves_sched}]
For any $\mu_1$ and $\mu_2$ in $\mathbb{M}(L)$, for any $\multiDiamond \in \{\cup,\multiSeq,\multiInterleaving\}$, $\macroElim_h(\mu_1 \multiDiamond \mu_2) = \macroElim_h(\mu_1) \multiDiamond \macroElim_h(\mu_2)$.
\end{property}

\begin{proof}

\noindent For $\multiDiamond = \cup$, $\macroElim_h(\mu_1 \cup \mu_2) = \macroElim_h(\{\mu_1,\mu_2\})$ by definition of the $\cup$ operator between multi-traces
and $\macroElim_h(\mu_1) \cup \macroElim_h(\mu_2) = \{ \macroElim_h(\mu_1) , \macroElim_h(\mu_2) \}$ by definition of the $\cup$ operator between multi-traces.

\noindent For $\multiDiamond = \multiSeq$, let us reason by induction on $\mu_2$:
\begin{itemize}
    \item $\macroElim_h(\mu_1 \multiSeq \varepsilon_L) = \macroElim_h(\mu_1) = \macroElim_h(\mu_1) \multiSeq  \varepsilon_{L'} = \macroElim_h(\mu_1) \multiSeq \macroElim_h(\varepsilon_L)$
    \item if $\mu_2 = a ~\multiAppendLeft~ \mu_2'$ then 
    \[
    \begin{array}{lclr}
    \macroElim_h(\mu_1 \multiSeq \mu_2) 
    &
    =
    &
    \macroElim_h((\mu_1 ~\multiAppendRight~a) \multiSeq \mu_2')
    &
    ~~~~\text{\scriptsize by definition of } \multiSeq
    \\
    &
    =
    &
    \macroElim_h((\mu_1 ~\multiAppendRight~a)) \multiSeq \macroElim_h(\mu_2')
    &
    ~~~~\text{\scriptsize by induction}
    \end{array}
    \]
    Then:
    \begin{itemize}
        \item if $\theta(a) = h$ we have:
        \[
        \begin{array}{lclr}
        \macroElim_h(\mu_1 \multiSeq \mu_2) 
        &
        =
        &
        \macroElim_h(\mu_1) \multiSeq \macroElim_h(\mu_2')
        &
        ~~~~\text{\scriptsize by Prop.\ref{lem:elimination_append}}
        \\
        &
        =
        &
        \macroElim_h(\mu_1) \multiSeq \macroElim_h(a ~\multiAppendLeft~\mu_2')
        &
        ~~~~\text{\scriptsize by Prop.\ref{lem:elimination_append}}
        \\
        &
        =
        &
        \macroElim_h(\mu_1) \multiSeq \macroElim_h(\mu_2)
        &
        \end{array}
        \]
        \item if $\theta(a) \neq h$ we have:
        \[
        \begin{array}{lclr}
        \macroElim_h(\mu_1 \multiSeq \mu_2) 
        &
        =
        &
        (\macroElim_h(\mu_1) ~\multiAppendRight~ a) \multiSeq \macroElim_h(\mu_2')
        &
        ~~~~\text{\scriptsize by Prop.\ref{lem:elimination_append}}
        \\
        &
        =
        &
        \macroElim_h(\mu_1) \multiSeq (a ~\multiAppendLeft~\macroElim_h(\mu_2'))
        &
        ~~~~\text{\scriptsize by definition of }\multiSeq
        \\
        &
        =
        &
        \macroElim_h(\mu_1) \multiSeq \macroElim_h(\mu_2)
        &
        ~~~~\text{\scriptsize by Prop.\ref{lem:elimination_append}}
        \end{array}
        \]
    \end{itemize}
\end{itemize}
\noindent For $\multiDiamond = \multiInterleaving$ we can reason similarly, using induction on both $\mu_1$ and $\mu_2$.
\qed 
\end{proof}

The results from Property \ref{lem:elimination_preserves_sched} can be extended to sets of multi-traces and imply that repetitions of those scheduling algebraic operators with their Kleene closures are also preserved by the elimination operator $\macroElim_h$.

\subsection{Interactions\label{sec:interactions}}

\begin{wrapfigure}{r}{0.55\textwidth}
\vspace*{-1.75cm}
    \centering
\input{figures/interaction_semantics}
    \caption{Semantics of example from Fig.\ref{fig:example_interaction}}
    \label{fig:sem_detail}
\vspace*{-.6cm}
\end{wrapfigure}

Interaction models, such as the one in Fig.\ref{fig:example_interaction} can be formalized as terms of an inductive language. \cite{revisiting_semantics_of_interactions_for_trace_validity_analysis,a_small_step_approach_to_multi_trace_checking_against_interactions} consider an expressive language with two sequencing operators, weak and strict, for ordering actions globally.
In the current paper, as only collections of remote local traces are considered, weak and strict sequencing can no longer be distinguished. This explains why we only consider a unique sequencing operator $seq$ in Definition \ref{def:interaction_language}.

\begin{definition}\label{def:interaction_language}
Given signature $L$, the set $\mathbb{I}(L)$ of interactions over $L$ is the set of ground terms built over the following symbols provided with arities in $\mathbb{N}$:
\begin{itemize}
    \item the empty interaction $\varnothing $ and any action $a$ in $\mathbb{A}(L)$  of arity 0;
    \item the two loop operators $loop_S$ and $loop_P$ of arity 1; 
    \item and the three operators $seq$, $par$ and $alt$ of arity 2.
\end{itemize}
\end{definition}

The interaction term of Fig.\ref{fig:example_interaction} is: \\
{\small
$seq(\, loop_S( seq(\hlf{\texttt{pub}}!\hms{\texttt{publish}},\hlf{\texttt{bro}}?\hms{\texttt{publish}})),$
$seq( seq(\hlf{\texttt{sub}}!\hms{\texttt{subscribe}},\hlf{\texttt{bro}}?\hms{\texttt{subscribe}}),$ \\
\noindent $loop_S( seq( seq(\hlf{\texttt{pub}}!\hms{\texttt{publish}},\hlf{\texttt{bro}}?\hms{\texttt{publish}}),$
$seq(\hlf{\texttt{bro}}!\hms{\texttt{publish}},\hlf{\texttt{sub}}?\hms{\texttt{publish}})
))))$.
}

The semantics of an interaction can be defined as a set of multi-traces in a denotational style by associating each syntactic operator with an algebraic counterpart. This is sketched out in Fig.\ref{fig:sem_detail} in which the semantics of the interaction in Fig.\ref{fig:example_interaction} is given. The denotational formulation, which is compositional, is defined in Definition \ref{def:algebraic_multi_trace_semantics} and illustrated in Fig.\ref{fig:example_interaction}.

\begin{definition}[$\mathbb{M}$-semantics\label{def:algebraic_multi_trace_semantics}] 
Given $L \subseteq {\cal L}$, the multi-trace semantics $\sigma_{|L} : \mathbb{I}(L) \rightarrow \mathcal{P}(\mathbb{M}(L))$ 
is defined inductively using the following interpretations:
\begin{itemize}
    \item $\{ \varepsilon_L \}$ for $\varnothing$ and $\{a ~\multiAppendLeft~ \varepsilon_L\}$ for $a$ in $\mathbb{A}_L$;
    \item $\multiSeq^*$ (resp. $\multiInterleaving^*$) for loop operator $loop_S$ (resp. $loop_P$);
    \item $\multiSeq$ (resp. $\multiInterleaving$ and $\cup$) for binary operator $seq$ (resp. $par$ and $alt$). 
\end{itemize}
\end{definition}

Interactions can also be associated with an operational semantics in the style of Plotkin \cite{equivalence_of_denotational_and_operational_semantics_for_interaction_languages}. Its definition relies on two predicates denoted by $\downarrow$ and $\rightarrow$: for an interaction $i$, $i \downarrow$ states that $\varepsilon_L \in \sigma_{|L}(i)$
and $i \xrightarrow{a} i'$ states that all multi-traces of the form $a~\multiAppendLeft~\mu'$ with $\mu' \in \sigma_{|L}(i')$ are multi-traces of $\sigma_{|L}(i)$. This operational semantics is equivalent to the denotational formulation.

\begin{property}[Operational semantics\label{prop:operational_formulation_multitrace}]
There exist a predicate $\downarrow \subseteq \mathbb{I}(L)$ and a relation $\rightarrow \subseteq \mathbb{I}(L) \times \mathbb{A}(L) \times \mathbb{I}(L)$ such that, for any $i \in \mathbb{I}(L)$ and $\mu \in \mathbb{M}(L)$, the statement
$\mu \in \sigma_{|L}(i)$ holds iff it can be proven using the following two rules:
{
\centering
\begin{minipage}{3cm}
\begin{prooftree}
\AxiomC{$i \downarrow$}
\UnaryInfC{$\varepsilon_L \in \sigma_{|L}(i)$}
\end{prooftree}
\end{minipage}
\begin{minipage}{4cm}
\begin{prooftree}
\AxiomC{$\mu \in \sigma_{|L}(i')$}
\AxiomC{$i \xrightarrow{a} i'$}
\BinaryInfC{$a ~\multiAppendLeft~ \mu \in \sigma_{|L}(i)$}
\end{prooftree}
\end{minipage}\\
}
\end{property}

\begin{proof}
Proof available in \cite{equivalence_of_denotational_and_operational_semantics_for_interaction_languages} for a trace semantics and given in Appendix~\ref{anx:operational} for this multi-trace semantics.
\end{proof}

\begin{wrapfigure}{r}{0.5\textwidth}
\vspace*{-1.25cm}
    \centering
\begin{tikzpicture}
\node (int_full) at (0,0) {
\scalebox{.85}{
\begin{tikzpicture}
\node (int) at (0,0) {\includegraphics[height=3.75cm]{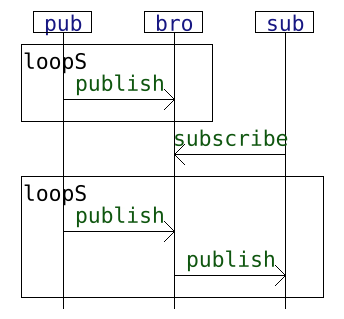}};
\begin{pgfonlayer}{main}
\fill[red!100,opacity=.6] (1,1.4) -- (1.65,1.4) -- (1.65,-1.7) -- (1,-1.7) -- cycle;
\end{pgfonlayer};
\end{tikzpicture}
}
};
\node[right=-.25cm of int_full] (int_hidden) {
\scalebox{.85}{
\includegraphics[height=3.75cm]{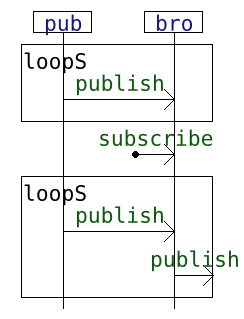}
}
};
\end{tikzpicture}
\vspace*{-.5cm}
    \caption{Removing lifeline $\hlf{\texttt{sub}}$\label{fig:hiding_example}}
\vspace*{-.5cm}
\end{wrapfigure}

The algebraic characterisation of Definition \ref{def:algebraic_multi_trace_semantics} underpins results involving the use of the $\macroElim_h$ function while the operational characterization of Property \ref{prop:operational_formulation_multitrace} is required in the definition and proof of the RV algorithm.
In this paper, we do not need the inductive definitions of $\downarrow$ and $\rightarrow$. It suffices to consider their existence (Property \ref{prop:operational_formulation_multitrace}). In addition, we will use the notation $i\xrightarrow{a}$ (resp. $i\not\xrightarrow{a}$) when there exists (resp. does not exist) an interaction $i'$ s.t. $i \xrightarrow{a} i'$. 

The removal of lifelines for multi-traces (cf. Definition \ref{def:lifeline_removal}) has a counterpart for interactions.
On the left of Fig.\ref{fig:hiding_example} we draw our previous example while highlighting lifeline $\hlf{\texttt{sub}}$ which we remove to obtain the interaction on the right. Whenever we remove a lifeline $l$, the resulting interaction does not contain any action occurring on $l$. Removal, as defined in\footnote{We overload the notation $\macroHide_h$ which applies to both multi-traces and interactions.} Definition~\ref{def:hiding} in functional style, preserves the term structure of interactions, replacing actions on the removed lifeline with the empty interaction.

\begin{definition}\label{def:hiding}
For a signature $L \subseteq {\cal L}$ and a lifeline $h \in L$ we define
$\macroHide_h: \mathbb{I}(L) \rightarrow \mathbb{I}(L \setminus \{h\})$ s.t. for any interaction $i \in \mathbb{I}(L)$:\\
$\macroHide_h(i) = \textbf{ match } i \textbf{ with}$\\
$\begin{array}{lll}
|~\varnothing & \rightarrow 
& \varnothing
\\
|~a \in \mathbb{A}(L) & 
\rightarrow 
& \textbf{if } \theta(a) = h \textbf{ then } \varnothing \textbf{ else } a
\\
|~f(i_1,i_2) & \rightarrow 
& f(\macroHide_h(i_1),\macroHide_h(i_2)) \textbf{ for } f \in \{ seq,alt,par \}
\\
|~loop_k(i_1) & \rightarrow 
& loop_k(\macroHide_h(i_1)) \textbf{ for } k \in \{S,P\}
\end{array}$
\end{definition}

\begin{theorem}[A property of multi-trace-semantics w.r.t. lifeline removal\label{th:semantics_of_hidden_interactions}]
For any signature $L$, any $i \in \mathbb{I}(L)$ and any $h \in L$: 
\[
\sigma_{|L \backslash \{ h \} }(\macroHide_h(i)) =
\macroElim_h(\sigma_{|L}(i))
\]
\end{theorem}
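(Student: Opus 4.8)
The plan is to proceed by structural induction on the interaction term $i$, exploiting the fact that both sides of the equation are built by recursion on this same structure: the semantics $\sigma_{|L}$ is defined compositionally in Definition \ref{def:algebraic_multi_trace_semantics}, while $\macroHide_h$ traverses the term preserving its shape (Definition \ref{def:hiding}). Writing $L' = L \setminus \{h\}$, I would read the goal as the statement that, for every subterm, eliminating $h$ from its $L$-semantics yields the $L'$-semantics of its $h$-removed version, and then check that this property is preserved by every term constructor.

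For the base cases I would treat $\varnothing$ and the two kinds of atomic action separately. For $\varnothing$ both sides reduce to $\{\varepsilon_{L'}\}$. For an action $a$, the two subcases $\theta(a) = h$ and $\theta(a) \neq h$ match exactly the two branches of $\macroHide_h$ on actions. When $\theta(a)=h$, Property \ref{lem:elimination_append} collapses $\macroElim_h(a \multiAppendLeft \varepsilon_L)$ to $\varepsilon_{L'}$, agreeing with $\sigma_{|L'}(\varnothing)$; when $\theta(a)\neq h$, the same property pushes the removal past the action, so $\macroElim_h(a \multiAppendLeft \varepsilon_L) = a \multiAppendLeft \varepsilon_{L'}$, agreeing with $\sigma_{|L'}(a)$.

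For the binary inductive step $i = f(i_1,i_2)$ with $f \in \{seq, par, alt\}$ interpreted respectively by $\multiDiamond \in \{\multiSeq, \multiInterleaving, \cup\}$, I would chain four rewrites: first the definition of the semantics to express $\sigma_{|L}(i)$ as $\sigma_{|L}(i_1) \multiDiamond \sigma_{|L}(i_2)$; then Property \ref{lem:elimination_preserves_sched}, lifted to sets, to move $\macroElim_h$ inside the operator; then the two induction hypotheses on $i_1$ and $i_2$; and finally the defining clauses of $\macroHide_h$ and $\sigma_{|L'}$ to recombine the result as $\sigma_{|L'}(\macroHide_h(i))$. The loop case $i = loop_k(i_1)$ for $k \in \{S,P\}$ follows the same pattern, with the Kleene closure $^{\multiDiamond *}$ in place of the binary operator.

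I expect the loop case to be the only real obstacle, since it is the single point where the argument leaves the finitary setting of Property \ref{lem:elimination_preserves_sched}. Concretely, with $T = \sigma_{|L}(i_1)$ I must establish $\macroElim_h(T^{\multiDiamond *}) = \macroElim_h(T)^{\multiDiamond *}$. This rests on two facts beyond the binary preservation lemma: that $\macroElim_h$ distributes over the countable union $\bigcup_{j} T^{\multiDiamond j}$ defining the closure, which is immediate because $\macroElim_h$ on sets acts pointwise and hence commutes with arbitrary unions; and that it commutes with each finite power $T^{\multiDiamond j}$, which I would obtain by a secondary induction on $j$ using the base value $\macroElim_h(\{\varepsilon_L\}) = \{\varepsilon_{L'}\}$ together with Property \ref{lem:elimination_preserves_sched} at each step. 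These are precisely the extensions announced immediately after Property \ref{lem:elimination_preserves_sched}, so once they are in hand the induction closes uniformly across all operators and the theorem follows.
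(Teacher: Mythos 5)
Your proposal is correct and takes essentially the same route as the paper's own proof: a structural induction in which the base cases split on $\theta(a)=h$ versus $\theta(a)\neq h$, the binary constructors are handled by Property~\ref{lem:elimination_preserves_sched} lifted to sets, and the loops by the Kleene-closure extension of that property. The only difference is one of detail: where the paper simply asserts that Property~\ref{lem:elimination_preserves_sched} extends to sets and Kleene closures, you explicitly justify it via commutation of $\macroElim_h$ with unions and a secondary induction on the finite powers $T^{\multiDiamond j}$, which fills in exactly the step the paper leaves implicit.
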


\begin{proof}
Let us reason by induction on the structure of interaction terms:
\begin{itemize}
    \item $\sigma_{|L \setminus \{h\}}(\macroHide_h(\varnothing)) = \sigma_{|L \setminus \{h\}}(\varnothing) = \{\varepsilon_{L \setminus \{h\}}\}
    = \macroElim_h(\{\varepsilon_{L}\}) = \macroElim_h(\sigma_{|L}(\varnothing))$
    \item for any $a \in \mathbb{A}(L)$ we have:
    \begin{itemize}
        \item if $\theta(a) = h$:
        \[
        \begin{array}{lcl}
        \sigma_{|L \setminus \{h\}}(\macroHide_h(a))
        &
        =
        &
        \sigma_{|L \setminus \{h\}}(\varnothing)
        \\
        &
        =
        &
        \{ \varepsilon_{L \setminus \{h\}} \}
        \\
        &
        =
        &
        \macroElim_h( \{ \varepsilon_{L} \} )
        \\
        &
        =
        &
        \macroElim_h( \{ a ~\multiAppendLeft~ \varepsilon_{L} \} )
        \\
        &
        =
        &
        \macroElim_h(\sigma_{|L}(a))
        \end{array}
        \]
        \item if $\theta(a) \neq h$:
        \[
        \begin{array}{lcl}
        \sigma_{|L \setminus \{h\}}(\macroHide_h(a))
        &
        =
        &
        \sigma_{|L \setminus \{h\}}(a)
        \\
        &
        =
        &
        \{ a ~\multiAppendLeft~ \varepsilon_{L \setminus \{h\}} \}
        \\
        &
        =
        &
        \macroElim_h( \{ a ~\multiAppendLeft~ \varepsilon_{L} \} )
        \\
        &
        =
        &
        \macroElim_h(\sigma_{|L}(a))
        \end{array}
        \]
    \end{itemize}
    \item with $(f,\multiDiamond) \in \{(seq,\multiSeq),~(par,\multiInterleaving),~(alt,\cup)\}$, for any $i_1,i_2$ in $\mathbb{I}(L)$:
    \[
    \begin{array}{lclr}
    \sigma_{|L \setminus \{h\}}(\macroHide_h(f(i_1,i_2)))
    &
    =
    &
    \sigma_{|L \setminus \{h\}}(f(\macroHide_h(i_1),\macroHide_h(i_2)))
    &
    \\
    &
    =
    &
    \sigma_{|L \setminus \{h\}}(\macroHide_h(i_1)) \multiDiamond \sigma_{|L \setminus \{h\}}(\macroHide_h(i_2))
    &
    \\
    &
    =
    &
    \macroElim_h(\sigma_{|L}(i_1)) \multiDiamond \macroElim_h(\sigma_{|L}(i_2))
    &
    \text{\scriptsize induction}
    \\
    &
    =
    &
    \macroElim_h( \sigma_{|L}(i_1) \multiDiamond \sigma_{|L}(i_2) )
    &
    \\
    &
    =
    &
    \macroElim_h( \sigma_{|L}(f(i_1,i_2)) )
    &
    \end{array}
    \]
    \item for any interaction $i$ and any $(k,\multiDiamond) \in \{(S,\multiSeq),~(P,\multiInterleaving)\}$:
    \[
    \begin{array}{lclr}
    \sigma_{|L \setminus \{h\}}(\macroHide_h(loop_k(i)))
    &
    =
    &
    \sigma_{|L \setminus \{h\}}(loop_k(\macroHide_h(i)))
    &
    \\
    &
    =
    &
    \sigma_{|L \setminus \{h\}}(\macroHide_h(i))^{\multiDiamond *}
    &
    \\
    &
    =
    &
    \macroElim_h(\sigma_{|L}(i))^{\multiDiamond *}
    &
    ~\text{\scriptsize induction}
    \\
    &
    =
    &
    \macroElim_h( \sigma_{|L}(i)^{\multiDiamond *} )
    &
    \\
    &
    =
    &
    \macroElim_h( \sigma_{|L}(loop_k(i)) )
    &
    \end{array}
    \]
\end{itemize}
\qed 
\end{proof}

As, by construction, the order of removal of the lifelines does not matter, we generalize the notation $\macroHide_h$ with $\macroHide_{L'}$ to remove all lifelines of $L' \subseteq L$.

\section{Offline RV for multi-traces\label{sec:anahide_space}}

\begin{figure}[h]
\vspace{-.5cm}
    \centering
    \resizebox{.8\textwidth}{!}{\input{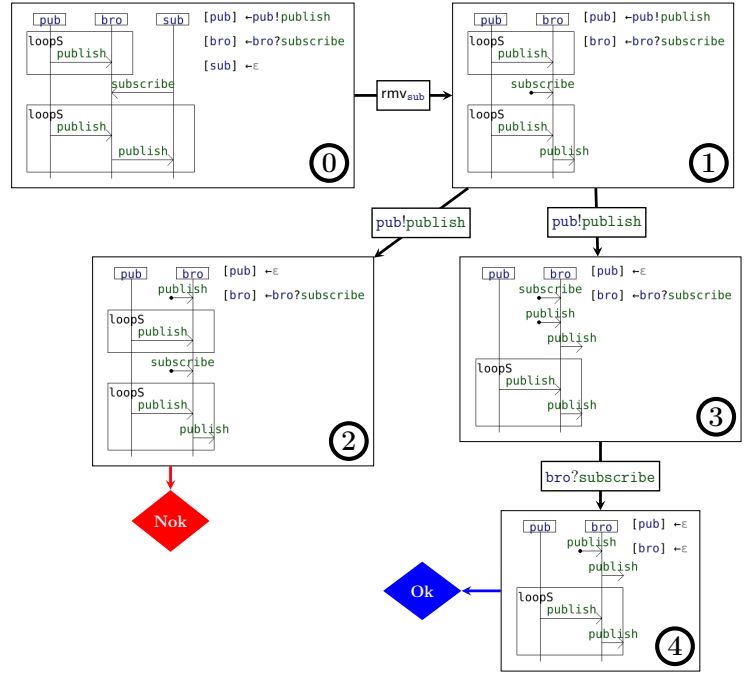}}
    \caption{An exploration s.t. $\omega_L(i,\mu) = Pass$}
    \label{fig:ana_example}
\vspace{-.5cm}
\end{figure}

Our goal is to define a process to analyze a multi-trace $\mu$, against a reference interaction $i$, both defined on a common signature $L$.
To check whether or not a multi-trace $\mu$ is accepted by $i$, i.e. $\mu \in \sigma_{|L}(i)$, the key principle given in~\cite{a_small_step_approach_to_multi_trace_checking_against_interactions} was to find a globally ordered behavior specified by $i$ (via the $\rightarrow$ execution relation) that matches $\mu$ i.e. an accepted global trace that can be projected into $\mu$. 
To do so, it relies on a general rule $(i,a ~\multiAppendLeft~ \mu') \leadsto (i',\mu')$ s.t. $i \xrightarrow{a} i'$ i.e. it explores all the actions $a$ directly executable from $i$ and that match the head of a local trace. The analysis is then pursued recursively from $(i',\mu')$ i.e. the multi-trace where $a$ has been removed and the follow-up interaction $i'$, until the multi-trace is emptied of actions.
For illustrative purposes, let us consider Fig.\ref{fig:ana_example} where each square annotated with a circled number (e.g. \textcircled{3}) contains an interaction drawn on the left and a multitrace on the right with one line for each of the 3 lifelines.  Starting from the interaction in \textcircled{3}, say $i_3$, with $ (\varepsilon,~\hlf{\texttt{bro}}?\hms{\texttt{subscribe}})$, one can see that we can reach \textcircled{4} by both consuming $\hlf{\texttt{bro}}?\hms{\texttt{subscribe}}$ from the multi-trace and executing it in $i_3$, leading to the interaction in \textcircled{4}, say $i_4$: based on $i_3 \xrightarrow{\hlf{\texttt{bro}}?\hms{\texttt{subscribe}}} i_4$, we have $(i_3, \hlf{\texttt{bro}}?\hms{\texttt{subscribe}} ~\multiAppendLeft~ (\varepsilon,\varepsilon)) \leadsto (i_4,(\varepsilon,\varepsilon))$. Thus, Fig.\ref{fig:ana_example} sketches the construction of a graph whose nodes are pairs of interactions and multitraces and whose arcs are built using the $\leadsto$ relation.

While in~\cite{a_small_step_approach_to_multi_trace_checking_against_interactions}, we were interested in solving the membership problem "$\mu \in \sigma_{|L}(i)$", we are now interested in defining an offline RV algorithm. In line with the discussion of Section~\ref{sec:context} about partial observability, $\mu$ reveals an error if $\mu$ is neither in $\sigma_{|L}(i)$ nor can be extended into an element of $\sigma_{|L}(i)$ i.e. $\mu$ diverges from $i$ iff $\mu \not\in \overline{\sigma_{|L}(i)}$.
In order to accommodate the need to identify prefixes of multi-traces, we 
introduce a rule involving the removal operation. Indeed, as the execution relation $\rightarrow$ only allows executing actions in the global order in which they are intended to occur, we may reach cases in which the next action which may be consumed in the multi-trace cannot be executed due to having a preceding action missing in the multi-trace. Let us illustrate this with node \textcircled{0} of Fig.\ref{fig:ana_example}. 
$\hlf{\texttt{bro}}?\hms{\texttt{subscribe}}$ is the first action that occurs on lifeline $\hlf{\texttt{bro}}$ in the multi-trace. However, it cannot be executed because it must be preceded by $\hlf{\texttt{sub}}!\hms{\texttt{subscribe}}$. 
Yet, either because the behavior on lifeline $\hlf{\texttt{sub}}$ is not observed, or because the logging process ceased too early on $\hlf{\texttt{sub}}$, it might well be that $\hlf{\texttt{sub}}!\hms{\texttt{subscribe}}$ occurred in the actual execution although it was not logged.
With our new algorithm, because the condition that $\mu_{|\hlf{\texttt{sub}}} = \varepsilon$ is satisfied, from node \textcircled{0}, we apply a rule yielding the transformation $(i,\mu) \leadsto (\macroHide_{\hlf{\texttt{sub}}}(i),\macroHide_{\hlf{\texttt{sub}}}(\mu))$, removing lifeline $\hlf{\texttt{sub}}$, which allows us to pursue the analysis from node \textcircled{1}. To summarize, Fig.\ref{fig:ana_example} illustrates (part of) the graph that can be constructed from a pair $(i_0,\mu_0)$ using the relation $\leadsto$. We have $5$ nodes numbered from $0$ (the initial node of the analysis) to $4$. Arcs correspond either to the consumption of an action, or to the application of the $\macroHide$ operator, or to the emission of a verdict. The empty multi-trace in node \textcircled{4} allows us to conclude $\mu_0 \in \overline{\sigma_{|L}(i_0)}$. 



\subsection{Search graph\label{ssec:graph}}

As the $\macroHide$ operator has the effect of changing the signature, we introduce the set $\mathbb{I}_{\mathcal{L}}$ (resp. $\mathbb{M}_{\mathcal{L}}$) to denote the set of all interactions (resp. multi-traces) defined on a signature of ${\mathcal{L}}$. 
Let us define a directed search graph with vertices either of the form $(i,\mu) \in \mathbb{I}_{\mathcal{L}} \times \mathbb{M}_{\mathcal{L}}$ or one of two specific verdicts $\macroOKVerdict$ and $\macroKOVerdict$.
We denote by $\mathbb{V}$ the set of all vertices:
\[\mathbb{V} = \{\macroOKVerdict,\macroKOVerdict\} \cup ( \; \bigcup_{ L \subseteq {\mathcal{L}}}  \mathbb{I}(L) \times \mathbb{M}(L) \; )\]
The arcs of $\mathbb{G}$ are defined by 4 rules: $\macroRulePass$, $\macroRuleFail$ leading to respectively the sink vertices $\macroOKVerdict$ and $\macroKOVerdict$, $\macroRuleExec$ (for "execute") for consuming an action of the multi-trace according to the $\rightarrow$ predicate of the operational formulation (cf. Property~\ref{prop:operational_formulation_multitrace}), and $\macroRuleHide$ (for "removal"), for removing  a lifeline from the interaction and multi-trace.

\begin{definition}[Search graph\label{def:multipref_anahide_rules}]
$\mathbb{G} = (\mathbb{V},\leadsto)$ is the graph s.t. for all $v,v'$ in $\mathbb{V}$, $v \leadsto v'$ iff there exists a rule $R_x$  with $x \in \{o,n,e,r\}$ s.t. $(R_x) \frac{v}{v'}$ where rules $R_x$ are defined as follows, with $L\subseteq \mathcal{L}$, $h\in L$, $i,i'\in \mathbb{I}(L)$, and $\mu,\mu'\in \mathbb{M}(L)$:

\noindent\begin{minipage}[l]{.4\textwidth}
\begin{prooftree}
\AxiomC{$i$\hspace{-.35cm}}
\AxiomC{$\varepsilon_{L}$}
\LeftLabel{($\macroRulePass$)}
\RightLabel{}
\BinaryInfC{$\macroOKVerdict$}
\end{prooftree}
\end{minipage}
\begin{minipage}[r]{.575\textwidth}
\begin{prooftree}
\AxiomC{$i$\hspace{-.35cm}}
\AxiomC{$\mu$}
\LeftLabel{($\macroRuleHide$)}
\RightLabel{$
\left.
\begin{array}{l}
\mu_{|h} = \varepsilon
\end{array}
\right.
$}
\BinaryInfC{$\macroHide_{h}(i) \hspace{0.5cm} \macroElim_{h}(\mu)$}
\end{prooftree}
\end{minipage}

\vspace*{.3cm}

\noindent\begin{minipage}[l]{.45\textwidth}
\begin{prooftree}
\AxiomC{$i$\hspace{-.35cm}}
\AxiomC{$\mu$}
\LeftLabel{($\macroRuleExec$)}
\RightLabel{$
\left\{\begin{array}{l}
\exists a \in \mathbb{A}(L),
\\
\mu = a  ~\multiAppendLeft~ \mu' \land i \xrightarrow{a} i'
\end{array}\right.$}
\BinaryInfC{$i' \hspace{0.5cm} \mu'$}
\end{prooftree}
\end{minipage}
\begin{minipage}[r]{.575\textwidth}
\begin{prooftree}
\AxiomC{$i$\hspace{-.35cm}}
\AxiomC{$\mu$}
\LeftLabel{($\macroRuleFail$)}
\RightLabel{$
\left\{
\begin{array}{l}
(\forall l \in L,\mu_{|l} \neq \varepsilon) \; \wedge \\
\left(\begin{array}{l}
\forall a\in \mathbb{A}(L),\forall \mu'\in \mathbb{M}(L),\\
\mu = a ~\multiAppendLeft~ \mu'\Rightarrow i\not\xrightarrow{a}
\end{array}\right)
\end{array}
\right.
$}
\BinaryInfC{$\macroKOVerdict$}
\end{prooftree}
\end{minipage}

\vspace*{.2cm}

\end{definition}

Rules $\macroRuleExec$ and $\macroRuleHide$ specify edges of the form $(i,\mu) \leadsto (i',\mu')$ with $i'$ and $\mu'$ defined on the same signature: the application of $\macroRuleExec$ corresponds to the simultaneous consumption of an action at the head of a component of $\mu$ and the execution of a matching action in $i$ while  the application of $\macroRuleHide$ corresponds to the removal of a lifeline $h$ s.t. $\mu_{|h} = \varepsilon$. Moreover vertices of the form $(i,\mu)$ are not sinks of $\mathbb{G}$. Indeed, if $\mu = \varepsilon_{L}$ then $\macroRulePass$ can apply, otherwise $\mu \neq \varepsilon_{L}$ and:
\textbf{(1)} if at least a component $\mu_{|h}$ of $\mu$ is empty, then rule $\macroRuleHide$ can apply. \textbf{(2)} if there is a match between an action that can be executed from $i$ and the head of a component of the multi-trace then rule $\macroRuleExec$ can apply. \textbf{(3)} if both conditions 1 and 2 do not hold then rule $\macroRuleFail$ applies.


Proving $\mu \in \overline{\sigma_{|L}(i)}$, amounts to exhibiting a path in $\mathbb{G}$ starting from $(i,\mu)$ and leading to the verdict $\macroOKVerdict$. Fig.\ref{fig:ana_example} depicts such a path for the multi-trace $\mu_0 =(\hlf{\texttt{pub}}!\hms{\texttt{publish}},\hlf{\texttt{bro}}?\hms{\texttt{subscribe}},\varepsilon)$
w.r.t. the interaction $i_0$ of node \textcircled{0}. A first step (application of $\macroRuleHide$) removes lifeline $\hlf{\texttt{sub}}$ leading to node \textcircled{1}. This is possible because $\mu_{|\hlf{\texttt{sub}}} = \varepsilon$. From there, by applying rule $\macroRuleExec$, the execution of 
$\hlf{\texttt{pub}}?\hms{\texttt{publish}}$ 
allows to reach either node \textcircled{2} or node \textcircled{3} depending on the loop used. From node \textcircled{3}, the previous removal of lifeline $\hlf{\texttt{sub}}$ has unlocked the execution of $\hlf{\texttt{bro}}?\hms{\texttt{subscribe}}$ (application of $\macroRuleExec$).
What remains is $\varepsilon_L$ and hence we can apply rule $\macroRulePass$. From the existence of this path leading to $\macroOKVerdict$ we conclude that $\mu_0$ is a prefix of a multi-trace of the interaction depicted in Fig.\ref{fig:example_interaction}.

\begin{property}[Finite search space\label{prop:finitesearchspace}]
Let $L \subseteq \mathcal{L}$, $\mu \in \mathbb{M}(L)$ and $i \in \mathbb{I}(L)$. The sub-graph of $\mathbb{G}$ of all vertices reachable from $(i,\mu)$ is finite. 
\end{property}

\begin{proof}
It follows from the following two observations
\textbf{(1)} any path in that sub-graph is finite
and \textbf{(2)} there is a finite number of paths.

The first point \textbf{(1)} can be proven using the following measure on vertices $s\in\mathbb{V}$ of $\mathbb{G}$: 
\[|s|=\left\{
\begin{array}{ll}
     0 & \text{if }s \in \{\macroOKVerdict,\macroKOVerdict\} \mbox{ (by convention)} \\
     |\mu|+|L| + 1 & \text{if }s = (i,\mu)\text{ with } L\subseteq \mathcal{L},\ i \in \mathbb{I}(L),\ \mu \in \mathbb{M}({L})
\end{array}
\right.
\]
For any transition $(i,\mu) \leadsto (i',\mu')$ in $\mathbb{G}$, we have $|(i',\mu')| = |(i,\mu)| - 1$ whether the rule that is applied is $\macroRuleExec$ or $\macroRuleHide$. Any other transition leads to either $\macroOKVerdict$ or $\macroKOVerdict$, which are sinks of $\mathbb{G}$. Hence, because $|(i,\mu)|$ is finite, positive, and decreases, any outgoing path from a node $(i,\mu)$ is finite before ultimately reaching either of $\macroOKVerdict$ or $\macroKOVerdict$.

The second point \textbf{(2)} comes from the fact that for any vertex $(i,\mu)$, there exists a finite number of outgoing transitions. Indeed, there can only be a finite number of possible applications of $\shortColOrange{R_e}$ because there cannot be more matches than the number of actions in $i$ and there cannot be more than $|L|$ different applications of $\shortColViolet{R_h}$ because there cannot be more than $|L|$ empty trace components on $\mu$.
\qed 
\end{proof}


Given our relation $\leadsto$ between vertices $\mathbb{V}$ of graph $\mathbb{G}$, for any two vertices $v,v' \in \mathbb{V}$, we denote\footnote{For any relation $\rightarrow \subset E^2$ on a set $E$, $\overset{*}{\rightarrow}$ is the reflexive and transitive closure of $\rightarrow$.} by $v \overset{*}{\leadsto} v'$ the existence of a path in $\mathbb{G}$ from $v$ to $v'$.

An interesting property of graph $\mathbb{G}$, related to the use of the $\macroRuleHide$ rule, is given in Property \ref{lem:confluence_anahide}.
It states that if, from a given vertex $(i,\mu)$, we can reach $\macroOKVerdict$ by any given means, then, if we can also apply rule $\macroRuleHide$ so that $(i, \mu) \leadsto (\macroHide_{h}(i),\macroElim_{h}(\mu))$ for any lifeline $h$, then we can also reach $\macroOKVerdict$ from $(\macroHide_{h}(i),\macroElim_{h}(\mu))$. 

This can be described as a property of confluence given that it states that we may take another path, in which we might as well hide lifeline $h$, so as to reach $\macroOKVerdict$.

\begin{property}[A property of the analysis graph\label{lem:confluence_anahide}]
For any $i \in \mathbb{I}(L)$, any $a \in \mathbb{A}(L)$, any $\mu \in \mathbb{M}(L)$ and any $h \in L$ we have:

\begin{prooftree}
\AxiomC{$(i, \mu) \overset{*}{\leadsto} \macroOKVerdict$}
\AxiomC{$(i, \mu) \leadsto (\macroHide_{h}(i),\macroElim_{h}(\mu))$}
\LeftLabel{}
\RightLabel{}
\BinaryInfC{$(\macroHide_{h}(i),\macroElim_{h}(\mu)) \overset{*}{\leadsto} \macroOKVerdict$}
\end{prooftree}
\end{property}

\begin{proof}
Let us reason by induction on the measure $|(i,\mu)|$:
\begin{itemize}
    \item If $|(i,\mu)|=1$ then $\mu = \varepsilon_{\emptyset}$ and the premise do not hold because we cannot apply $\macroRuleHide$
    \item If $|(i,\mu)| = 2$ then we must have $|\mu| = 0$ and $|L| = 1$ (the other case is not possible given that we can only have an empty multi-trace because $\mathbb{A}_\emptyset = \emptyset$). Then, we have $(i,\varepsilon_{L}) \leadsto (\macroHide_{h}(i),\varepsilon_{\emptyset})$ and we can immediately apply rule $\macroRulePass$ so that the conclusion holds
    \item If $|(i,\mu)| > 2$ then, if $|\mu| = 0$, we are in the same case as the previous one. Let us hence suppose that $|\mu| \geq 1$ which also implies that $|L| \geq 1$ so that $\mathbb{A}(L) \neq \emptyset$. Then, given $(i,\mu) \overset{*}{\leadsto} \macroOKVerdict$, we may have as a first transition in the path:
    \begin{itemize}
        \item either an application of $\macroRuleExec$ and in that case there exists $a$, $i'$ and $\mu'$ s.t. $\mu = a ~\multiAppendLeft~ \mu'$ and $i \xrightarrow{a} i'$ and we have
        $(i,\mu) \leadsto (i',\mu') \overset{*}{\leadsto} \macroOKVerdict$. Then:
        \begin{itemize}
            \item on the one hand we can apply the induction hypothesis on $(i',\mu')$ because we have that $(i', \mu') \leadsto (\macroHide_{h}(i'),\macroElim_{h}(\mu'))$ trivially still holds. Then we can conclude that $(\macroHide_{h}(i'),\macroElim_{h}(\mu')) \overset{*}{\leadsto} \macroOKVerdict$
            \item on the other hand, given $\mu = a ~\multiAppendLeft~ \mu'$, we must have $h \neq \theta(a)$ for the hypothesis $(i, \mu) \leadsto (\macroHide_{h}(i),\macroElim_{h}(\mu))$ to hold.
            Therefore if $a$ is executable in $i$ then it is also executable in $\macroHide_{h}(i)$ and we have $\macroHide_{h}(i) \xrightarrow{a} \macroHide_{h}(i')$ because $\macroHide_{h}$ is a homomorphism and hence preserves the algebraic structures of the IL. Also, we have that $\macroElim_{h}(\mu) = \macroElim_{h}(a ~\multiAppendLeft~ \mu') = a ~\multiAppendLeft~ \macroElim_{h}(\mu')$.
            This then implies that we can apply $\macroRuleExec$ from $(\macroHide_{h}(i),\macroElim_{h}(\mu))$ so that
            $(\macroHide_{h}(i),\macroElim_{h}(\mu)) \leadsto (\macroHide_{h}(i'),\macroElim_{h}(\mu'))$
        \end{itemize}
        The two points above allow to conclude that $(\macroHide_{h}(i),\macroElim_{h}(\mu)) \overset{*}{\leadsto} \macroOKVerdict$
        \item or an application of $\macroRuleHide$ and in that case there exists a lifeline $l \in L$ such that we have $(i,\mu) \leadsto (\macroHide_{l}(i),\macroElim_{l}(\mu)) \overset{*}{\leadsto} \macroOKVerdict$ and then:
        \begin{itemize}
            \item if $l=h$ we can immediately conclude
            \item if $l \neq h$ then we can remark that:
            \begin{itemize}
                \item firstly $(\macroHide_{l}(i),\macroElim_{l}(\mu)) \leadsto (\macroHide_{h}(\macroHide_{l}(i)),\macroElim_{h}(\macroElim_{l}(\mu)))$ and, given that we have decremented the measure by applying a first time $\macroRuleHide$, we can apply the induction hypothesis so that\\$(\macroHide_{h}(\macroHide_{l}(i)),\macroElim_{h}(\macroElim_{l}(\mu))) \overset{*}{\leadsto} \macroOKVerdict$
                \item secondly we can remark that $\macroHide_{h}(\macroHide_{l}(i)) = \macroHide_{l}(\macroHide_{h}(i))$\\
                and $\macroElim_{h}(\macroElim_{l}(\mu)) = \macroElim_{l}(\macroElim_{h}(\mu))$
                \item finally we have:
                \[
                \begin{array}{lcl}
                (\macroHide_{h}(i),\macroElim_{h}(\mu))
                &
                \leadsto
                &
                (\macroHide_{l}(\macroHide_{h}(i)),\macroElim_{l}(\macroElim_{h}(\mu)))
                \\
                &
                =
                &
                (\macroHide_{h}(\macroHide_{l}(i)),\macroElim_{h}(\macroElim_{l}(\mu)))
                \\
                &
                \overset{*}{\leadsto}
                &
                \macroOKVerdict
                \end{array}
                \]
                and hence the property holds
            \end{itemize}
        \end{itemize}
    \end{itemize}
\end{itemize}
\qed 
\end{proof}

\subsection{Verdict and conformity\label{ssec:algo}}

In Definition \ref{def:multipref_anahide_verdict}, we define the conformance of a multi-trace $\mu$ with regards to an interaction $i$ as the existence of a path $(i,\mu) \overset{*}{\leadsto} \macroOKVerdict$.

\begin{definition}[Multi-trace analysis\label{def:multipref_anahide_verdict}]
For any $L \subset \mathcal{L}$, we define $\omega_L : \mathbb{I}(L) \times \mathbb{M}({L}) \rightarrow \{Pass,Fail\}$ s.t. for any $i \in \mathbb{I}(L)$ and $\mu \in \mathbb{M}(L)$:
\begin{itemize}
    \item $\omega_L(i,\mu) = Pass$ iff $(i,\mu) \overset{*}{\leadsto} \macroOKVerdict$
    \item $\omega_L(i,\mu) = Fail$ otherwise
\end{itemize}
\end{definition}

Given Property~\ref{prop:finitesearchspace}, Definition~\ref{def:multipref_anahide_verdict} is well founded insofar as the sub-graph of $\mathbb{G}$ issued from any pair $(i,\mu)$ of $\mathbb{V}$ is finite and all paths from $(i,\mu)$ can be extended until reaching a verdict ($\macroOKVerdict$ or $\macroKOVerdict$). 
In order to prove that the algorithm thus defined indeed identifies prefixes of accepted multi-traces, 
we need to prove that the existence of a path from $(i,\mu)$ to $\macroOKVerdict$ guarantees that $\mu$ is a prefix of a multi-trace of $i$, and that the non-existence of such a path guarantees that $\mu$ is not such a prefix.

This proof relies on an additional property given in Property \ref{lem:elimination_prefix_closure}, in which We relate the $\macroElim$ operator to prefix closure (in the sense of multi-traces).

\begin{property}[Elimination and prefix closure\label{lem:elimination_prefix_closure}]
For any multi-trace $\mu \in \mathbb{M}(L)$, any set of multi-traces $T \in \mathcal{P}( \mathbb{M}(L) )$ and any $h \in L$:
\[
\left(
\begin{array}{c}
( \macroElim_h(\mu) \in \macroElim_h(T) )~
\wedge~( \mu_{|h} = \varepsilon )
\end{array}
\right)
\Rightarrow 
( \mu \in \overline{T} )
\]
\end{property}

\begin{proof}
If $\macroElim_h(\mu) \in \macroElim_h(T)$ this means that $(\mu_{|l})_{l \in L'} \in \macroElim_h(T)$. Then, there must exist a multi-trace $\mu_0 \in T$ and a trace component 
$t \in \mathbb{T}_h$ 
such that $\forall~l \in L'$, $\mu_{0|l} = \mu_{|l}$ and $\mu_{0|h} = t$. Let us then consider the multi-trace $\mu_1$ such that $\forall~l \in L'$, $\mu_{1|l} = \varepsilon$ and $\mu_{1|h} = t$.
We then have, because $\mu_{|h} = \varepsilon$, that $\mu_0 = \mu \multiSeq \mu_1$ and hence $\mu$ is a prefix (in the sense of multi-traces) of $\mu_0 \in T$. Therefore $\mu \in \overline{T}$.
\qed 
\end{proof}

The correctness of our algorithm, given in Theorem \ref{th:multipref_equates_pass} then follows from Property \ref{lem:confluence_anahide} and Property \ref{lem:elimination_prefix_closure}.

\begin{theorem}[Correctness\label{th:multipref_equates_pass}]
For any $i \in \mathbb{I}(L)$ and any $\mu \in \mathbb{M}(L)$:
\[
\left(
\begin{array}{c}
\mu \in \overline{\sigma_{|L}(i)}
\end{array}
\right)
\Leftrightarrow 
\left(
\begin{array}{c}
\omega_L(i,\mu) = Pass
\end{array}
\right)
\]
\end{theorem}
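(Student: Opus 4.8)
The plan is to prove both implications by well-founded induction on the measure $|(i,\mu)| = |\mu| + |L| + 1$ already used for Property \ref{prop:finitesearchspace}, which strictly decreases along every application of $\macroRuleExec$ and of $\macroRuleHide$. I would keep the same inductive skeleton for the two directions, splitting on the outermost arc of a witnessing path for the $\Leftarrow$ direction and on the shape of $\mu$ for the $\Rightarrow$ direction, and I would freely rely on Theorem \ref{th:semantics_of_hidden_interactions} to move the semantics across a change of signature and on Property \ref{prop:operational_formulation_multitrace} to decompose accepted multi-traces head-first.

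For $\omega_L(i,\mu) = Pass \Rightarrow \mu \in \overline{\sigma_{|L}(i)}$, I would case on the first arc of a path $(i,\mu) \overset{*}{\leadsto} \macroOKVerdict$. If it is $\macroRulePass$ then $\mu = \varepsilon_L$, and since $\sigma_{|L}(i)$ is never empty (an easy structural induction on $i$), we get $\varepsilon_L \in \overline{\sigma_{|L}(i)}$. If it is $\macroRuleExec$, then $\mu = a \multiAppendLeft \mu'$ with $i \xrightarrow{a} i'$ and, by induction, $\mu' \in \overline{\sigma_{|L}(i')}$; choosing $\mu''$ in $\sigma_{|L}(i')$ above $\mu'$, Property \ref{prop:operational_formulation_multitrace} yields $a \multiAppendLeft \mu'' \in \sigma_{|L}(i)$, of which $\mu = a \multiAppendLeft \mu'$ is a prefix. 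If it is $\macroRuleHide$ on $h$ (so $\mu_{|h} = \varepsilon$), induction gives $\macroElim_h(\mu) \in \overline{\sigma_{|L \setminus \{h\}}(\macroHide_h(i))}$, which Theorem \ref{th:semantics_of_hidden_interactions} rewrites as $\macroElim_h(\mu) \in \overline{\macroElim_h(\sigma_{|L}(i))}$; then Property \ref{lem:elimination_prefix_closure}, applied to $T = \overline{\sigma_{|L}(i)}$ and using that $\macroElim_h$ commutes with prefix closure together with $\mu_{|h} = \varepsilon$, delivers $\mu \in \overline{\sigma_{|L}(i)}$.

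For the converse, assuming $\mu$ is a prefix of some $\mu_0 \in \sigma_{|L}(i)$, I would construct the path explicitly along the same trichotomy spelled out after Definition \ref{def:multipref_anahide_rules}, so that $\macroRuleFail$ is never forced. If $\mu = \varepsilon_L$, apply $\macroRulePass$. If some component $\mu_{|h}$ is empty, apply $\macroRuleHide$ on $h$: since removal preserves the prefix relation and $\macroElim_h(\mu_0) \in \sigma_{|L\setminus\{h\}}(\macroHide_h(i))$ by Theorem \ref{th:semantics_of_hidden_interactions}, the reduced pair stays inside the reduced prefix-closed semantics and the induction hypothesis closes the branch. Otherwise every component of $\mu$ is nonempty, so $\mu_0 \neq \varepsilon_L$ and Property \ref{prop:operational_formulation_multitrace} gives $\mu_0 = a \multiAppendLeft \mu_0'$ with $i \xrightarrow{a} i'$ and $\mu_0' \in \sigma_{|L}(i')$; the head of the $\theta(a)$-component of $\mu$ is then forced to be $a$, whence $\mu = a \multiAppendLeft \mu''$ with $\mu''$ a prefix of $\mu_0'$, and applying $\macroRuleExec$ with the induction hypothesis concludes.

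The delicate point throughout is the removal case, where one must transport the prefix relation across the signature change induced by $\macroHide_h/\macroElim_h$: the algebraic bridge is Theorem \ref{th:semantics_of_hidden_interactions}, while turning ``$\macroElim_h(\mu)$ is a prefix after removal'' back into ``$\mu$ is a prefix before removal'' is exactly Property \ref{lem:elimination_prefix_closure}. A further subtlety is the nondeterminism of $\leadsto$, since an accepting path may interleave $\macroRuleHide$ and $\macroRuleExec$ steps in an arbitrary order; to reorganize such a path, in particular to commute an enabled removal to the front so that the two inductions line up, I would invoke the confluence Property \ref{lem:confluence_anahide}, which guarantees that whenever $\macroOKVerdict$ is reachable and a removal is applicable it remains reachable after that removal. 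This is what decouples the hiding argument from the accidental shape of a particular accepting path and makes the two directions fit together cleanly.
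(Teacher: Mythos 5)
Your proposal is correct and, in all essentials, is the paper's own proof: the same well-founded induction on the measure $|(i,\mu)|$, the same use of Theorem~\ref{th:semantics_of_hidden_interactions} to transport the semantics across $\macroHide_h/\macroElim_h$ (together with the commutation $\macroElim_h(\overline{T}) = \overline{\macroElim_h(T)}$, which the paper also leaves implicit), Property~\ref{prop:operational_formulation_multitrace} to peel off a head action when no trace component is empty, and Property~\ref{lem:elimination_prefix_closure} to pull the prefix relation back through a removal. The one genuine structural difference lies in the direction from $\omega_L(i,\mu) = Pass$ to $\mu \in \overline{\sigma_{|L}(i)}$: the paper's induction cases on the shape of $\mu$ (empty component or not) and therefore must re-route an \emph{arbitrary} accepting path through an $\macroRuleHide$ step, which is exactly where Property~\ref{lem:confluence_anahide} is indispensable; you instead case on the first arc of the witnessing path itself, so each subcase ($\macroRulePass$, $\macroRuleExec$, $\macroRuleHide$) hands the suffix of that very path to the induction hypothesis at strictly smaller measure, and confluence is never needed. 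Your closing paragraph is thus a mis-diagnosis of your own argument: in your structure there is no path to reorganize and no removal to commute to the front, so invoking Property~\ref{lem:confluence_anahide} is harmless but superfluous --- it does real work only under the paper's $\mu$-shape-first decomposition. The trade-off is mild: your path-first analysis is slightly more elementary, dropping one nontrivial lemma from that direction at the cost of an explicit non-emptiness observation for $\sigma_{|L}(i)$ in the $\macroRulePass$ case (the paper needs this only at the base $L = \emptyset$), whereas the paper's version isolates confluence as a reusable fact that also justifies the eager-hiding heuristics mentioned in Section~\ref{ssec:complexity}, which is presumably why the authors structure the proof around it.
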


\begin{proof}
We use the following notation: $\overline{\sigma_{|L}}(i)$ denotes the set $\overline{\sigma_{|L}(i)}$ of prefixes of multi-traces of $i$.
Let us reason by induction on the measure $|(i,\mu)|$.
\begin{itemize}
    \item If $|(i,\mu)|=1$ then $L = \emptyset$, $\mu = \varepsilon_{\emptyset}$ and hence we have both $\omega_L(i,\mu) = Pass$ because rule $\macroRulePass$ immediately applies and $\mu \in \overline{\sigma_{|L}}(i)$ because the empty multi-trace $\varepsilon_{L}$ is in the prefix closure of any non-empty set of multi-traces.
    \item If $|(i,\mu)| \geq 2$ then:
    \begin{itemize}
        \item If there exists a lifeline $h$ s.t. $\mu_{|h} = \varepsilon$ then we can apply rule $\macroRuleHide$ and we have $(i,\mu) \leadsto (\macroHide_{h}(i),\macroElim_{h}(\mu))$ and then:
        \begin{itemize}
            \item[$\Rightarrow$] if $\mu \in \overline{\sigma_{|L}}(i)$ then, as per Th.\ref{th:semantics_of_hidden_interactions} we have $\macroElim_{h}(\mu) \in  \overline{\sigma_{|L'}}(\macroHide_{h}(i))$. Given that we have decremented the measure, we can apply the induction hypothesis which implies that $(\macroHide_{h}(i),\macroElim_{h}(\mu)) \overset{*}{\leadsto} \macroOKVerdict$. Then, by transitivity $(i,\mu) \overset{*}{\leadsto} \macroOKVerdict$ and hence $\omega_{L}(i,\mu) = Pass$
            \item[$\Leftarrow$] if $\omega_L(i,\mu) = Pass$ we have a path $(i,\mu) \overset{*}{\leadsto} \macroOKVerdict$ then we can apply Prop.\ref{lem:confluence_anahide} to obtain that $(\macroHide_{h}(i),\macroElim_{h}(\mu)) \overset{*}{\leadsto} \macroOKVerdict$ and we can then apply the induction hypothesis so that $\macroElim_{h}(\mu) \in  \overline{\sigma_{|L'}}(\macroHide_{h}(i))$. Then, as per Th.\ref{th:semantics_of_hidden_interactions} this equates 
            $\macroElim_{h}(\mu) \in \macroElim_{h}(\overline{\sigma_{|L}}(i))$.
            Then, given that $\mu_{|h} = \varepsilon$, we can apply Prop.\ref{lem:elimination_prefix_closure} to conclude that $\mu \in \overline{\overline{\sigma_{|L}}(i)} = \overline{\sigma_{|L}}(i)$
        \end{itemize}
        \item If there are no lifeline $h$ s.t. $\mu_{|h} = \varepsilon$ then:
        \begin{itemize}
            \item[$\Rightarrow$] if $\mu \in \overline{\sigma_{|L}}(i)$, then there exists $\mu_+$ s.t. $\mu \multiSeq \mu_+ \in \sigma_{|L}(i)$. Then, because $\mu \multiSeq \mu_+ \neq \varepsilon_L$, as per Prop.\ref{prop:operational_formulation_multitrace} there exists $a$, $i'$ and $\mu_*'$ s.t. $\mu \multiSeq \mu_+ = a ~\multiAppendLeft~ \mu_*'$ and $i \xrightarrow{a} i'$ and $\mu_*' \in \sigma_{|L}(i')$.
            Then, because, there is no empty trace component on $\mu$, action $a$ must be taken from $\mu$ and not from $\mu_+$. Therefore there exists $\mu'$ and $\mu_+'$ such that $\mu = a ~\multiAppendLeft~ \mu'$ and $\mu \multiSeq \mu_+ = a ~\multiAppendLeft~ \mu_*' = (a ~\multiAppendLeft~ \mu') \multiSeq \mu_+'$ and therefore $\mu' \multiSeq \mu_+' = \mu_*' \in \sigma_{|L}(i')$. Hence $\mu' \in \overline{\sigma_{|L}(i')} = \overline{\sigma_{|L}}(i')$. Then:
            \begin{itemize}
                \item on the one hand we can apply the induction hypothesis on $i'$ and $\mu'$ so that we have $(i',\mu') \overset{*}{\leadsto} \macroOKVerdict$
                \item on the other hand, the fact that $i \xrightarrow{a} i'$ and $\mu = a ~\multiAppendLeft~ \mu'$ allows us to apply rule $\macroRuleExec$ so that we have $(i,\mu) \leadsto (i',\mu')$
            \end{itemize}
            From the two last points we conclude by transitivity that $(i,\mu) \overset{*}{\leadsto} \macroOKVerdict$ and hence the property holds.
            \item[$\Leftarrow$] if $\omega_{L}(i,\mu) = Pass$ we have a path $(i,\mu) \overset{*}{\leadsto} \macroOKVerdict$ given that we cannot apply rule $\macroRuleHide$, the only possible first transition in this path is an application of rule $\macroRuleExec$ i.e. there must exists $a$, $i'$ and $\mu'$ s.t. $i \xrightarrow{a} i'$ and $\mu = a ~\multiAppendLeft~ \mu'$ and $(i,\mu) \leadsto (i',\mu') \overset{*}{\leadsto} \macroOKVerdict$. Then:
            \begin{itemize}
                \item on the one hand we can apply the induction hypothesis on $i'$ and $\mu'$ so that we have $\mu' \in \sigMultiPrefProjected{\check{L}}(i')$ which implies the existence of $\mu_+'$ such that $\mu' \multiSeq \mu_+' \in \sigma_{|L}(i')$
                \item on the other hand the fact that $i \xrightarrow{a} i'$ and $\mu' \multiSeq \mu_+' \in \sigma_{|L}(i')$, as per Prop.\ref{prop:operational_formulation_multitrace} this implies that $a ~\multiAppendLeft~ (\mu' \multiSeq \mu_+') \in \sigma_{|L}(i)$. 
                In particular, this implies that $\mu = a ~\multiAppendLeft~ \mu' \in \overline{\sigma_{|L}}(i)$
            \end{itemize}
        \end{itemize}
    \end{itemize}
\end{itemize}
\qed 
\end{proof}

\subsection{Complexity\label{ssec:complexity}}

The problem of recognizing correct multi-prefixes w.r.t. interactions is NP-hard (Property \ref{prop:multipref_nphard_complexity}).
In \cite{a_small_step_approach_to_multi_trace_checking_against_interactions}, the problem of determining whether or not $\mu \in \sigma_{|L}(i)$ has been proven to be NP-Hard via a reduction of the 1-in-3\,SAT problem (inspired by \cite{realizability_and_verification_of_msc_graphs}). In this paper we detail the reduction of a more general satisfiability problem : 3SAT. This problem is this time reduced into a problem of recognizing a multi-trace as a prefix of a behavior accepted by a certain interaction model i.e. the resolution of $\mu \in \overline{\sigma_{|L}(i)}$.

Let $X=\{x_1,\ldots,x_n\}$ be a finite set of Boolean variables. A literal $\ell$ is either a Boolean variable $x\in X$ or its negation $\neg{x}$. A 3\,CNF (Conjunctive Normal Form) formula is an expression of the form $\phi = C_1 \wedge \ldots \wedge C_j \wedge \ldots \wedge  C_k $ with every clause $C_j$ being a disjunction of three distinct literals. On the left of Fig.\ref{fig:3sat_red} is given, as an example, such a boolean expression $\phi$. The 3\,SAT problem is then to determine whether or not $\phi$ is satisfiable (whether or not there exists a variable assignment which sets all clauses in $\phi$ to $true$).

\begin{property}
\label{prop:multipref_nphard_complexity}
The problem of determining whether or not $\mu \in \overline{\sigma_{|L}(i)}$ is NP-hard.
\end{property}

\begin{proof}
Given a 3\,CNF formula $\phi$, with $|X| = n$ variables and $k$ clauses, we consider a set of lifeline $L=\{l_1,\ldots,l_k\}$ (a lifeline per clause), a unique message $m$, and the multi-trace $\mu_{\scriptscriptstyle \text{3\,SAT}}=(l_1?m,\ldots,l_k?m)$.

For any literal $\ell$, we build a multi-trace $\mu_{\ell}$ such that for any $l_j \in L$, if $\ell$ occurs in clause $C_j$ then $\mu_{\ell|l_j} = l_j?m$ and otherwise $\mu_{\ell|l_j} = \varepsilon$.
That is, every positive (resp. negative) occurrence of a variable $x$ in a clause $C_j$ is represented by an action $l_j?m$ in $\mu_x$ (resp. in $\mu_{\neg{x}}$). 
Let us then consider the set of multi-traces $T= (\{\mu_{x_1}\} \cup \{\mu_{ \neg{x_1}}\} );(\{\mu_{x_2}\} \cup \{\mu_{ \neg{x_2}}\} );\ldots;(\{\mu_{x_n}\} \cup \{\mu_{ \neg{x_n}}\} )$.
Every $\mu \in T$ corresponds to a variable assignment of the 3\,SAT problem. Indeed, to build $\mu$ either $\mu_{x}$ or $\mu_{\neg{x}}$ is selected (via $\cup$) in the definition of $T$, and not both. As $T$ is built using the sequencing (via $;$) of such alternatives for all variables, multi-traces in $T$ simulate all possible variable assignments (the search space for satisfying $\phi$).  
Because every clause contains three literals, one of which must be set to $true$, there is at least one literal $\ell \in \{x, \neg{x}\}$ in $C_j$ set to $true$. Hence $l_j?m \in \overline{\mu_{|l_j}}$. We remark that  $\mu_{|l_j}$ can be a sequence of such emissions $l_j?m$ if more than one literal is set to true in $C_j$. This reasoning can be applied to all the clauses i.e. $\forall j \in [1,k]$, $l_j?m \in \overline{\mu_{|l_j}}$ which implies that $\mu_{\scriptscriptstyle \text{3\,SAT}} \in \overline{\mu}$ and hence $\mu_{\scriptscriptstyle \text{3\,SAT}} \in \overline{T}$.
Given that $T$ is equivalent to the semantics of an interaction $i$ of the form $seq(alt(i_{x_1},i_{\neg{x_1}}),seq(alt(i_{x_2},i_{\neg{x_2}}),\cdots,alt(i_{x_n},i_{\neg{x_n}})\cdots))$, with, for any literal $\ell$, $i_\ell$ being the sequencing of all $l_j?m$ such that $\ell$ appears in $C_j$, solving the 3\,SAT problem equates to solving $\mu_{\scriptscriptstyle \text{3\,SAT}} \in \overline{\sigma_{|L}(i)}$.
\qed 
\end{proof}

Hence, we have provided a polynomial reduction of 3\,SAT to the problem of recognizing multi-prefixes of accepted multi-traces. The reduction of the problem on the left of Fig.\ref{fig:3sat_red} is represented on the right of Fig.\ref{fig:3sat_red}, via drawing the resulting interaction and multi-trace.
The problem has 3 variables and 4 clauses. In the corresponding interaction, lifeline $l_1$ corresponds to the first clause and we see that it has a $l_1?m$ in the right branch of the first alternative, corresponding to $\neg x_1$, the right branch of the second for $\neg x_2$ and the right branch of the third for $\neg x_3$. The same applies to $l_2$, $l_3$ and $l_4$.

\begin{figure}[h]
    \centering
\begin{tikzpicture}
\node[draw,rectangle,line width=2pt] (problem_3sat) at (0,0) {
    $\begin{aligned}
    ~~~& ( \neg x_1 ~\vee \neg x_2 ~\vee \neg x_3 )\\
    \wedge~ & ( \neg x_1 ~\vee \phantom{\neg}x_2 ~\vee \phantom{\neg}x_3 )\\
    \wedge~ & ( \phantom{\neg}x_1 ~\vee \neg x_1 ~\vee \phantom{\neg}x_2 )\\
    \wedge~ & ( \phantom{\neg}x_2 ~\vee \phantom{\neg}x_3 ~\vee \neg x_3 )
    \end{aligned}$
};
\node[draw,rectangle,line width=2pt,below right=-2.5cm and .75cm of problem_3sat] (problem_multipref) {
    \begin{tikzpicture}
        \node[draw,rectangle] (int) at (0,0) {
            \includegraphics[width=4.5cm]{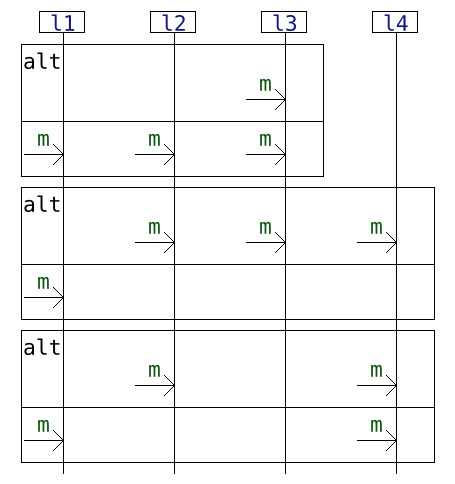}
        };
        \node[draw,rectangle,right=.5cm of int] (mu) {
            $
            \left\{~
            \begin{aligned}
            \relax[\hlf{l_1}] & \leftarrow & \hlf{l_1}?\hms{m}\\
            \relax[\hlf{l_2}] & \leftarrow & \hlf{l_2}?\hms{m}\\
            \relax[\hlf{l_3}] & \leftarrow & \hlf{l_3}?\hms{m}\\
            \relax[\hlf{l_4}] & \leftarrow & \hlf{l_4}?\hms{m}
            \end{aligned}
            \right.
            $
        };
    \end{tikzpicture}
};
\node[draw,rectangle,red,below=1cm of problem_3sat] (red) {
    \large Reduction
};
\draw[dashed,red,->] (problem_3sat) -- (red);
\draw[dashed,red,->] (red) -- (problem_multipref.west |- red);
\end{tikzpicture}
    \caption{Principle of 3\;SAT reduction}
    \label{fig:3sat_red}
\end{figure}

Given the NP-hardness of the underlying problem, the implementation of our algorithm, which is defined as a graph exploration of the search space $\mathbb{G}$, is combined with heuristic techniques to reduce the average complexity. Such techniques may include means to cut parts of the graph, the use of pertinent search strategies, of priorities and criteria for the selection of the next node to explore, or to further condition the use of the algorithm's rules.
For instance, if $\macroRuleHide$ is applicable from a node $(i,\mu)$, we can apply $\macroElim$ on all lifelines which can be removed at the same time. Also, if $\macroRuleExec$ is applicable from that same node, we can choose not to apply it. Those two points are justified by properties of commutativity for $\macroElim$ and of a confluence/Church-Rosser property for relation $\leadsto$ (see Property \ref{lem:confluence_anahide}). 

We have implemented our approach as an extension of the tool HIBOU \cite{hibou_label} (a command-line tool with a text-based input language).
Various such techniques, not detailed here for lack of space,  are implemented in the tool.

\section{Experimental assessment\label{sec:experiments}}

In the following, we seek to evaluate our implementation (in HIBOU version 0.8.0).
In Section~\ref{ssec:3SAT benchmarks}, we use it to solve 3SAT problems and in Section~\ref{ssec:usecase}, we apply it on some practical examples from the literature.

\subsection{3\;SAT benchmarks}
\label{ssec:3SAT benchmarks}

In light of Property~\ref{prop:multipref_nphard_complexity}, we have experimented with the use of HIBOU for solving 3\,SAT problems via an automatic translation towards multi-trace analysis. The reduction and experiments are resp. detailed in Section \ref{ssec:complexity} and Appendix \ref{anx:exp_3sat}. \cite{hibou_3sat_bench} hosts the code to reproduce the experiments. 

We have compared the results HIBOU obtained on translated 3\,SAT problems against those of a SAT solver (Varisat \cite{varisat_rs}).
As input data we have used 3 sets of problems: two custom benchmarks with randomly generated problems
and the UF20 benchmark \cite{sat_benchmarks}. 

Fig.\ref{fig:hibou_varisat_3sat} provides details on 2 benchmarks with, on the top left, information about the input problems (numbers of variables, clauses, instances), on the bottom left statistical information about the time required for the analysis using each tool, and, on the right a corresponding scatter plot. In the plot, each point corresponds to a given 3-SAT problem, with its position corresponding to the time required to solve it (by Varisat on the $x$ axis and HIBOU on the $y$ axis). Points in red are unsatisfiable problems while those in blue are satisfiable. 

Those experiments underwrite the correctness of our implementation and provide an ad-hoc assessment of its performances. Let us keep in mind that our approach is not designed to solve 3\;SAT by contrast to dedicated 3\;SAT solvers.

\begin{figure}[h]
    \centering
    \begin{subfigure}[t]{.975\linewidth}
        \centering
        \begin{minipage}{.49\linewidth}
\centering
{
\scriptsize  

\begin{tabular}{|l|r|}
\hline 
\# variables & 3-10 \\
\hline 
\# clauses & 4-50 \\
\hline 
\# instances & 663 \\
\hline 
\# SAT & 376 \\
\hline 
\# UNSAT & 287 \\
\hline 
\end{tabular}

~\\~\\

\begin{tabular}{|c|l|l|}
\cline{2-3}
\multicolumn{1}{c|}{} & \multicolumn{1}{c|}{varisat} & \multicolumn{1}{c|}{hibou}\\
\hline 
min & 0.01699 & 0.0002379 \\
\hline 
q1 & 0.01792 & 0.0012984 \\
\hline 
Mdn & 0.01806 & 0.0027920 \\
\hline 
M & 0.01833 & 0.0043448 \\
\hline 
q3 & 0.01848 & 0.0053158 \\
\hline 
max & 0.02892 & 0.0267174 \\
\hline 
$\sigma$ & 0.001017846 & 0.004637261 \\
\hline 
\end{tabular}
}       
\end{minipage} 
        \begin{minipage}{.49\linewidth}
            \centering
            \includegraphics[width=\textwidth]{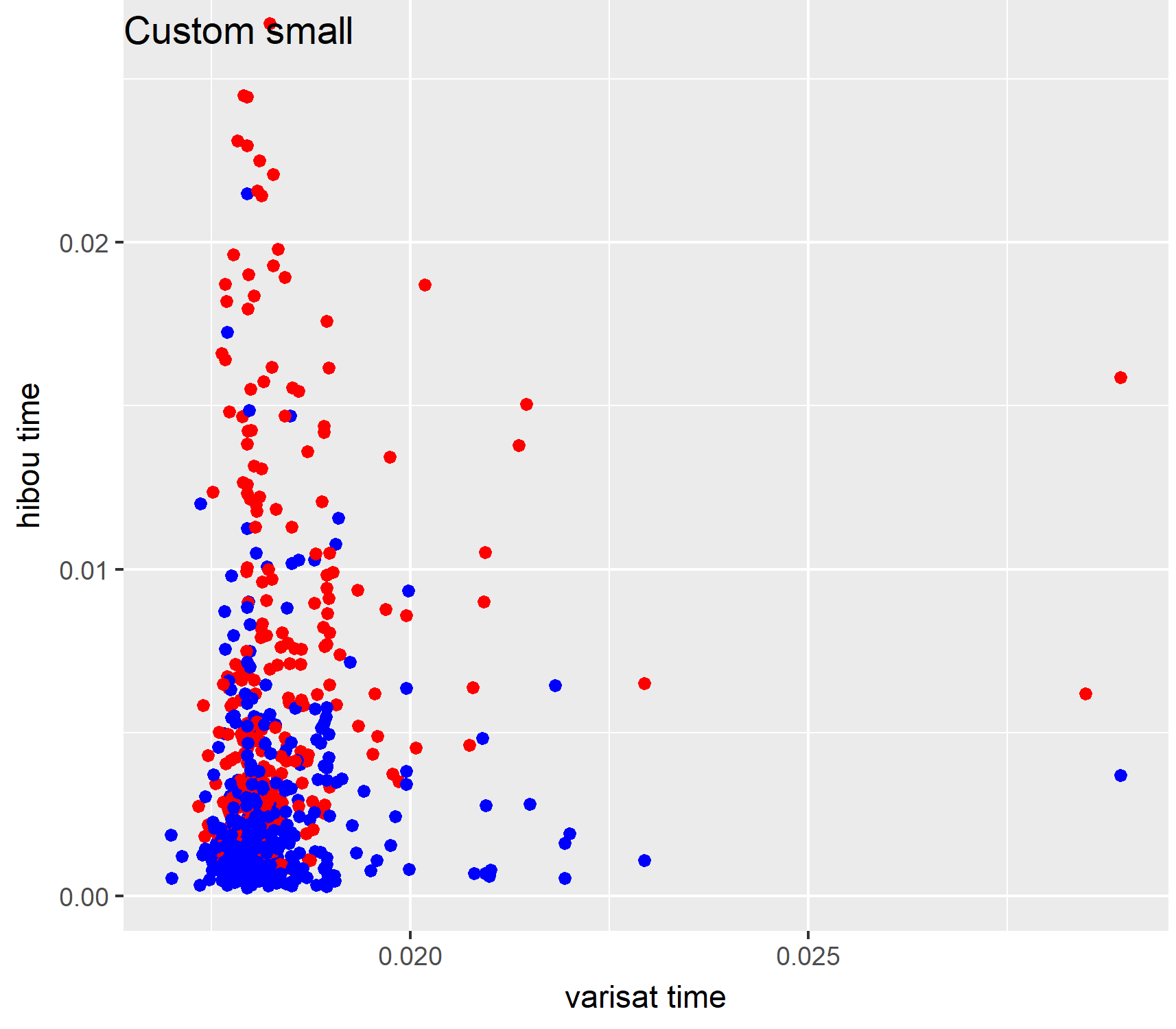}
        \end{minipage}
        \caption{Input problems and output results for 'small' custom benchmark\label{fig:Custom_small}}
    \end{subfigure}
    \begin{subfigure}[t]{.975\linewidth}
        \centering
        \begin{minipage}{.49\linewidth}
\centering
{
\scriptsize  

\begin{tabular}{|l|r|}
\hline 
\# variables & 20 \\
\hline 
\# clauses & 91 \\
\hline 
\# instances & 1000 \\
\hline 
\# SAT & 1000 \\
\hline 
\# UNSAT & 0 \\
\hline 
\end{tabular}

~\\~\\

\begin{tabular}{|c|l|l|}
\cline{2-3}
\multicolumn{1}{c|}{} & \multicolumn{1}{c|}{varisat} & \multicolumn{1}{c|}{hibou}\\
\hline 
min & 0.01559 & 0.007638 \\
\hline 
q1 & 0.01667 & 0.091421 \\
\hline 
Mdn & 0.01833 & 0.229745 \\
\hline 
M & 0.01847 & 0.313901 \\
\hline 
q3 & 0.01929 & 0.462385 \\
\hline 
max & 0.03989 & 1.666777 \\
\hline 
$\sigma$ & 0.00255181 & 0.2865485 \\
\hline 
\end{tabular}
}       
\end{minipage}
        \begin{minipage}{.49\linewidth}
            \centering
            \includegraphics[width=\textwidth]{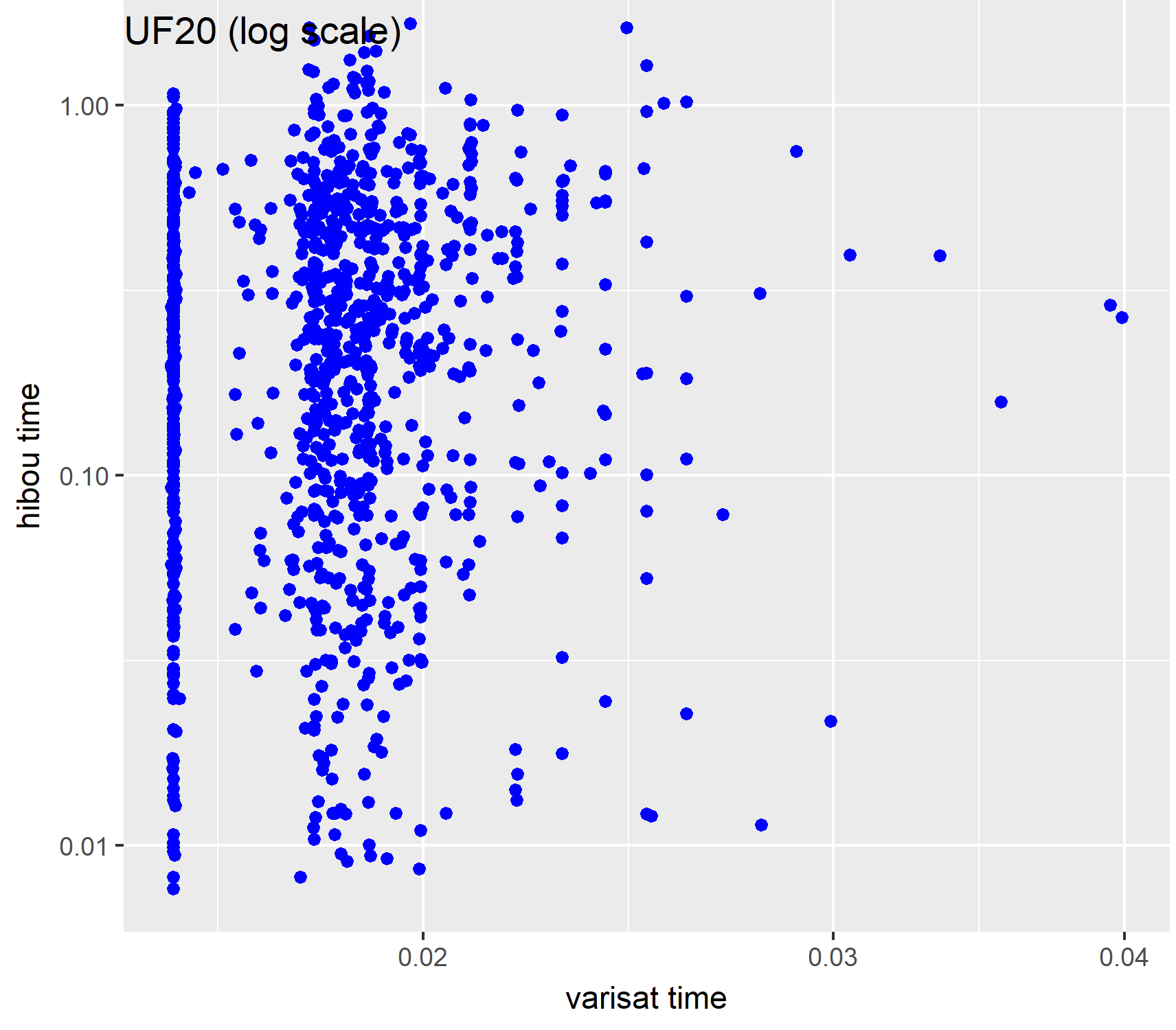}
        \end{minipage}
        \caption{Input problems and output results for UF-20 benchmark\label{fig:UF20}}
    \end{subfigure}
    \caption{Experiments on 3SAT benchmarks (times in seconds)\label{fig:hibou_varisat_3sat}}
    \vspace{-.5cm}
\end{figure}

\subsection{Use cases experiments}
\label{ssec:usecase}
So as to consider more concrete and varied interactions, we experiment with the following 4 examples:
a protocol for purchasing books \cite{coping_with_bad_agent_interaction_protocols_when_monitoring_partially_observable_multiagent_systems_AnconaFFM18},
a system for querying complex sensor data \cite{a_dynamic_and_context_aware_semantic_mediation_service_for_discovering_and_fusion_of_heterogeneous_sensor_data},
the Alternating Bit Protocol \cite{high_level_message_sequence_charts}
and a network for uploading data to a server \cite{comprehensive_multiparty_session_types}.
Fig.\ref{fig:exp_usecases} partially reports on those experiments. More details are available in Appendix \ref{anx:exp_usecases} and online \cite{hibou_hiding_usecases}. For each example, we generated random accepted multi-traces (ACPT) up to some depth, for which we then randomly selected prefixes (PREF). For each such prefix we then performed mutations of three kinds: swapping actions (SACT), swapping trace components (SCMP) and inserting noise (NOIS). We report for each category of multi-traces times to compute verdicts in Fig.\ref{fig:exp_usecases}. As expected, running the algorithm on those multi-traces allows recognizing prefixes and mutants which go out of specification.



\begin{figure}[h]
    \centering
    \begin{minipage}{.49\linewidth}
       \begin{subfigure}[t]{.975\linewidth}
            \centering
            \includegraphics[width=\textwidth]{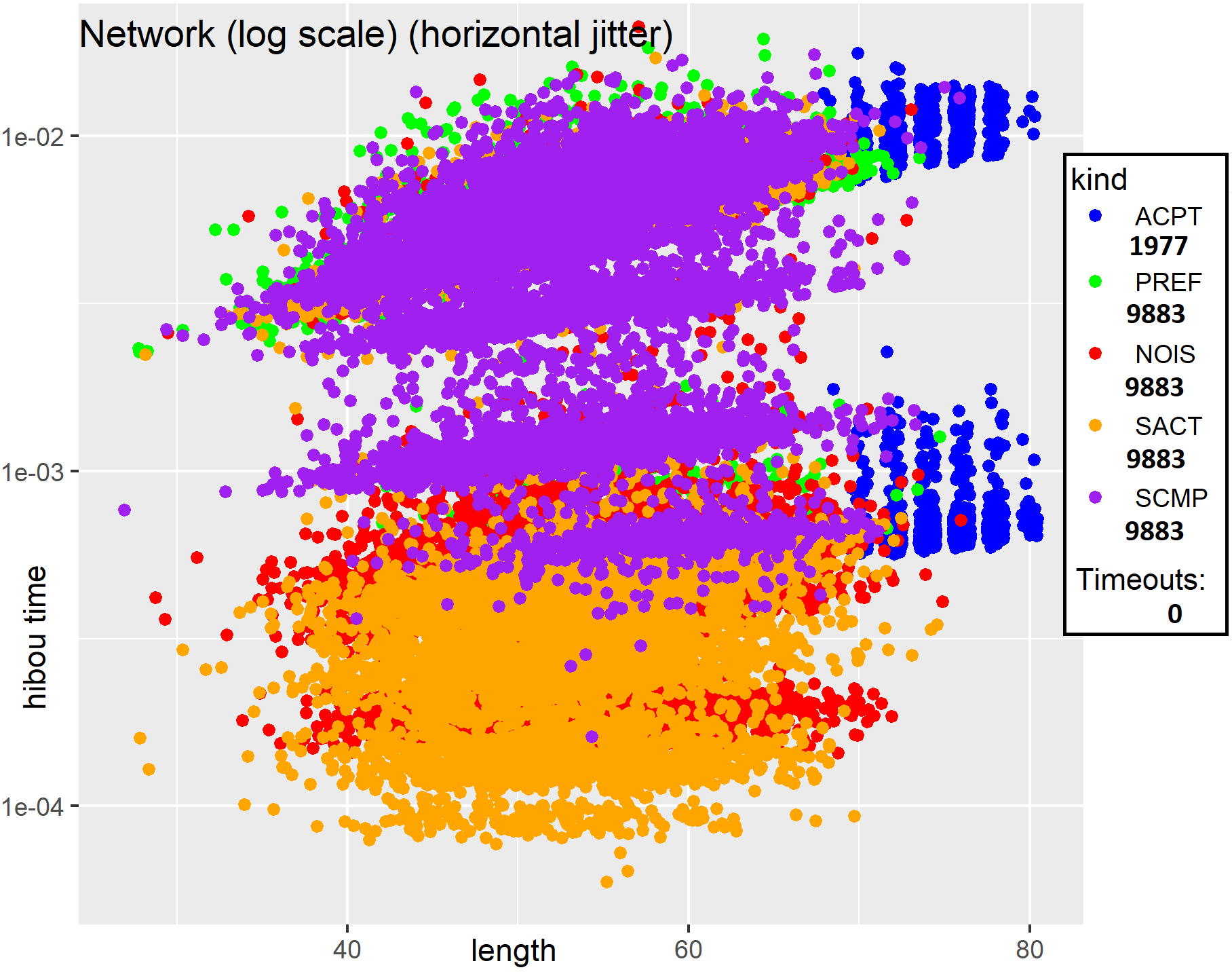}
            \caption{Network \cite{comprehensive_multiparty_session_types}}
        \end{subfigure} 
    \end{minipage}    
    \begin{minipage}{.49\linewidth}
    \begin{subfigure}[t]{.975\linewidth}
            \centering
            \includegraphics[width=\textwidth]{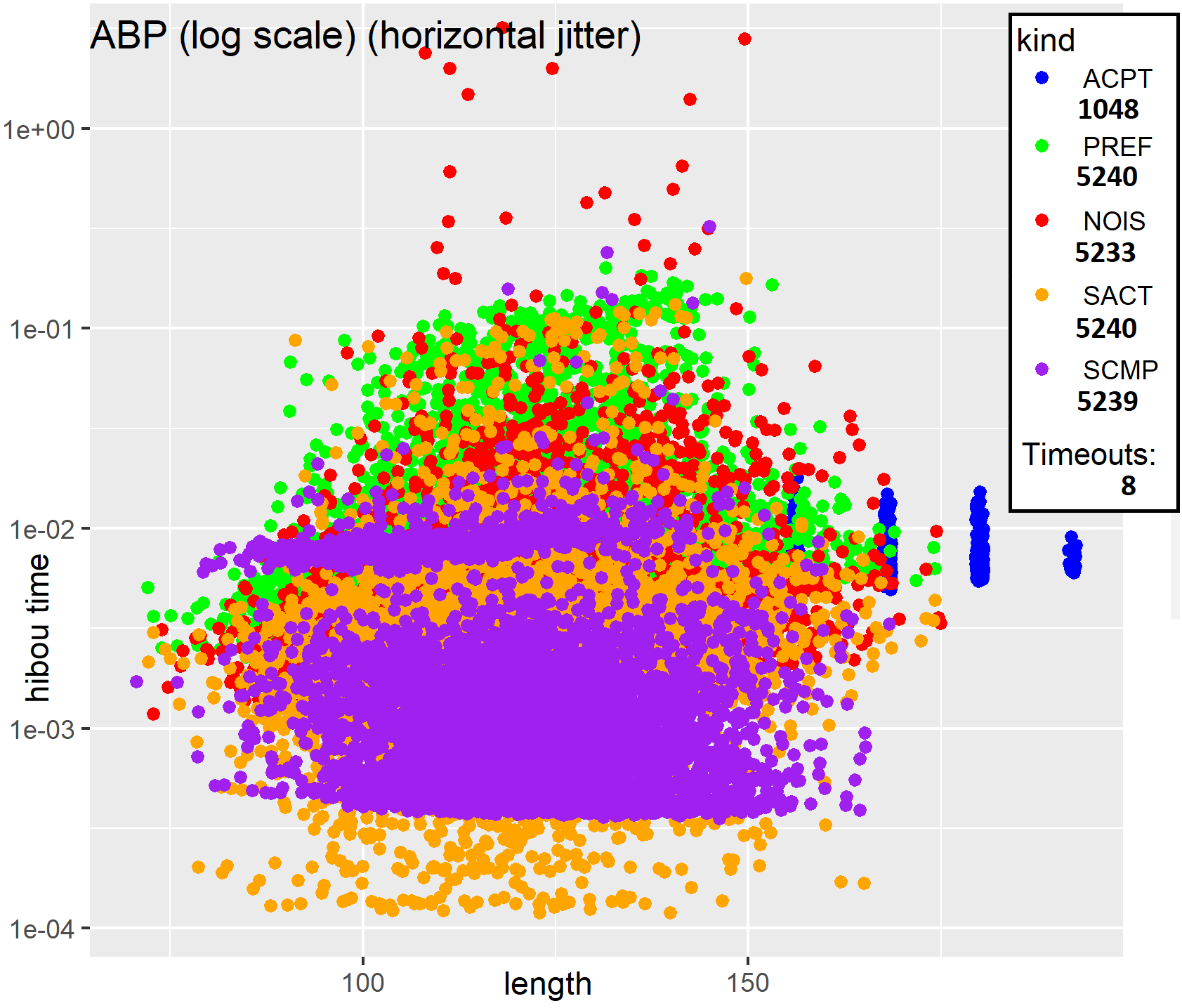}
            \caption{ABP \cite{high_level_message_sequence_charts}}
        \end{subfigure}
    \end{minipage}
    \caption{Experimental data on a selection of use cases (times in seconds)}
    \label{fig:exp_usecases}
    \vspace{-.8cm}
\end{figure}

\section{Related works\label{sec:related}}

Solutions to the oracle problem (offline RV) for DS using local logs often rely on a preliminary reordering of events using either timestamps \cite{passive_conformance_testing_of_service_choreographies} or some happened-before relations (of Lamport \cite{Lamport19b}) \cite{pivot_tracing_dynamic_causal_monitoring_for_distributed_systems,falcon_a_practical_log_based_analysis_tool_for_distributed_systems,constraint_based_oracles_for_timed_distributed_systems}.
In \cite{dioco_HieronsMN08,conformance_relations_for_distributed_testing_based_on_CSP_CavalcantiGH11,scenarios_based_testing_of_systems_with_distributed_ports} such solutions rely on a set of discrete and local behavioral models. DS behaviors are modeled by Input/Output Transition Systems (IOTS)~\cite{dioco_HieronsMN08,scenarios_based_testing_of_systems_with_distributed_ports} or by Communicating Sequential Processes (CSP)~\cite{conformance_relations_for_distributed_testing_based_on_CSP_CavalcantiGH11} and local observations are intertwined to associate them with global traces that can be analyzed w.r.t. models.
Those approaches however require to synchronize local observations, based on the states in which each of the logging processes terminates (e.g., based on quiescence states in~\cite{dioco_HieronsMN08}, termination/deadlocks in~\cite{conformance_relations_for_distributed_testing_based_on_CSP_CavalcantiGH11} or pre-specified synchronization points in~\cite{scenarios_based_testing_of_systems_with_distributed_ports}). 
The works~\cite{Hierons14,passive_conformance_testing_of_service_choreographies,monitoring_networks_through_multiparty_session_types_BocchiCDHY17,InckiA18} focus on verifying distributed executions against models of interaction (while \cite{Hierons14,InckiA18} concern MSC, \cite{passive_conformance_testing_of_service_choreographies} considers choreographic languages, \cite{monitoring_networks_through_multiparty_session_types_BocchiCDHY17} session types and \cite{coping_with_bad_agent_interaction_protocols_when_monitoring_partially_observable_multiagent_systems_AnconaFFM18} trace expressions). \cite{Hierons14,passive_conformance_testing_of_service_choreographies} propose offline RV that relies on synchronization hypotheses and on reconstructing a global trace by ordering events occurring at the distributed interfaces (by exploiting the observational power of testers~\cite{Hierons14} or timestamp information assuming clock synchronisation~\cite{passive_conformance_testing_of_service_choreographies}). Our RV approach for multi-traces does not require synchronization prerequisites on DS logging. Thus, unlike previous works on offline RV, we can analyze DS executions without the need for a synchronisation hypothesis on the ending of local observations. For online RV, the work~\cite{InckiA18} depends on a global component (network sniffer) while the work~\cite{monitoring_networks_through_multiparty_session_types_BocchiCDHY17} proposes local RV against projections of interactions satisfying conditions that enforce intended global behaviors.
By contrast to these works we process collections of local logs against interactions. The work~\cite{coping_with_bad_agent_interaction_protocols_when_monitoring_partially_observable_multiagent_systems_AnconaFFM18} focuses on how distributed monitors can be adapted for partial observation. Yet, our notion of partial observation is distinct from that of \cite{coping_with_bad_agent_interaction_protocols_when_monitoring_partially_observable_multiagent_systems_AnconaFFM18} where messages are exchanged via channels which are associated to an observability likelihood. \cite{coping_with_bad_agent_interaction_protocols_when_monitoring_partially_observable_multiagent_systems_AnconaFFM18} uses trace expressions as specifications and proposes transformations that can adapt those expressions to partial observation by removing or making optional a number of identified unobservable events. We instead deal with partial observability from the perspective of analyzing truncated multi-traces due to synchronization issues.

To address design issues, we can also mention that early works~\cite{realizability_and_verification_of_msc_graphs,pattern_matching_and_membership_for_hierarchical_message_sequence_charts} considered checking basic MSCs against HMSC (High-level MSC, which are graphs of MSCs) as an MSC membership problem. Roughly speaking, a basic MSC equates a multi-trace and may specify a desired or unwanted scenario.
Some MSCs are marked as accepting within an HMSC, and a basic MSC belongs to the semantics of the graph iff it fully covers accepting (finite) paths in the graph. Thus, partially observed multi-traces cannot be assessed against HMSC, which does not answer the RV problem under observability limitations. 
Logical properties have been widely used in (online) RV as reference specifications, in particular using the Linear Temporal Logic (LTL) whose semantics are generally given in the form of sets of traces. \cite{SenVAR04} extends a variant of LTL for which formulas relate to a subsystem and what it knows about the other subsystem' local states. It considers a collection of decentralized observers that share information about the subsystem executions that affect the validity of the formula. In other works \cite{Falcone16,El-HokayemF17}, the properties are expressed at the (global) system level and are transformed to decentralized observers, using LTL formula rewriting, so that there is no need for a global verifier gathering all information on the system's execution. 
Unlike logics, interactions which are particularly adapted for specifying DS are barely used in RV (see the specification part of the taxonomy of RV tools~\cite{a_taxonomy_for_classifying_runtime_verification_tools_FalconeKRT21}).

\section{Conclusion}
We have proposed offline RV for multi-traces, i.e., sets of local execution logs collected on the DS. These multi-traces are partial views of the DS execution either because some components are not observed or because observations ceased early on some others. We check multi-traces against interactions (akin to UML-SD/MSC).
We have proved the correctness of our offline RV algorithm that boils down to a graph search algorithm either by matching actions of the interaction against those of the input multi-trace or by applying the removal operations on multi-traces and interactions. Removal steps allow dealing with observability issues by enabling us to disregard
no longer observed parts of the interaction.
Future works include other uses of the removal operator and investigating online RV.

\bibliographystyle{splncs04}
\bibliography{main}

\clearpage 

\appendix

\section{Operational formulation of the semantics (Section~\ref{sec:interactions}) \label{anx:operational}}

With Prop.\ref{prop:operational_formulation_multitrace}, we state the existence of an operational formulation of the algebraic multi-trace semantics from Def.\ref{def:algebraic_multi_trace_semantics}. 
In this appendix we will provide one such formulation complete with a definition and a proof of equivalence. The formulation relies on the definition of two inductive predicates: a termination predicate $\downarrow$ and an execution relation $\rightarrow$. 

The demonstration below mimics that given in \cite{equivalence_of_denotational_and_operational_semantics_for_interaction_languages} involving a denotational semantics defined as sets of global traces. In our case, there is one less sechulding operator and the denotational semantics is defined with sets of multi-traces.

\subsection{Termination}

If an interaction can express the empty multi-trace $\varepsilon_L$ then it means that it can immediately terminate i.e. that it is able to not express anything anymore. The problem of whether or not an interaction $i$ can immediately terminate can be answered systematically via the analysis of the term structure of $i$.
We provide a solution in the form of the termination predicate "$\downarrow$" given on Def.\ref{def:termination_predicate}. The formulation of that predicate is inspired from process algebras as in \cite{high_level_message_sequence_charts,operational_semantics_for_msc}.

The $\downarrow$ predicate can be inferred inductively from the term structure of interactions:
\begin{itemize}
    \item naturally the empty interaction $\varnothing$ only accepts $\varepsilon_L$, and can only terminate. As a result, we have $\varnothing \downarrow$
    \item any loop accepts $\varepsilon_L$ because it is possible to repeat zero times its content. Therefore, for any $i \in \mathbb{I}(L)$, and any $k \in \{S,P\}$ we have $loop_k(i)\downarrow$
    \item for interactions of the form $alt(i_1,i_2)$, if either $i_1$ or $i_2$ terminates then $alt(i_1,i_2)$ terminates
    \item for interactions of the form $f(i_1,i_2)$ with $f$ being a scheduling constructor ($seq,par$) it is required that both $i_1$ and $i_2$ terminate for $f(i_1,i_2)$ to terminate
\end{itemize}

\begin{definition}[Termination "$\downarrow$" predicate\label{def:termination_predicate}]
We define inductively the predicate $\downarrow \subset \mathbb{I}(L)$ such that for any two interactions $i_1$ and $i_2$ from $\mathbb{I}(L)$, for any $f \in \{seq,par\}$ and for any $k \in \{S,P\}$ we have:

{
\centering
\begin{minipage}{1.5cm}
\begin{prooftree}
\AxiomC{\phantom{$\top$}}
\UnaryInfC{$\varnothing \downarrow$}
\end{prooftree}
\end{minipage}
\begin{minipage}{2.5cm}
\begin{prooftree}
\AxiomC{$i_1 \downarrow$}
\UnaryInfC{$alt(i_1,i_2) \downarrow$}
\end{prooftree}
\end{minipage}
\begin{minipage}{2.5cm}
\begin{prooftree}
\AxiomC{$i_2 \downarrow$}
\UnaryInfC{$alt(i_1,i_2) \downarrow$}
\end{prooftree}
\end{minipage}
\begin{minipage}{2.75cm}
\begin{prooftree}
\AxiomC{$i_1 \downarrow$}
\AxiomC{$i_2 \downarrow$}
\BinaryInfC{$f(i_1,i_2) \downarrow$}
\end{prooftree}
\end{minipage}
\begin{minipage}{2.5cm}
\begin{prooftree}
\AxiomC{\phantom{$\top$}}
\UnaryInfC{$loop_k(i_1) \downarrow$}
\end{prooftree}
\end{minipage}
\vspace*{.1cm}
}
\end{definition}

The termination predicate $\downarrow$ characterizes the fact that an interaction can express the empty multi-trace $\varepsilon_L$ and therefore that it is in its semantics. As a result we formulate and prove this in Lem.\ref{lem:sem_de_terminates}.

\begin{lemma}[Characterization of termination w.r.t. $\sigma_{|L}$\label{lem:sem_de_terminates}]
For any $i \in \mathbb{I}(L)$:
\[
(i \downarrow ) \Leftrightarrow (\varepsilon \in \sigma_{|L}(i))
\]
\end{lemma}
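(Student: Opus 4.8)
The plan is to prove the equivalence by structural induction on the interaction term $i$, matching each constructor of Definition~\ref{def:interaction_language} with its denotational interpretation (Definition~\ref{def:algebraic_multi_trace_semantics}) and the relevant clause(s) of the termination predicate (Definition~\ref{def:termination_predicate}). First I would dispatch the two base cases. For $i = \varnothing$ the axiom gives $\varnothing \downarrow$ while $\sigma_{|L}(\varnothing) = \{\varepsilon_L\}$ contains $\varepsilon_L$, so both sides hold; for $i = a$ with $a \in \mathbb{A}(L)$ no rule of $\downarrow$ can conclude $a \downarrow$ (the predicate simply has no clause for a bare action), and $\sigma_{|L}(a) = \{a \multiAppendLeft \varepsilon_L\}$ whose only element has cumulative length $1$, so $\varepsilon_L \notin \sigma_{|L}(a)$ and both sides fail. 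Thus the equivalence holds in both base cases.

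Before the inductive step I would isolate the one genuinely load-bearing observation: both $\multiSeq$ and $\multiInterleaving$ are \emph{length-additive}. Indeed, from the identity $|\mu \multiAppendRight a| = |a \multiAppendLeft \mu| = |\mu| + 1$ recorded after the concatenation operators, an easy induction shows $|\mu_1 \multiSeq \mu_2| = |\mu_1| + |\mu_2|$ and that every element of $\mu_1 \multiInterleaving \mu_2$ likewise has cumulative length $|\mu_1| + |\mu_2|$. Consequently, for either $\multiDiamond \in \{\multiSeq, \multiInterleaving\}$ and any sets $T_1, T_2 \subseteq \mathbb{M}(L)$, one gets $\varepsilon_L \in T_1 \multiDiamond T_2$ iff $\varepsilon_L \in T_1$ and $\varepsilon_L \in T_2$, since the only way to build a length-$0$ multi-trace with the operator is to combine two length-$0$ multi-traces.

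With this in hand the inductive cases are routine. For $alt$ I would use $\sigma_{|L}(alt(i_1,i_2)) = \sigma_{|L}(i_1) \cup \sigma_{|L}(i_2)$, so $\varepsilon_L$ belongs to it iff it belongs to one of the two, matching (via the induction hypotheses) the disjunctive pair of $alt$ rules. For $f(i_1,i_2)$ with $(f,\multiDiamond) \in \{(seq,\multiSeq),(par,\multiInterleaving)\}$ I would invoke the length-additivity observation to get $\varepsilon_L \in \sigma_{|L}(i_1) \multiDiamond \sigma_{|L}(i_2)$ iff $\varepsilon_L \in \sigma_{|L}(i_1)$ and $\varepsilon_L \in \sigma_{|L}(i_2)$, which by induction is exactly the conjunctive premise of the $f$ rule. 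Finally, for $loop_k(i_1)$ both sides hold unconditionally: the axiom gives $loop_k(i_1)\downarrow$, and $\sigma_{|L}(loop_k(i_1)) = \sigma_{|L}(i_1)^{\multiDiamond *}$ always contains $\varepsilon_L$ since $T^{\multiDiamond 0} = \{\varepsilon_L\}$ appears in the union defining the closure.

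The main obstacle, modest as it is, is precisely the length-additivity observation for $\multiSeq$ and especially $\multiInterleaving$, whose definition is recursive; once it is established, every remaining case reduces to matching the Boolean shape of the termination rule (constantly true for $\varnothing$ and $loop$, constantly false for actions, disjunctive for $alt$, conjunctive for $seq$ and $par$) against the membership condition dictated by the corresponding semantic operator.
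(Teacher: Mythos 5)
Your proposal is correct and follows essentially the same route as the paper: a structural induction on $i$ whose cases match the Boolean shape of the termination rules (true for $\varnothing$ and loops, false for actions, disjunctive for $alt$, conjunctive for $seq$ and $par$) against the corresponding semantic operator. The only difference is cosmetic: you make explicit the length-additivity of $\multiSeq$ and $\multiInterleaving$, which the paper's proof uses implicitly when it asserts that $\varepsilon_L \in (\mu_1 \multiSeq \mu_2)$ forces $\mu_1 = \varepsilon_L$ and $\mu_2 = \varepsilon_L$.
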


\begin{proof}
Let us prove the equivalence of both predicate by induction on the term structure of $i$.
\begin{itemize}
    \item If $i = \varnothing$ the empty interaction, then we have both $\varnothing\downarrow$ and $\varepsilon_L \in \sigma_{|L}(\varnothing)$.
    \item If $ i \in \mathbb{A}(L)$, we have neither $i \downarrow$ nor $\varepsilon_L \in \sigma_{|L}(i)$.
    \item Let us now suppose that $i$ is of the form $seq(i_1,i_2)$, with $i_1$ and $i_2$ two sub-interactions that satisfy the induction hypotheses $(i_1 \downarrow ) \Leftrightarrow (\varepsilon_L \in \sigma_{|L}(i_1))$ and $(i_2 \downarrow ) \Leftrightarrow (\varepsilon_L \in \sigma_{|L}(i_2))$.
    \begin{itemize}
        \item[$\Leftarrow$] Let us suppose that $\varepsilon_L \in \sigma_{|L}(i)$. By definition of $\sigma_{|L}$ for the $seq$ constructor, this implies the existence of $\mu_1 \in \sigma_{|L}(i_1)$ and $\mu_2 \in \sigma_{|L}(i_2)$ such that $\varepsilon_L \in (\mu_1 \multiSeq \mu_2)$. This implies that $\mu_1 = \varepsilon_L$ and $\mu_2 = \varepsilon_L$. We can therefore apply the induction hypotheses, to obtain that $i_1 \downarrow$ and $i_2 \downarrow$. This in turn means that $seq(i_1,i_2)\downarrow$ by definition of the termination predicate.
        \item[$\Rightarrow$] Reciprocally, if $seq(i_1,i_2) \downarrow$, this means that both $i_1\downarrow$ and $i_2\downarrow$. As per the induction hypotheses, this means that $\varepsilon_L \in \sigma_{|L}(i_1)$ and $\varepsilon_L \in \sigma_{|L}(i_2)$. Therefore $\varepsilon_L \in \sigma_{|L}(i)$.
    \end{itemize}
    \item For interactions of the form $par(i_1,i_2)$, the reasoning is the same as for the previous case except that we use properties on the operator $\globalInterleaving$.
    \item Let us now suppose that $i$ is of the form $alt(i_1,i_2)$, with $i_1$ and $i_2$ two sub-interactions that satisfy the induction hypotheses.
    \begin{itemize}
        \item[$\Leftarrow$] Let us suppose that $\varepsilon_L \in \sigma_{|L}(i)$. By definition of $\sigma_{|L}$, this means that either $\varepsilon_L \in \sigma_{|L}(i_1)$ or $\varepsilon_L \in \sigma_{|L}(i_2)$ or both.
        Let us suppose that it is in $\sigma_{|L}(i_1)$ (the other cases can be treated similarly).
        As per the induction hypothesis, we therefore have $i_1\downarrow$. Then, by definition of the termination predicate, this implies that given that $alt(i_1,i_2) \downarrow$.
        \item[$\Rightarrow$] Reciprocally, if $alt(i_1,i_2) \downarrow$, this means that either $i_1\downarrow$ or $i_2\downarrow$ (or both). Let us suppose we have $i_1\downarrow$. As per the induction hypothesis, this means that $\varepsilon_L \in \sigma_{|L}(i_1)$. Therefore $\varepsilon_L \in \sigma_{|L}(i)$.
    \end{itemize}
    \item Let us finally consider the case where $i$ is of the form $loop_k(i_1)$, with $k \in \{S,P\}$. By definition, we always have $i\downarrow$ and $\varepsilon_L \in \sigma_{|L}(i)$.
\end{itemize}
\qed 
\end{proof}

\subsection{Execution relation \& operational-style semantics}

We define an execution relation $\rightarrow$ for our interaction language in Def.\ref{def:interaction_execution_relation}.

\begin{definition}[Execution relation\label{def:interaction_execution_relation}]
We define the execution relation $\rightarrow \subset \mathbb{I}(L) \times \mathbb{A}(L)  \times \mathbb{I}(L)$ such that for any action $a \in \mathbb{A}(L)$, for any interactions $i$, $i_1$, $i_1'$, $i_2'$ in $\mathbb{I}(L)$:

\begin{minipage}{6cm}
\begin{prooftree}
\AxiomC{\phantom{$\xrightarrow{a}$}}
\UnaryInfC{$a \xrightarrow{a} \varnothing$}
\end{prooftree}
\end{minipage}

\vspace{0.1cm}

\begin{minipage}{6cm}
\begin{prooftree}
\AxiomC{$i_1 \xrightarrow{a} i'_1$}
\UnaryInfC{$alt(i_1,i_2) \xrightarrow{a} i'_1$}
\end{prooftree}
\end{minipage}
\begin{minipage}{6cm}
\begin{prooftree}
\AxiomC{$i_2 \xrightarrow{a} i'_2$}
\UnaryInfC{$alt(i_1,i_2) \xrightarrow{a} i'_2$}
\end{prooftree}
\end{minipage}

\vspace{0.1cm}

\begin{minipage}{6cm}
\begin{prooftree}
\AxiomC{$i_1 \xrightarrow{a} i'_1$}
\UnaryInfC{$par(i_1,i_2) \xrightarrow{a} par(i'_1,i_2)$}
\end{prooftree}
\end{minipage}
\begin{minipage}{6cm}
\begin{prooftree}
\AxiomC{$i_2 \xrightarrow{a} i'_2$}
\UnaryInfC{$par(i_1,i_2) \xrightarrow{a} par(i_1,i'_2)$}
\end{prooftree}
\end{minipage}

\vspace{0.1cm}

\begin{minipage}{6cm}
\begin{prooftree}
\AxiomC{$i_1 \xrightarrow{a} i'_1$}
\UnaryInfC{$seq(i_1,i_2) \xrightarrow{a} seq(i'_1,i_2)$}
\end{prooftree}
\end{minipage}
\begin{minipage}{6cm}
\begin{prooftree}
\AxiomC{$i_2 \xrightarrow{a} i'_2$}
\RightLabel{$i_1 \downarrow$}
\UnaryInfC{$seq(i_1,i_2) \xrightarrow{a} i'_2$}
\end{prooftree}
\end{minipage}

\vspace{0.1cm}

\begin{minipage}{6cm}
\begin{prooftree}
\AxiomC{$i_1 \xrightarrow{a} i_1'$}
\UnaryInfC{$loop_S(i_1) \xrightarrow{a} seq(i_1',loop_S(i_1))$}
\end{prooftree}
\end{minipage}
\begin{minipage}{6cm}
\begin{prooftree}
\AxiomC{$i_1 \xrightarrow{a} i_1'$}
\UnaryInfC{$loop_P(i_1) \xrightarrow{a} par(i_1',loop_P(i_1))$}
\end{prooftree}
\end{minipage}

\vspace{0.1cm}

\end{definition}

This execution relation defines, for any interaction, which of its actions can be executed, and, if so, which interactions may result from those executions. This constitutes the "small-step" of a small-step operational semantics which we define in Def.\ref{def:interaction_operational_semantics}.

\begin{definition}[Operational semantics\label{def:interaction_operational_semantics}]
For any signature $L$, we define $\sigma_{o|L} : \mathbb{I}(L) \rightarrow \mathcal{P}(\mathbb{M}(L))$ by:

\begin{minipage}{4cm}
\begin{prooftree}
\AxiomC{$i \downarrow$}
\UnaryInfC{$\varepsilon_L \in \sigma_{o|L}(i)$}
\end{prooftree}
\end{minipage}
\begin{minipage}{6cm}
\begin{prooftree}
\AxiomC{$\mu \in \sigma_{o|L}(i')$}
\AxiomC{$i \xrightarrow{a} i'$}
\BinaryInfC{$a ~\multiAppendLeft~ \mu \in \sigma_{o|L}(i)$}
\end{prooftree}
\end{minipage}

\end{definition}

In the following, we will prove that this operational formulation is equivalent to the denotational formulation from Def.\ref{def:algebraic_multi_trace_semantics} i.e. that for any $i \in \mathbb{I}(L)$ we have $\sigma_{o|L}(i) = \sigma_{|L}(i)$, which justifies Prop.\ref{prop:operational_formulation_multitrace}. Note that, unlike Prop.\ref{prop:operational_formulation_multitrace}, we take care here to give another name to the operational semantics (by adding a subscript $o$ to $\sigma$). The introduction of a second notation makes it much easier to prove the equivalence of the two semantics by double inclusion.

\subsection{Left inclusion}

\begin{lemma}[Characterization (left side) of $\rightarrow$ w.r.t. $\sigma_{|L}$\label{lem:sem_de_execute1}]
For any action $a \in \mathbb{A}(L)$, for any multi-trace $\mu \in \mathbb{M}(L)$ and for any interactions $i$ and $i'$ from $\mathbb{I}(L)$:
\[
\left(
\begin{array}{c}
(i \xrightarrow{a} i')
\wedge 
(\mu \in \sigma_{|L}(i'))
\end{array}
\right)
\Rightarrow 
(a ~\multiAppendLeft~ \mu \in \sigma_{|L}(i))
\]
\end{lemma}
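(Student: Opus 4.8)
The plan is to prove the implication by structural induction on the derivation of $i \xrightarrow{a} i'$, following the inductive rules of Definition~\ref{def:interaction_execution_relation}. Because the statement is universally quantified over $\mu$, the induction hypothesis may be reused with a \emph{different} multi-trace in each premise, which is precisely what the compositional cases require. Before the case analysis I would record a few elementary facts about the scheduling operators, each provable directly from their inductive definitions: (i) left-append distributes over sequencing on its left operand, $a ~\multiAppendLeft~ (\mu_1 \multiSeq \mu_2) = (a ~\multiAppendLeft~ \mu_1) \multiSeq \mu_2$, both sides being the multi-trace whose $\theta(a)$-component is $a.\mu_{1|\theta(a)}.\mu_{2|\theta(a)}$ and which agrees with $\mu_1 \multiSeq \mu_2$ on the other lifelines; (ii) $\varepsilon_L \multiSeq \mu = \mu$; (iii) the absorptions $a ~\multiAppendLeft~ (\mu_1 \multiInterleaving \mu_2) \subseteq (a ~\multiAppendLeft~ \mu_1) \multiInterleaving \mu_2$ and $a ~\multiAppendLeft~ (\mu_1 \multiInterleaving \mu_2) \subseteq \mu_1 \multiInterleaving (a ~\multiAppendLeft~ \mu_2)$, read off from the first (resp. second) summand of the recursive clause defining $\multiInterleaving$; and (iv) $T \multiSeq T^{\multiSeq *} \subseteq T^{\multiSeq *}$ and $T \multiInterleaving T^{\multiInterleaving *} \subseteq T^{\multiInterleaving *}$, immediate from the definition of the Kleene closures.

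The base case $a \xrightarrow{a} \varnothing$ is immediate: $\mu \in \sigma_{|L}(\varnothing) = \{\varepsilon_L\}$ forces $\mu = \varepsilon_L$, and $a ~\multiAppendLeft~ \varepsilon_L$ is the single element of $\sigma_{|L}(a)$. For the $alt$ rules, $\sigma_{|L}(alt(i_1,i_2)) = \sigma_{|L}(i_1) \cup \sigma_{|L}(i_2)$, so applying the induction hypothesis to the active branch and then including that branch's semantics into the union closes both cases. For the $seq$-left rule, $\mu$ factors as $\mu_1 \multiSeq \mu_2$ with $\mu_1 \in \sigma_{|L}(i'_1)$ and $\mu_2 \in \sigma_{|L}(i_2)$; the induction hypothesis gives $a ~\multiAppendLeft~ \mu_1 \in \sigma_{|L}(i_1)$, and fact (i) rewrites $a ~\multiAppendLeft~ \mu$ as $(a ~\multiAppendLeft~ \mu_1) \multiSeq \mu_2 \in \sigma_{|L}(i_1) \multiSeq \sigma_{|L}(i_2)$. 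The $seq$-right rule is where termination enters: from $i_1 \downarrow$ and Lemma~\ref{lem:sem_de_terminates} we obtain $\varepsilon_L \in \sigma_{|L}(i_1)$, the induction hypothesis gives $a ~\multiAppendLeft~ \mu \in \sigma_{|L}(i_2)$, and fact (ii) yields $a ~\multiAppendLeft~ \mu = \varepsilon_L \multiSeq (a ~\multiAppendLeft~ \mu) \in \sigma_{|L}(i_1) \multiSeq \sigma_{|L}(i_2)$.

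The $par$ and $loop$ cases reuse the same scheme with the matching operator. For $par$-left, $\mu \in \mu_1 \multiInterleaving \mu_2$ with $\mu_1 \in \sigma_{|L}(i'_1)$ and $\mu_2 \in \sigma_{|L}(i_2)$; the induction hypothesis gives $a ~\multiAppendLeft~ \mu_1 \in \sigma_{|L}(i_1)$, and the first absorption of (iii) promotes $a ~\multiAppendLeft~ \mu$ into $(a ~\multiAppendLeft~ \mu_1) \multiInterleaving \mu_2 \subseteq \sigma_{|L}(par(i_1,i_2))$; $par$-right is symmetric via the second absorption. For $loop_S$, where $i' = seq(i'_1, loop_S(i_1))$, I would factor $\mu = \mu_1 \multiSeq \nu$ with $\mu_1 \in \sigma_{|L}(i'_1)$ and $\nu \in \sigma_{|L}(i_1)^{\multiSeq *}$, apply the induction hypothesis to the premise $i_1 \xrightarrow{a} i'_1$, and combine facts (i) and (iv) to land $a ~\multiAppendLeft~ \mu = (a ~\multiAppendLeft~ \mu_1) \multiSeq \nu$ in $\sigma_{|L}(i_1)^{\multiSeq *} = \sigma_{|L}(loop_S(i_1))$; $loop_P$ is the interleaving analogue.

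I expect the interleaving absorption (iii) to be the only genuinely delicate point. Unlike sequencing, where $\mu_1 \multiSeq \mu_2$ is a single multi-trace and (i) is an equality, the interleaving of two multi-traces is a \emph{set}, so the required inclusion must be argued by unfolding the recursive clause of $\multiInterleaving$ and distinguishing whether the relevant operand is empty, rather than by a plain rewrite. Everything else is routine bookkeeping over the compositional definition of $\sigma_{|L}$ from Definition~\ref{def:algebraic_multi_trace_semantics}.
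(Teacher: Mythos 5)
Your proposal is correct and follows essentially the same route as the paper's own proof: induction on the derivation of $i \xrightarrow{a} i'$ with one case per rule of the execution relation, invoking Lemma~\ref{lem:sem_de_terminates} for the $seq$-right case and the left-append/absorption properties of $\multiSeq$, $\multiInterleaving$ and their Kleene closures everywhere else. The only difference is presentational: you isolate facts (i)--(iv) upfront (and correctly flag the interleaving absorption as the one needing a case split on emptiness of the second operand), whereas the paper uses these implicitly inline (``it is always possible to add actions from the left'').
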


\begin{proof}
Let us consider $i$ and $i'$ in $\mathbb{I}(L)$ and $a$ in $\mathbb{A}(L)$ and $\mu \in \mathbb{M}(L)$.
Let us then suppose that $i \xrightarrow{a} i'$ and that $\mu \in \sigma_{|L}(i')$.
Let us then reason by induction on the cases that makes the hypothesis $i \xrightarrow{a} i'$ possible.
\begin{enumerate}
    \item when executing an atomic action, we have $i \in \mathbb{A}(L)$ and $i' = \varnothing$. Then $\sigma_{|L}(i) = \{i\}$ and $\sigma_{|L}(\varnothing) = \{\varepsilon_L\}$. The property $i ~\multiAppendLeft~ \varepsilon_L = i \in \sigma_{|L}(i)$ holds.
    \item when executing an action on the left of an alternative, we have $i$ of the form $alt(i_1,i_2)$, and $i' = i_1'$ such that $i_1 \xrightarrow{a} i_1'$. By construction of $\sigma_{|L}$, we have that $\mu \in \sigma_{|L}(i_1')$. By the induction hypothesis on the sub-interaction $i_1$, we have that $a ~\multiAppendLeft~ \mu \in \sigma_{|L}(i_1)$. Given that $\sigma_{|L}(i_1) \subset \sigma_{|L}(i)$, the property holds.
    \item executing actions on the right of an $alt$ can be treated similarly
    \item when executing an action on the left of a $par$, we have $i$ of the form $par(i_1,i_2)$, and $i' = par(i_1',i_2)$ such that $i_1 \xrightarrow{a} i_1'$. We have that $\mu \in \sigma_{|L}(par(i_1',i_2))$. By definition of $\sigma_{|L}$, we have that there exist $(\mu_1',\mu_2) \in \sigma_{|L}(i_1') \times \sigma_{|L}(i_2)$ s.t. $\mu \in (\mu_1' \multiInterleaving \mu_2)$. Therefore we have $i_1 \xrightarrow{a} i_1'$ and $\mu_1' \in \sigma_{|L}(i_1')$. Hence we can apply the induction hypothesis on sub-interaction $i_1$, which implies that $a ~\multiAppendLeft~ \mu_1' \in \sigma_{|L}(i_1)$. Given that $\sigma_{|L}(par(i_1,i_2))$ is the union of all the $(\mu_\alpha \multiInterleaving \mu_\beta)$ with $\mu_\alpha$ and $\mu_\beta$ multi-traces from $i_1$ and $i_2$, we have that $((a ~\multiAppendLeft~ \mu_1') \multiInterleaving \mu_2) \subset \sigma_{|L}(i)$. In particular, we know that $\mu \in (\mu_1' \multiInterleaving \mu_2)$, so, by definition of the $\multiInterleaving$ operator, we have that $a ~\multiAppendLeft~ \mu \in ((a ~\multiAppendLeft~\mu_1') \multiInterleaving \mu_2)$. Therefore the property holds.
    \item executing actions on the right of a $par$ can be treated similarly
    \item when executing an action on the left of a $seq$, we have $i$ of the form $seq(i_1,i_2)$, and $i' = seq(i_1',i_2)$ such that $i_1 \xrightarrow{a} i_1'$. We have that $\mu \in \sigma_{|L}(seq(i_1',i_2))$. By definition of $\sigma_{|L}$, we have that there exist $(\mu_1',\mu_2) \in \sigma_{|L}(i_1') \times \sigma_{|L}(i_2)$ s.t. $\mu \in (\mu_1' \multiSeq \mu_2)$. Therefore we have $i_1 \xrightarrow{a} i_1'$ and $\mu_1' \in \sigma_{|L}(i_1')$. Hence we can apply the induction hypothesis on sub-interaction $i_1$, which implies that $a ~\multiAppendLeft~ \mu_1' \in \sigma_{|L}(i_1)$. Given that $\sigma_{|L}(seq(i_1,i_2))$ is the union of all the $(\mu_\alpha \multiSeq \mu_\beta)$ with $\mu_\alpha$ and $\mu_\beta$ multi-traces from $i_1$ and $i_2$, we have that $((a ~\multiAppendLeft~ \mu_1') \multiSeq \mu_2) \subset \sigma_{|L}(i)$. In particular, we know that $\mu \in (\mu_1' \multiSeq \mu_2)$, so, by definition of the $\multiSeq$ operator, we have that $a ~\multiAppendLeft~ \mu \in ((a ~\multiAppendLeft~ \mu_1') \multiSeq \mu_2)$. Therefore the property holds.
    \item when executing an action on the right of a $seq$, we have $i$ of the form $seq(i_1,i_2)$, and $i' = i_2'$ such that $i_2 \xrightarrow{a} i_2'$ with the added hypothesis that $i_1\downarrow$. We have that $\mu \in \sigma_{|L}(i_2')$. Therefore we have $i_2 \xrightarrow{a} i_2'$ and $\mu \in \sigma_{|L}(i_2')$. Hence we can apply the induction hypothesis on sub-interaction $i_2$, which implies that $a ~\multiAppendLeft~ \mu \in \sigma_{|L}(i_2)$. Given that $\sigma_{|L}(seq(i_1,i_2))$ includes $\sigma_{|L}(i_2)$ when $i_1\downarrow$, and given that we know $i_1\downarrow$ to be true, the property holds.
    \item when executing an action underneath a $loop_S$, we have $i$ of the form $loop_S(i_1)$ and $i' = seq(i_1',loop_S(i_1))$ such that $i_1 \xrightarrow{a} i_1'$. We have that $\mu \in \sigma_{|L}(i')$. Therefore there exists $\mu_1 \in \sigma_{|L}(i_1')$ and $\mu_2 \in \sigma_{|L}(i)$ s.t. $\mu \in (\mu_1 \multiSeq \mu_2)$.
    \begin{itemize}
        \item We have $i_1 \xrightarrow{a} i_1'$ and $\mu_1 \in \sigma_{|L}(i_1')$. Hence we can apply the induction hypothesis on sub-interaction $i_1$, which implies that $a ~\multiAppendLeft~ \mu_1 \in \sigma_{|L}(i_1)$.
        \item As a result, given that $\mu_2 \in \sigma_{|L}(loop_S(i_1)) = \sigma_{|L}(i_1)^{\multiSeq *}$, and $a ~\multiAppendLeft~ \mu_1 \in \sigma_{|L}(i_1)$, we have, $((a ~\multiAppendLeft~ \mu_1) \multiSeq \mu_2) \subset \sigma_{|L}(i_1)^{\multiSeq *}$ i.e. $((a ~\multiAppendLeft~ \mu_1) \multiSeq \mu_2) \subset \sigma_{|L}(i)$
        \item Also, given that $\mu \in (\mu_1 \multiSeq \mu_2)$, we have immediately that $a ~\multiAppendLeft~ \mu \in ((a ~\multiAppendLeft~ \mu_1) \multiSeq \mu_2))$ because it is always possible to add actions from the left.
        \item Therefore $a ~\multiAppendLeft~ \mu \in \sigma_{|L}(i)$, so the property holds.
    \end{itemize}
    \item when executing an action underneath a $loop_P$, we have $i$ of the form $loop_P(i_1)$ and $i' = par(i_1',loop_P(i_1))$ such that $i_1 \xrightarrow{a} i_1'$. We have that $\mu \in \sigma_{|L}(i')$. Therefore there exists $\mu_1 \in \sigma_{|L}(i_1')$ and $\mu_2 \in \sigma_{|L}(i)$ s.t. $\mu \in (\mu_1 \multiInterleaving \mu_2)$.
    \begin{itemize}
        \item We have $i_1 \xrightarrow{a} i_1'$ and $\mu_1 \in \sigma_{|L}(i_1')$. Hence we can apply the induction hypothesis on sub-interaction $i_1$, which implies that $a ~\multiAppendLeft~ \mu_1 \in \sigma_{|L}(i_1)$.
        \item As a result, given that $\mu_2 \in \sigma_{|L}(loop_P(i_1)) = \sigma_{|L}(i_1)^{\multiInterleaving *}$, and $a ~\multiAppendLeft~ \mu_1 \in \sigma_{|L}(i_1)$, we have, $(a ~\multiAppendLeft~ \mu_1 \multiInterleaving \mu_2) \subset \sigma_{|L}(i_1)^{\multiInterleaving *}$ i.e. $(a ~\multiAppendLeft~ \mu_1 \multiInterleaving \mu_2) \subset \sigma_{|L}(i)$
        \item Also, given that $\mu \in (\mu_1 \multiInterleaving \mu_2)$, we have immediately that $a ~\multiAppendLeft~ \mu \in (a ~\multiAppendLeft~ \mu_1 \multiInterleaving \mu_2)$ because it is always possible to add actions from the left.
        \item Therefore $a ~\multiAppendLeft~ \mu \in \sigma_{|L}(i)$, so the property holds.
    \end{itemize}
\end{enumerate}
\qed 
\end{proof}

Thanks to the previous Lemma (Lem.\ref{lem:sem_de_execute1}) as well as the characterization from Lem.\ref{lem:sem_de_terminates}, we can conclude on the inclusion of the $\sigma_{o|L}$ semantics into the $\sigma_{|L}$ semantics. Indeed, those two Lemmas state that the $\sigma_{|L}$ semantics accepts the same two construction rules (that for the empty multi-trace $\varepsilon_L$ and that for non empty multi-traces) as those that define $\sigma_{o|L}$ inductively. As a result any multi-trace that might be accepted according to $\sigma_{o|L}$ must also be accepted according to $\sigma_{|L}$. However, it does not imply the reciprocate (i.e. whether or not $\sigma_{|L}$ is included in $\sigma_{o|L}$). Indeed, it may be so that, if it were formulated using construction rules, $\sigma_{|L}$ would also verify some other construction rules in addition to the aforementioned two, which would allow the acceptation of some more traces.

\begin{theorem}[Inclusion of $\sigma_{o|L}$ in $\sigma_{|L}$\label{th:sem_op_included_in_sem_de}]
For any interaction $i \in \mathbb{I}(L)$:
\[
\sigma_{o|L}(i) \subset \sigma_{|L}(i)
\]
\end{theorem}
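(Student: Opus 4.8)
The plan is to prove the inclusion by induction on the cumulative length $|\mu|$ of the multi-trace (equivalently, by structural induction on the derivation witnessing $\mu \in \sigma_{o|L}(i)$). The central observation is that the two construction rules defining $\sigma_{o|L}$ in Definition~\ref{def:interaction_operational_semantics} are exactly mirrored by the two results already established: the termination rule corresponds to Lemma~\ref{lem:sem_de_terminates}, and the action-prefixing rule corresponds to Lemma~\ref{lem:sem_de_execute1}. So I would take an arbitrary $\mu \in \sigma_{o|L}(i)$ and distinguish cases according to which rule produced it, using the appropriate lemma to transport membership from $\sigma_{o|L}$ to $\sigma_{|L}$.

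For the base case, suppose $\mu = \varepsilon_L$. The only rule that can derive $\varepsilon_L \in \sigma_{o|L}(i)$ is the termination rule, whose premise is $i \downarrow$. Applying the left-to-right direction of Lemma~\ref{lem:sem_de_terminates} then yields $\varepsilon_L \in \sigma_{|L}(i)$, which closes this case.

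For the inductive step, suppose $\mu \neq \varepsilon_L$, so that $\mu$ can be written $a ~\multiAppendLeft~ \mu'$ for some $a \in \mathbb{A}(L)$ and some $\mu'$ with $|\mu'| = |\mu| - 1$. A non-empty multi-trace can only be produced by the second rule of $\sigma_{o|L}$, hence there exists $i'$ with $i \xrightarrow{a} i'$ and $\mu' \in \sigma_{o|L}(i')$. Since $|\mu'| < |\mu|$, the induction hypothesis applies to the pair $(i', \mu')$ and gives $\mu' \in \sigma_{|L}(i')$. Instantiating Lemma~\ref{lem:sem_de_execute1} with $i \xrightarrow{a} i'$ and $\mu' \in \sigma_{|L}(i')$ then yields $a ~\multiAppendLeft~ \mu' = \mu \in \sigma_{|L}(i)$, as required.

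I expect no serious obstacle: the real content has been front-loaded into Lemmas~\ref{lem:sem_de_terminates} and~\ref{lem:sem_de_execute1}, and this theorem merely assembles them along the inductive structure of $\sigma_{o|L}$. The only points requiring care are the well-foundedness of the induction, which is guaranteed because each application of the prefixing rule strictly decreases the measure (recall $|a ~\multiAppendLeft~ \mu'| = |\mu'| + 1$), and the fact that the two rules are mutually exclusive on a given $\mu$ according to whether $\mu$ is empty, so the case split is exhaustive and non-overlapping. I would also stress, as the surrounding discussion already notes, that this argument yields only one inclusion and does not establish the converse $\sigma_{|L}(i) \subset \sigma_{o|L}(i)$, which must be treated separately.
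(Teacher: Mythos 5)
Your proof is correct and matches the paper's own argument essentially line for line: induction on the size of $\mu$, with the base case $\mu = \varepsilon_L$ discharged by Lemma~\ref{lem:sem_de_terminates} and the inductive step by inverting the second rule of $\sigma_{o|L}$ and applying Lemma~\ref{lem:sem_de_execute1}. Your added remarks on well-foundedness and the exhaustive, non-overlapping case split are sound refinements of what the paper leaves implicit.
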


\begin{proof}
Let us consider $i \in \mathbb{I}(L)$ and $\mu \in \sigma_{o|L}(i)$ and let us reason by induction on $\mu$.
\begin{itemize}
    \item If $\mu = \varepsilon_L$, then, as per the definition of $\sigma_{o|L}$, this means that $i\downarrow$. Then as per Lem.\ref{lem:sem_de_terminates}, this means that $\varepsilon_L \in \sigma_{|L}(i)$.
    \item If $\mu \neq \varepsilon_L$ then, by definition of $\sigma_{o|L}$, there exists $a \in \mathbb{A}(L)$, $\mu' \in \mathbb{M}(L)$ and $i' \in \mathbb{I}(L)$ s.t. $\mu = a ~\multiAppendLeft~ \mu'$, $i \xrightarrow{a} i'$ and $\mu' \in \sigma_{o|L}(i')$. By the induction hypothesis on $\mu'$, we have $(\mu' \in \sigma_{o|L}(i')) \Rightarrow (\mu' \in \sigma_{|L}(i'))$. As a result, we have $i \xrightarrow{a} i'$ and $\mu' \in \sigma_{|L}(i')$. We can therefore apply Lem.\ref{lem:sem_de_execute1} to conclude that $\mu = a ~\multiAppendLeft~ \mu' \in \sigma_{|L}(i)$. Hence the property holds.
\end{itemize}
\qed 
\end{proof}

\subsection{Right inclusion\label{anx:hyp_right}}

\begin{lemma}[Characterization (right side) of $\rightarrow$ w.r.t. $\sigma_{|L}$\label{lem:sem_de_execute2}]
For any multi-trace $\mu \in \mathbb{M}(L)$ and for any interaction $i \in \mathbb{I}(L)$:
\[
\left(
\begin{array}{l}
(\mu \in \sigma_{|L}(i))\\
\wedge (\mu \neq \varepsilon_L)
\end{array}
\right)
\Rightarrow
\left(
\exists~
\left\{
\begin{array}{l}
a \in \mathbb{A}(L),\\
\mu' \in \mathbb{M}(L),\\
\exists~i' \in \mathbb{I}(L)
\end{array}
\right\}
,~
\begin{array}{l}
(\mu = a ~\multiAppendLeft~ \mu'),\\
\wedge (i \xrightarrow{a} i')\\
\wedge (\mu' \in \sigma_{|L}(i'))
\end{array}
\right)
\]
\end{lemma}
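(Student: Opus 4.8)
I would prove the statement by structural induction on the interaction term $i$, running the reasoning of the left-side characterisation (Lemma~\ref{lem:sem_de_execute1}) backwards: from a non-empty $\mu \in \sigma_{|L}(i)$ one must now \emph{extract} a first executable action and a matching residual rather than reassemble one. The one point to keep in mind is that a multi-trace has as many candidate heads as it has non-empty components, so a decomposition $\mu = a ~\multiAppendLeft~ \mu'$ is available exactly for those $a$ that begin their own lifeline's trace in $\mu$. The base cases are immediate: for $i=\varnothing$ the hypothesis $\mu\neq\varepsilon_L$ is never met since $\sigma_{|L}(\varnothing)=\{\varepsilon_L\}$; for $i=a\in\mathbb{A}(L)$ we take $\mu'=\varepsilon_L$ and $i'=\varnothing$, using $a\xrightarrow{a}\varnothing$. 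For $i=alt(i_1,i_2)$, $\mu$ lies in $\sigma_{|L}(i_1)$ or $\sigma_{|L}(i_2)$; the induction hypothesis on the relevant operand yields the decomposition, and the corresponding $alt$ rule of Definition~\ref{def:interaction_execution_relation} lifts the transition to $alt(i_1,i_2)$.

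\textbf{Sequencing.} For $i=seq(i_1,i_2)$ write $\mu\in\mu_1\multiSeq\mu_2$ with $\mu_1\in\sigma_{|L}(i_1)$, $\mu_2\in\sigma_{|L}(i_2)$, and split on $\mu_1$. If $\mu_1\neq\varepsilon_L$, the induction hypothesis on $i_1$ gives $\mu_1=a~\multiAppendLeft~\nu_1$, $i_1\xrightarrow{a}i_1'$, $\nu_1\in\sigma_{|L}(i_1')$; since $\multiSeq$ concatenates on the right it commutes with the left prepend, so $\mu=a~\multiAppendLeft~(\nu_1\multiSeq\mu_2)$, the left $seq$ rule gives $i\xrightarrow{a}seq(i_1',i_2)$, and $\nu_1\multiSeq\mu_2\in\sigma_{|L}(seq(i_1',i_2))$. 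If $\mu_1=\varepsilon_L$ then $\mu=\mu_2\neq\varepsilon_L$ and $\varepsilon_L\in\sigma_{|L}(i_1)$, hence $i_1\downarrow$ by Lemma~\ref{lem:sem_de_terminates}; the induction hypothesis on $i_2$ supplies the decomposition and the right $seq$ rule, whose side condition $i_1\downarrow$ is now discharged, concludes.

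\textbf{Parallel and loops (the crux).} For $i=par(i_1,i_2)$ we have $\mu\in\mu_1\multiInterleaving\mu_2$. Here lies the difficulty absent in the global-trace setting of the cited reference: an interleaving may let either operand contribute the first action on a given lifeline, so one cannot blindly apply the induction hypothesis to a fixed operand and expect the action it returns to be a head of $\mu$. My plan is to unfold the recursive definition of $\multiInterleaving$: as $\mu\neq\varepsilon_L$, its last construction step prepends a head $a$ coming from one operand, say $\mu_1=a~\multiAppendLeft~\nu_1$ with $\mu=a~\multiAppendLeft~\mu'$ and $\mu'\in\nu_1\multiInterleaving\mu_2$, so that $a$ is simultaneously a head of $\mu$ and of $\mu_1$. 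Applying the induction hypothesis to $(i_1,\mu_1)$ \emph{for this very head} $a$ gives $i_1\xrightarrow{a}i_1'$ and $\nu_1\in\sigma_{|L}(i_1')$; the $par$ rule then yields $i\xrightarrow{a}par(i_1',i_2)$, and $\mu'\in\nu_1\multiInterleaving\mu_2\subseteq\sigma_{|L}(par(i_1',i_2))$. The two loops reduce to this: a non-empty $\mu\in\sigma_{|L}(loop_S(i_1))=\sigma_{|L}(i_1)^{\multiSeq *}$ is, after discarding empty iterations, of the form $\mu\in\mu_a\multiSeq\rho$ with $\mu_a\in\sigma_{|L}(i_1)\setminus\{\varepsilon_L\}$ and $\rho\in\sigma_{|L}(loop_S(i_1))$, which is handled like $seq$ together with the $loop_S$ rule producing $seq(i_1',loop_S(i_1))$; $loop_P$ is the same with $\multiInterleaving$.

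\textbf{Expected obstacle.} The genuinely delicate point is exactly the alignment used in the parallel case: the induction hypothesis as stated only guarantees that \emph{some} head of an operand's multi-trace is executable, whereas the interleaving forces us to consume the particular head that actually begins $\mu$. I would secure this by strengthening the induction statement so that the decomposition is produced for an arbitrarily prescribed head of $\mu$; this strengthening is sound because the denotational operator $\multiSeq$ acts component-wise, so the first action on each lifeline of an accepted multi-trace is never preceded by another action on that lifeline and is therefore enabled, making every head of an accepted multi-trace executable. Once this head-to-transition alignment is established, all remaining steps are the routine commutations between $\multiAppendLeft$, $\multiSeq$ and $\multiInterleaving$ already exercised in Lemma~\ref{lem:sem_de_execute1}.
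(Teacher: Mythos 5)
Your overall skeleton (structural induction on $i$, the base cases, $alt$, the two-way split on $\mu_1$ in the $seq$ case using Lemma~\ref{lem:sem_de_terminates} to discharge the side condition $i_1\downarrow$, and the reduction of the loops to the $seq$/$par$ patterns) coincides with the paper's proof, and you have correctly located the one delicate point, namely the head alignment in the interleaving cases, which the paper itself treats only briskly. But your repair of that point contains a genuine gap: the strengthened induction statement you rely on --- ``the decomposition is produced for an arbitrarily prescribed head of $\mu$'', justified by the claim that every head of an accepted multi-trace is executable --- is false. Take $L=\{l_1,l_2\}$ and $i=seq(l_1!m,\,l_2!m)$. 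Then $\sigma_{|L}(i)=\{(l_1!m,\;l_2!m)\}$ because $\multiSeq$ concatenates component-wise, and this accepted multi-trace $\mu$ admits the decomposition $\mu=l_2!m~\multiAppendLeft~(l_1!m,\varepsilon)$, so $l_2!m$ is a head of $\mu$; yet $i\not\xrightarrow{l_2!m}$, since the only rule that could fire the right operand of $seq$ requires $l_1!m\downarrow$, which fails. Your justification goes wrong exactly here: it is true that $\multiSeq$ acts component-wise \emph{denotationally}, so the head of a lifeline component is preceded by nothing on its own lifeline, but enabledness in the operational semantics is not a per-lifeline matter --- $seq$ blocks the whole right operand, across all lifelines, until the left operand terminates. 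Consequently, in your $par$ case, when you fix the particular head $a$ of $\mu$ contributed by $\mu_1$ and apply the induction hypothesis ``for this very head'', the hypothesis you invoke is unavailable: embedding the example above as $i_1$ inside $par(i_1,j)$ with, say, $j=l_1!n$, $\mu_2=(l_1!n,\varepsilon)$ and $\mu=(l_1!n.l_1!m,\;l_2!m)$, your unfolding may legitimately select the head $l_2!m$ coming from $\mu_1$, which is not executable in $i_1$, so the argument stalls even though the lemma's existential conclusion is saved by a different head ($l_1!n$, contributed by $\mu_2$). Nothing in your proposal identifies such a ``good'' head or shows one exists; that is precisely what remains to be proved.

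Note that the paper's own argument runs in the opposite direction and thereby avoids the false claim: it first applies the \emph{unstrengthened} induction hypothesis to an operand with non-empty contribution, obtaining some executable head $a_1$ of $\mu_1$ together with its residual, and only then asks whether $\mu$ can be written $a_1~\multiAppendLeft~\mu'$ with $\mu'\in\mu_1'\multiInterleaving\mu_2$; when it cannot, it switches to the symmetric case in which the head of $\mu$ is contributed by $\mu_2$. The paper thus never asserts that a prescribed head of $\mu$ is executable, only that one of the executable heads supplied by the induction hypotheses can be made to lead $\mu$. To salvage your write-up you should drop the strengthening and argue along these existential lines (and, for full rigour, also justify why the switching between operands terminates --- a point the paper itself compresses into ``we can go back to the second case''); as it stands, the crux case of your proof rests on an auxiliary claim with an explicit counterexample.
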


\begin{proof}
Let us reason by induction on the term structure of $i$.
\begin{itemize}
    \item we cannot have $i = \varnothing$ because it contradicts $\mu \neq \varepsilon_L$
    \item if $i \in \mathbb{A}(L)$ then we have $\mu = i ~\multiAppendLeft~ \varepsilon_L$. We then have the existence of $i' = \varnothing$ which indeed satisfies that $i \xrightarrow{i} \varnothing$ and $\varepsilon_L \in \sigma_{|L}(\varnothing)$
    \item if $i$ is of the form $alt(i_1,i_2)$ then $\mu \in \sigma_{|L}(i)$ implies either $\mu \in \sigma_{|L}(i_1)$ or $\mu \in \sigma_{|L}(i_2)$. Let us suppose it is the first case (the second is identical). Then, we can apply the induction hypothesis on sub-interaction $i_1$, which reveals the existence of $a$, $\mu'$ and $i_1'$ such that $\mu = a ~\multiAppendLeft~ \mu'$, $i_1 \xrightarrow{a} i_1'$ and $\mu' \in \sigma_{|L}(i_1')$. By definition of the execution relation "$\rightarrow$", this implies that $alt(i_1,i_2) \xrightarrow{a} i_1'$. As a result, we have identified $i'=i_1'$ which satisfies the property.
    \item if $i$ is of the form $par(i_1,i_2)$ then $\mu \in \sigma_{|L}(i)$ implies the existence of multi-traces $\mu_1$ and $\mu_2$ such that $\mu \in \mu_1 \multiInterleaving \mu_2$, $\mu_1 \in \sigma_{|L}(i_1)$ and $\mu_2 \in \sigma_{|L}(i_2)$. Given that $\mu \neq \varepsilon_L$ we have either or both of $\mu_1 \neq \varepsilon_L$ and $\mu_2 \neq \varepsilon_L$. Let us suppose the first case (the other is similar). We then have by the induction hypothesis the existence of $a_1 \in \mathbb{A}(L)$, $\mu_1' \in \mathbb{M}(L)$ and $i_1' \in \mathbb{I}(L)$ such that $\mu_1 = a_1 ~\multiAppendLeft~ \mu_1'$, $i_1 \xrightarrow{a_1} i_1'$ and $\mu_1' \in \sigma_{|L}(i_1')$. Let us also suppose that $\mu$ can then be written as $\mu = a_1 ~\multiAppendLeft~ \mu'$ where $\mu' \in \mu_1' \multiInterleaving \mu_2$. If this is not the case then it means that there exists an action $a_2$ such that $\mu_2 = a_2 ~\multiAppendLeft~ \mu_2'$ and $\mu = a_2 ~\multiAppendLeft~ \mu''$ where $\mu'' \in \mu_1 \multiInterleaving \mu_2'$ and we can go back to the second case. In any case we now have $\mu = a_1 ~\multiAppendLeft~ \mu'$ and $par(i_1,i_2) \xrightarrow{a_1} par(i_1',i_2)$ with, by definition, $\mu' \in \sigma_{|L}(par(i_1',i_2))$. We therefore have identified $i' = par(i_1',i_2)$ and $\mu'$ which satisfy the property.
    \item if $i$ is of the form $seq(i_1,i_2)$ then there exist $\mu_1 \in \sigma_{|L}(i_1)$ and $\mu_2 \in \sigma_{|L}(i_2)$ such that $\mu \in (\mu_1 \multiSeq \mu_2)$. Then:
    \begin{itemize}
        \item if $\mu_1 \neq \varepsilon_L$, we can apply the induction hypothesis on sub-interaction $i_1$ s.t. we have the existence of $a_1$, $\mu_1'$ and $i_1'$ s.t. $\mu_1 = a_1 ~\multiAppendLeft~ \mu_1'$, $i_1 \xrightarrow{a_1} i_1'$ and $\mu_1' \in \sigma_{|L}(i_1')$. By definition of the execution relation "$\rightarrow$", this implies that $seq(i_1,i_2) \xrightarrow{a_1} seq(i_1',i_2)$. By definition of $\sigma_{|L}$, given that $\mu_1' \in \sigma_{|L}(i_1')$ and $\mu_2 \in \sigma_{|L}(i_2)$, we have $(\mu_1' \multiSeq \mu_2) \subset \sigma_{|L}(seq(i_1',i_2))$. Let us then denote by $\mu'$ the multi-trace such that $\mu = a_1 ~\multiAppendLeft~ \mu'$. Then, given that  $\mu' \in (\mu_1' \multiSeq \mu_2)$ this implies that $\mu' \in \sigma_{|L}(seq(i_1',i_2))$. We therefore have identified $i' = seq(i_1',i_2)$ and $\mu'$ which satisfy the property.
        \item if $\mu_1 = \varepsilon_L$ then, as per Lem.\ref{lem:sem_de_terminates}, we have $i_1\downarrow$. Also, because $\mu \neq \varepsilon_L$ we must have $\mu_2 \neq \varepsilon_L$. We can apply the induction hypothesis on sub-interaction $i_2$ s.t. we have the existence of $a_2$, $\mu_2'$ and $i_2'$ s.t. $\mu_2 = a_2 ~\multiAppendLeft~ \mu_2'$, $i_2 \xrightarrow{a_2} i_2'$ and $\mu_2' \in \sigma_{|L}(i_2')$. By definition of the execution relation "$\rightarrow$", and because the precondition $i_1\downarrow$ is verified, this implies that $seq(i_1,i_2) \xrightarrow{a} i_2'$. As a result, we have identified $i' = i_2'$ and $\mu' = \mu_2'$ which satisfy the property.
    \end{itemize}
    \item if $i$ is of the form $loop_S(i_1)$ then there exists $\mu_1 \in \sigma_{|L}(i_1)$ and $\mu_0 \in \sigma_{|L}(i)$ such that $\mu \in \mu_1 \multiSeq \mu_0$. Let us suppose that $\mu_1 \neq \varepsilon_L$ (otherwise we must have $\mu_0 \neq \varepsilon_L$ and we can be brought back to the same case). We can apply the induction hypothesis on sub-interaction $i_1$ s.t. we have the existence of $a_1$, $\mu_1'$ and $i_1'$ s.t. $\mu_1 = a_1 ~\multiAppendLeft~ \mu_1'$, $i_1 \xrightarrow{a_1} i_1'$ and $\mu_1' \in \sigma_{|L}(i_1')$. By definition of the execution relation "$\rightarrow$", this implies that $loop_S(i_1) \xrightarrow{a} seq(i_1',loop_S(i_1))$. Let us then denote by $\mu'$ the multi-trace s.t. $\mu = a_1 ~\multiAppendLeft~ \mu'$. Then, given that  $\mu' \in (\mu_1' \multiSeq \mu_0)$ this implies that $\mu' \in \sigma_{|L}(seq(i_1',loop_S(i_1)))$. We therefore have identified $i' = seq(i_1',loop_S(i_1))$ and $\mu'$ which satisfy the property.
    \item if $i$ is of the form $loop_P(i_1)$ then there exists $\mu_1 \in \sigma_{|L}(i_1)$ and $\mu_0 \in \sigma_{|L}(i)$ such that $\mu \in \mu_1 \multiInterleaving \mu_0$. Let us suppose that $\mu_1 \neq \varepsilon_L$ (otherwise we must have $\mu_0 \neq \varepsilon_L$ and we can be brought back to the same case). We can apply the induction hypothesis on sub-interaction $i_1$ s.t. we have the existence of $a_1$, $\mu_1'$ and $i_1'$ s.t. $\mu_1 = a_1 ~\multiAppendLeft~ \mu_1'$, $i_1 \xrightarrow{a_1} i_1'$ and $\mu_1' \in \sigma_{|L}(i_1')$. By definition of the execution relation "$\rightarrow$", this implies that $loop_P(i_1) \xrightarrow{a} par(i_1',loop_P(i_1))$. Let us then suppose the existence of $\mu'$ s.t. $\mu = a_1 ~\multiAppendLeft~ \mu'$ (otherwise a first action is taken from $\mu_0$ and we can go back to the same case). Then, given that  $\mu' \in (\mu_1' \multiInterleaving \mu_0)$ this implies that $\mu' \in \sigma_{|L}(par(i_1',loop_P(i_1)))$. We therefore have identified $i' = par(i_1',loop_P(i_1))$ and $\mu'$ which satisfy the property.
\end{itemize}
\qed 
\end{proof}

Thanks to the Lem.\ref{lem:sem_de_execute2} as well as the characterization from Lem.\ref{lem:sem_de_terminates} we can conclude on the inclusion of the $\sigma_{|L}$ semantics into the $\sigma_{o|L}$ semantics.

\begin{theorem}[Inclusion of $\sigma_{|L}$ in $\sigma_{o|L}$\label{th:sem_de_included_in_sem_op}]
For any interaction $i \in \mathbb{I}(L)$:
\[
\sigma_{o|L}(i) \supset \sigma_{|L}(i)
\]
\end{theorem}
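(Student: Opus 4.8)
The plan is to establish the reverse inclusion $\sigma_{|L}(i) \subseteq \sigma_{o|L}(i)$ by induction on the cumulative length $|\mu|$ of the multi-trace, where the statement being proved is quantified universally over all interactions $i \in \mathbb{I}(L)$. This universal quantification is essential: the inductive step decomposes $\mu$ via the execution relation, which replaces $i$ by a follow-up interaction $i'$, so the induction hypothesis must apply to the shorter multi-trace for \emph{every} interaction, not just for $i$. This mirrors exactly the structure of Theorem~\ref{th:sem_op_included_in_sem_de}, except that the engine driving the non-empty case is now the right-side characterization Lemma~\ref{lem:sem_de_execute2} rather than the left-side Lemma~\ref{lem:sem_de_execute1}.

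For the base case, I would take $\mu = \varepsilon_L$. Then $\varepsilon_L \in \sigma_{|L}(i)$, and Lemma~\ref{lem:sem_de_terminates} gives $i \downarrow$. The first inference rule of Definition~\ref{def:interaction_operational_semantics} then yields $\varepsilon_L \in \sigma_{o|L}(i)$ directly, which settles this case for every $i$.

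For the inductive step, I would assume $\mu \neq \varepsilon_L$ together with $\mu \in \sigma_{|L}(i)$, and invoke Lemma~\ref{lem:sem_de_execute2} to obtain an action $a \in \mathbb{A}(L)$, a multi-trace $\mu' \in \mathbb{M}(L)$ and an interaction $i' \in \mathbb{I}(L)$ such that $\mu = a ~\multiAppendLeft~ \mu'$, $i \xrightarrow{a} i'$ and $\mu' \in \sigma_{|L}(i')$. Since $|a ~\multiAppendLeft~ \mu'| = |\mu'| + 1$, we have $|\mu'| < |\mu|$, so the induction hypothesis applies to $\mu'$ on the interaction $i'$, giving $\mu' \in \sigma_{o|L}(i')$. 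Feeding $i \xrightarrow{a} i'$ and $\mu' \in \sigma_{o|L}(i')$ into the second inference rule of Definition~\ref{def:interaction_operational_semantics} reconstructs $a ~\multiAppendLeft~ \mu' = \mu \in \sigma_{o|L}(i)$, as required.

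The main obstacle here is not computational but structural: essentially all the real work is already absorbed into Lemma~\ref{lem:sem_de_execute2}, whose case analysis on the term structure of $i$ handles the delicate interleaving and sequencing decompositions. Consequently, the chief point of care in this theorem is simply to organize the induction on the length $|\mu|$ correctly, ensuring the hypothesis ranges over all interactions so that the switch from $i$ to $i'$ is legitimate, and to verify that the measure strictly decreases (which holds because exactly one action $a$ is stripped off). Combined with Theorem~\ref{th:sem_op_included_in_sem_de}, this double inclusion establishes $\sigma_{o|L}(i) = \sigma_{|L}(i)$ and thereby justifies Property~\ref{prop:operational_formulation_multitrace}.
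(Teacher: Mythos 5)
Your proposal is correct and matches the paper's proof essentially verbatim: both proceed by induction on the size of $\mu$ (with the induction hypothesis ranging over all interactions), settle the base case via Lemma~\ref{lem:sem_de_terminates}, and handle the non-empty case by invoking Lemma~\ref{lem:sem_de_execute2} to extract $a$, $\mu'$, $i'$ and then closing with the second inference rule of Definition~\ref{def:interaction_operational_semantics}. Your explicit remark that the hypothesis must be universally quantified over interactions, so the switch from $i$ to $i'$ is legitimate, is a point the paper leaves implicit but is exactly the right care to take.
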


\begin{proof}
Let us consider $i \in \mathbb{I}(L)$ and $\mu \in \sigma_{|L}(i)$ and let us reason by induction on the size of $\mu$.
\begin{itemize}
    \item If $\mu = \varepsilon_L$, the fact that $\mu = \varepsilon_L \in \sigma_{|L}(i)$ implies, as per Lem.\ref{lem:sem_de_terminates}, that $i\downarrow$. Then, by definition of $\sigma_{o|L}$, this means that $\varepsilon_L \in \sigma_{o|L}(i)$.
    \item If $\mu \neq \varepsilon$ then, as per Lem.\ref{lem:sem_de_execute2} this implies the existence of $a$, $\mu'$ and $i'$ s.t. $\mu = a ~\multiAppendLeft~ \mu'$, $i \xrightarrow{a} i'$ and $\mu' \in \sigma_{|L}(i')$. Because $\mu'$ is of a smaller size than $\mu$ (minus one), we can apply the induction hypothesis so that we have $\mu' \in \sigma_{o|L}(i')$. Given that we also have $i \xrightarrow{a} i'$, we have by definition $\mu = a ~\multiAppendLeft~ \mu' \in \sigma_{o|L}(i)$.
\end{itemize}
\qed 
\end{proof}

We have finally proven both inclusion and we conclude with Th.\ref{th:sem_equivalence_de_op} that the operational semantics $\sigma_{o|L}$ that we have defined in Def.\ref{def:interaction_operational_semantics} is indeed equivalent to the denotational-style semantics $\sigma_{|L}$ from Def.\ref{def:algebraic_multi_trace_semantics}.

\begin{theorem}[Equivalence of the $\sigma_{|L}$ and $\sigma_{o|L}$ semantics\label{th:sem_equivalence_de_op}]
For any interaction $i \in \mathbb{I}(L)$:
\[
\sigma_{o|L}(i) = \sigma_{|L}(i)
\]
\end{theorem}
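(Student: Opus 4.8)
The plan is to obtain the set equality by antisymmetry of inclusion, since both inclusions have already been isolated as self-contained results. Concretely, Theorem~\ref{th:sem_op_included_in_sem_de} supplies $\sigma_{o|L}(i) \subset \sigma_{|L}(i)$ and Theorem~\ref{th:sem_de_included_in_sem_op} supplies $\sigma_{|L}(i) \subset \sigma_{o|L}(i)$; combining the two yields $\sigma_{o|L}(i) = \sigma_{|L}(i)$ for every $i \in \mathbb{I}(L)$. So at the level of this statement there is nothing new to do: it is a one-line corollary, which is exactly why the operational semantics was given its own notation $\sigma_{o|L}$ earlier, namely to make the two inclusions meaningful and separately provable before identifying the two sets.

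It is worth recording, in the plan, where the genuine content sits, so that the split into two inclusions is visibly exhaustive. Both inclusions are established by induction and both hinge on the termination characterization of Lemma~\ref{lem:sem_de_terminates}, which pairs the base rule of $\sigma_{o|L}$ (firing $\varepsilon_L$ exactly when $i\downarrow$) with the condition $\varepsilon_L \in \sigma_{|L}(i)$. For the left inclusion I would lean on Lemma~\ref{lem:sem_de_execute1}: any multi-trace produced operationally as $a \multiAppendLeft \mu'$ from a transition $i \xrightarrow{a} i'$ is shown already to lie in $\sigma_{|L}(i)$, so each construction rule of $\sigma_{o|L}$ is sound for the denotational semantics. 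For the right inclusion the key ingredient is Lemma~\ref{lem:sem_de_execute2}, which extracts a head action from any nonempty $\mu \in \sigma_{|L}(i)$, exhibiting $a$, $\mu'$ and $i'$ with $\mu = a \multiAppendLeft \mu'$, $i \xrightarrow{a} i'$ and $\mu' \in \sigma_{|L}(i')$; this is precisely what lets one peel off actions one at a time and rebuild $\mu$ using the operational rules.

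The only delicate point — already dispatched inside Lemma~\ref{lem:sem_de_execute2} rather than here — is, in the interleaving and $loop_P$ cases, deciding which operand supplies the first action: whenever $\mu \in \mu_1 \multiInterleaving \mu_2$ one must select the branch whose head matches the head of $\mu$, and symmetrically for the parallel loop. Because that case analysis is resolved in the lemma, no obstacle survives at the level of Theorem~\ref{th:sem_equivalence_de_op} itself. The proof is therefore simply the conjunction of Theorems~\ref{th:sem_op_included_in_sem_de} and~\ref{th:sem_de_included_in_sem_op}, and I would write it as such.
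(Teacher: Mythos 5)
Your proposal is correct and takes exactly the same route as the paper, which proves Theorem~\ref{th:sem_equivalence_de_op} as an immediate consequence of Theorems~\ref{th:sem_op_included_in_sem_de} and~\ref{th:sem_de_included_in_sem_op}, i.e.\ equality by double inclusion. Your supporting remarks about where the substantive work lives (Lemmas~\ref{lem:sem_de_terminates}, \ref{lem:sem_de_execute1} and \ref{lem:sem_de_execute2}, including the head-action case analysis for $\multiInterleaving$ and $loop_P$) also match the paper's actual decomposition.
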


\begin{proof}
Immediately implies by Th.\ref{th:sem_op_included_in_sem_de} and Th.\ref{th:sem_de_included_in_sem_op}. 
\qed 
\end{proof}

\clearpage

\section{Details on the 3SAT experiments\label{anx:exp_3sat}}

We provide an experimental validation of the implementation in HIBOU of the algorithm from Section \ref{ssec:algo} (which includes additional optimizations). It consists in testing the reliability of the algorithm (i.e. whether it returns a $Pass$ or a $Fail$) on a set of problems of which we know the answer. Those problems are obtained via reduction from benchmarks of 3SAT problems using the polynomial reduction presented in Property~\ref{prop:multipref_nphard_complexity} in Section \ref{ssec:complexity}.

The polynomial reduction from 3\,SAT to multi-trace analysis was implemented as a Python script which translates ".cnf" files in the DIMACS \cite{dimacs_format} format into the entry language of HIBOU.
Those experiments serve as a validation of the tool i.e. that we can indeed correctly differentiate between satisfiable and unsatisfiable problems. For comparing the results of HIBOU, we used the Varisat \cite{varisat_rs} solver.
We then verify that both tools obtain the same satisfiability result on all problems and compare the time required to obtain those results, keeping the median value of $5$ tries to smooth the data. All the related material and code for reproducing those results are available in \cite{hibou_3sat_bench}.

As input data we have used 3 sets of problems: two custom benchmarks with randomly generated problems
and the UF20 benchmark \cite{sat_benchmarks}.

Fig.\ref{fig:anx_hibou_varisat_3sat} provide details on each benchmark with, on the top left, information about the input problems (numbers of variables, clauses, instances), on the bottom left statistical information about the time required for the analysis using each tool, and, on the right a corresponding scatter plot. In the plot, each point corresponds to a given 3-SAT problem, with its position corresponding to the time required to solve it (by Varisat on the $x$ axis and HIBOU on the $y$ axis). Points in red are unsatisfiable problems while those in blue are satisfiable.

\begin{figure}[h]
    \centering
    \begin{subfigure}[t]{.975\linewidth}
        \centering

        \begin{minipage}{.49\linewidth}
            \centering
            \includegraphics[width=\textwidth]{images/experiments/3sat/Custom_small.png}
        \end{minipage}
        \caption{Input problems and output results for 'small' custom benchmark}
    \end{subfigure}
    \begin{subfigure}[t]{.975\linewidth}
        \centering
        \begin{minipage}{.49\linewidth}
\centering
{
\scriptsize  

\begin{tabular}{|l|r|}
\hline 
\# variables & 20-27 \\
\hline 
\# clauses & 40-100 \\
\hline 
\# instances & 790 \\
\hline 
\# SAT & 488 \\
\hline 
\# UNSAT & 302 \\
\hline 
\end{tabular}

~\\~\\

\begin{tabular}{|c|l|l|}
\cline{2-3}
\multicolumn{1}{c|}{} & \multicolumn{1}{c|}{varisat} & \multicolumn{1}{c|}{hibou}\\
\hline 
min & 0.01559 & 0.00246 \\
\hline 
q1 & 0.01808 & 0.03266 \\
\hline 
Mdn & 0.01895 & 0.49785 \\
\hline 
M & 0.01927 & 2.30209 \\
\hline 
q3 & 0.01995 & 1.93719 \\
\hline 
max & 0.02838 & 47.00918 \\
\hline 
$\sigma$ & 0.001813495 & 5.123452 \\
\hline 
\end{tabular}
}       
\end{minipage}
        \begin{minipage}{.49\linewidth}
            \centering
            \includegraphics[width=\textwidth]{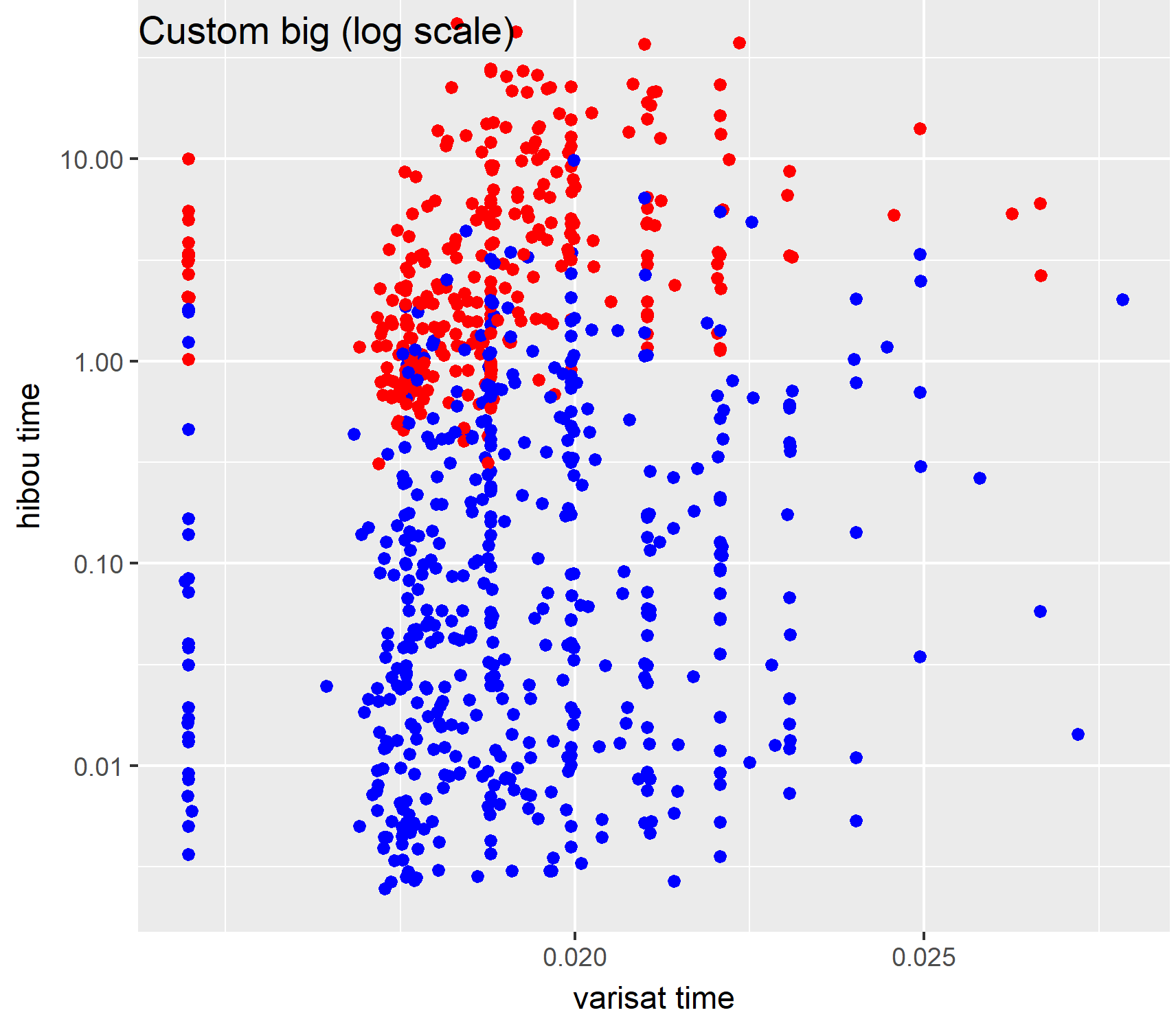}
        \end{minipage}
        \caption{Input problems and output results for 'big' custom benchmark}
    \end{subfigure}
    \begin{subfigure}[t]{.975\linewidth}
        \centering

        \begin{minipage}{.49\linewidth}
            \centering
            \includegraphics[width=\textwidth]{images/experiments/3sat/UF20.png}
        \end{minipage}
        \caption{Input problems and output results for UF-20 benchmark}
    \end{subfigure}
    \caption{Experiments on 3SAT benchmarks (times in seconds)\label{fig:anx_hibou_varisat_3sat}}
    \vspace{-.5cm}
\end{figure}

\clearpage

\section{Details on the use cases experiments\label{anx:exp_usecases}}

We consider four use case interactions:
\begin{enumerate}
    \item A simple interaction protocol describing the purchase of a book. It is represented on Fig.\ref{fig:anx_coping}. This protocol corresponds to the example provided in \cite{coping_with_bad_agent_interaction_protocols_when_monitoring_partially_observable_multiagent_systems_AnconaFFM18} which we have adapted to be represented as an interaction in our language. We have also added a loop so that is can express arbitrarily long behaviors. The loop being a parallel loop $loop_P$, several instances of the repeatable behavior can be executed at the same time, potentially creating numerous possible interleavings of actions.
    \item A usecase on a system for querying sensor data. It is represented on Fig.\ref{fig:anx_sensor}. This usecase is inspired by \cite{a_dynamic_and_context_aware_semantic_mediation_service_for_discovering_and_fusion_of_heterogeneous_sensor_data}.
    \item A modelisaton of the Alternating Bit Protocol. It is represented on Fig.\ref{fig:anx_abp}. This example is inspired by that found in \cite{high_level_message_sequence_charts}.
    \item A usecase on a system for uploading data to a server. It is represented on Fig.\ref{fig:anx_network}. This example is inspired by that found in \cite{comprehensive_multiparty_session_types}. We have also added a loop for having arbitrarily long behaviors.
\end{enumerate}

For each example, we generate a number of accepted multi-traces using a trace generation feature of HIBOU. Because those interactions contain loops, the exploration of the model's semantics for trace generation must be stopped by a certain criterion. For each example, the exploration criterion is given on the left of the example's corresponding Figure.

Then, for each accepted multi-trace, we select a number of prefixes according to a certain selection criterion. For each example, the prefix selection criterion is given on the left of the example's corresponding Figure.

Then, for each prefix, we generate a number of mutants which can be of three kinds:
\begin{itemize}
    \item "noise" mutants consists in inserting additional random actions to the multi-trace (on the correct local component according to the action's lifeline of occurence)
    \item "swap action" mutants consists in swapping the positions of two actions within the same local component of the multi-trace
    \item "swap component" mutants are created by merging two distinct multi-prefixes, taking some local components from each one. Those mutants are peculiar because, by construction, all their local components are correct locally, but the global scenario which they describe may not necessarily be correct. 
\end{itemize}

\clearpage

\begin{figure}[t]
    \centering
    \begin{subfigure}[t]{.975\linewidth}
        \centering
        \includegraphics[scale=.25]{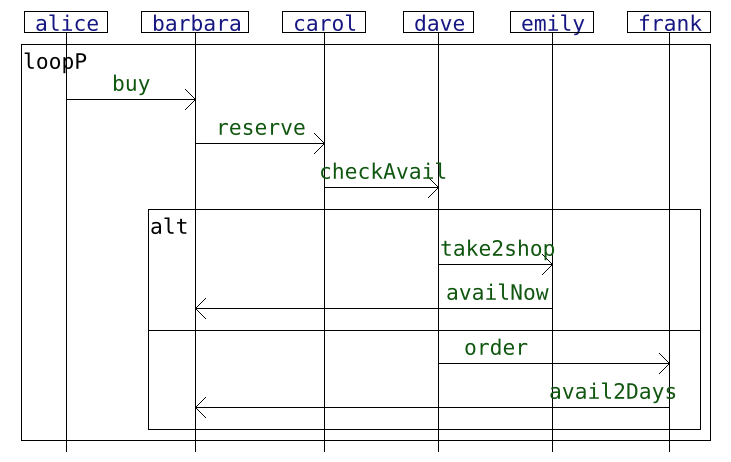}
        \caption{Diagram representation}
    \end{subfigure}

\vspace*{.5cm}

    \begin{subfigure}[t]{.975\linewidth}
        \centering 
            
\begin{tabular}{|c|l|}
\hline 
\makecell{\footnotesize Exploration\\\footnotesize criteria}
&
\makecell{\scriptsize loop $\leq 2$\\\scriptsize exhaustive}
\\
\hline 
\makecell{\footnotesize Prefix\\\footnotesize selection}
&
\makecell{\scriptsize $5$ random prefixes\\\scriptsize per trace}
\\
\hline 
\makecell{\footnotesize Mutant\\\footnotesize selection}
&
\makecell{\scriptsize $1$ mutant\\\scriptsize of each kind\\\scriptsize per prefix}
\\
\hline 
\end{tabular}

        \caption{Selection criteria}
    \end{subfigure}

\vspace*{.5cm}

    \begin{subfigure}[t]{.975\linewidth}
        \centering
        \includegraphics[width=.7\textwidth]{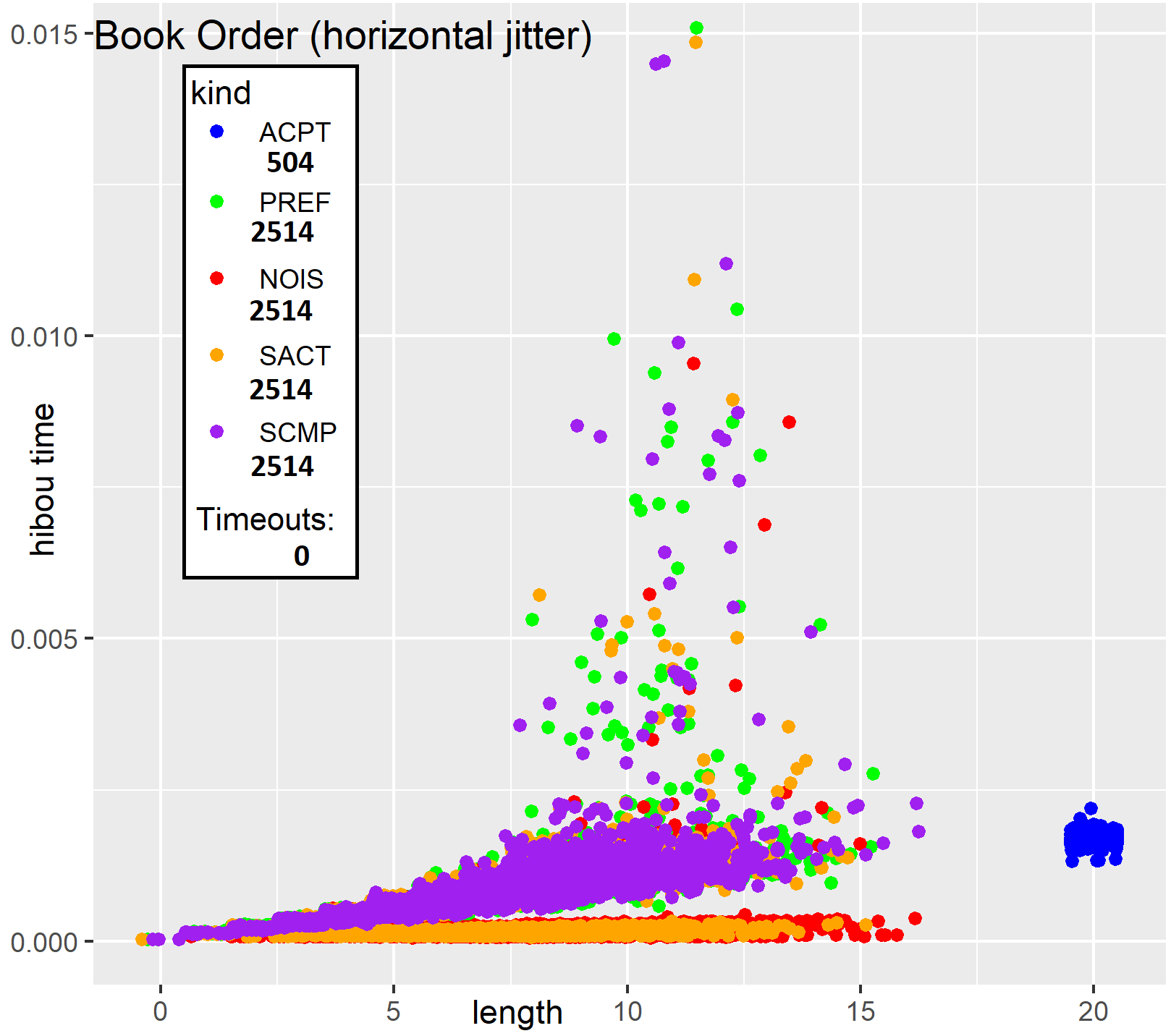}
        \caption{Experimental data}
    \end{subfigure}

\vspace*{.5cm}

    \begin{subfigure}[t]{.975\linewidth}
        \centering

\begin{tabular}{|l|l|l|l|l|}
\hline
min & q1 & M & q3 & max\\
\hline
0.0000268 & 0.0002316 & 0.0007333 & 0.0010154 & 0.0151036\\
\hline
\end{tabular}

\vspace*{.1cm}

\begin{tabular}{|l|l|l|l|l|l|l|}
\hline
Mdn & $\sigma$\\
\hline
0.0006283 & 0.0007559154\\
\hline
\end{tabular}

        \caption{Statistics}
    \end{subfigure}

    \caption{Book Order example from \cite{coping_with_bad_agent_interaction_protocols_when_monitoring_partially_observable_multiagent_systems_AnconaFFM18}}
    \label{fig:anx_coping}
\end{figure}

\clearpage

\begin{figure}[t]
    \centering
    \begin{subfigure}[t]{.975\linewidth}
        \centering
        \includegraphics[scale=.25]{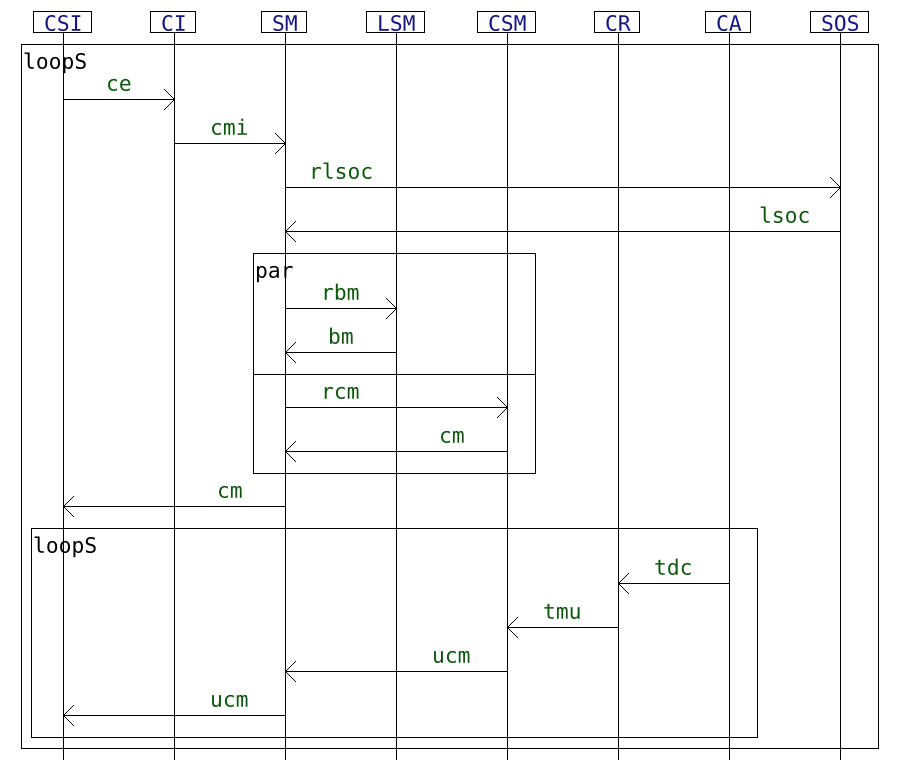}
        \caption{Diagram representation}
    \end{subfigure}

\vspace*{.5cm}

    \begin{subfigure}[t]{.975\linewidth}
        \centering 
            
\begin{tabular}{|l|l|l|}
\hline 
\makecell{\footnotesize Exploration\\\footnotesize criteria}
&
\makecell{\footnotesize Prefix\\\footnotesize selection}
&
\makecell{\footnotesize Mutant\\\footnotesize selection}
\\
\hline
\makecell{\scriptsize loop $\leq 50$\\\scriptsize partial \& random\\\scriptsize node $\leq 10~000$}
&
\makecell{\scriptsize $3$ random prefixes\\\scriptsize per trace}
&
\makecell{\scriptsize $1$ mutant\\\scriptsize of each kind\\\scriptsize per prefix}
\\
\hline 
\end{tabular}

        \caption{Selection criteria}
    \end{subfigure}

\vspace*{.5cm}

    \begin{subfigure}[t]{.975\linewidth}
        \centering
        \includegraphics[width=.75\textwidth]{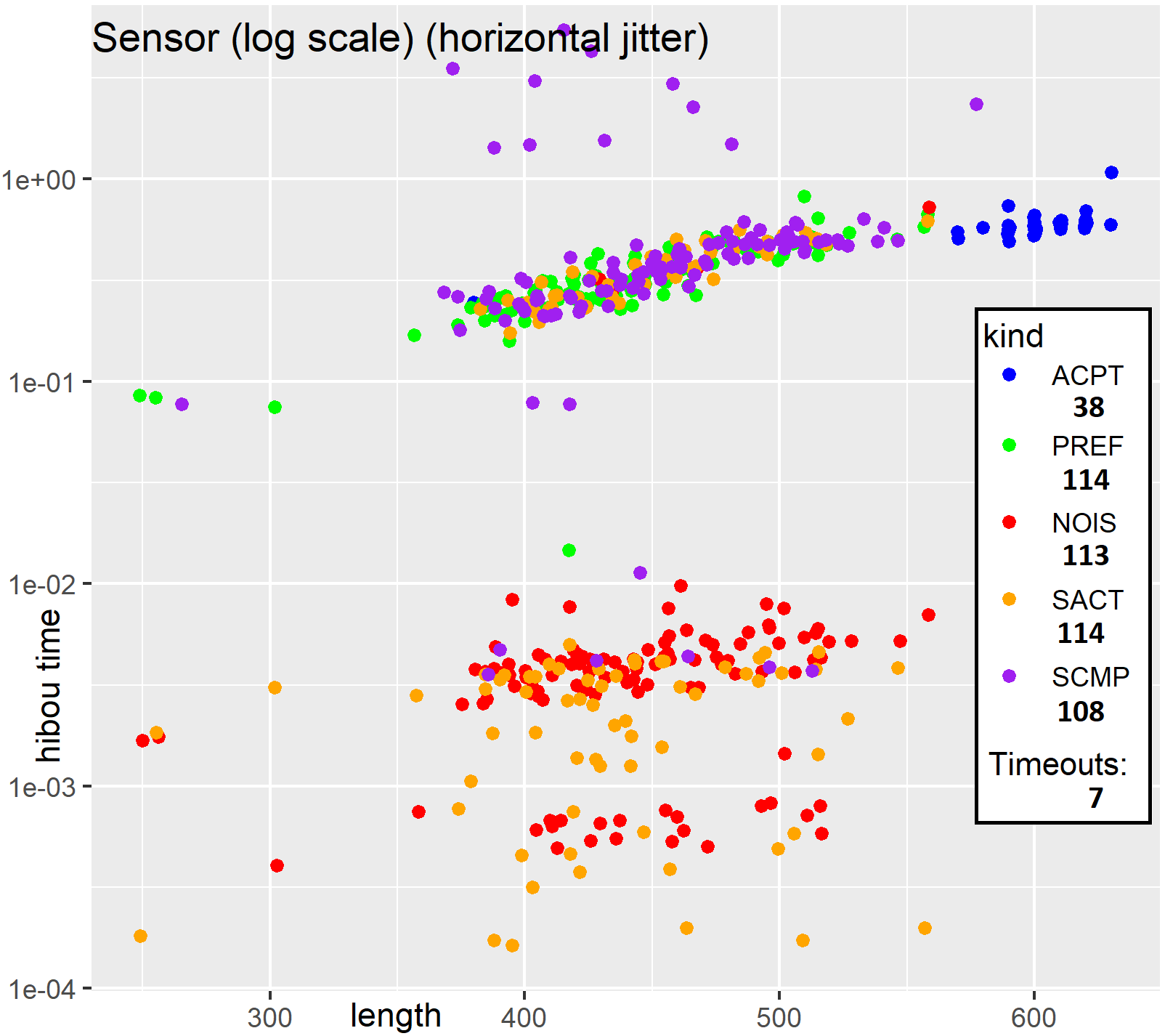}
        \caption{Experimental data}
    \end{subfigure}

\vspace*{.5cm}

    \begin{subfigure}[t]{.975\linewidth}
        \centering

\begin{tabular}{|l|l|l|l|l|}
\hline
min & q1 & M & q3 & max\\
\hline
0.000163 & 0.004035 & 0.297581 & 0.447480 & 5.480777 \\
\hline
\end{tabular}

\vspace*{.1cm}

\begin{tabular}{|l|l|l|l|l|l|l|}
\hline
Mdn & $\sigma$\\
\hline
0.256227 & 0.4673756\\
\hline
\end{tabular}

        \caption{Statistics}
    \end{subfigure}

    \caption{Sensor example from \cite{a_dynamic_and_context_aware_semantic_mediation_service_for_discovering_and_fusion_of_heterogeneous_sensor_data}}
    \label{fig:anx_sensor}
\end{figure}

\clearpage

\begin{figure}[t]
    \centering
    \begin{minipage}{.6\linewidth}
        \centering
        \begin{subfigure}[t]{.975\linewidth}
            \centering 
            
\begin{tabular}{|c|l|}
\hline 
\makecell{\footnotesize Exploration\\\footnotesize criteria}
&
\makecell{\scriptsize loop $\leq 15$\\\scriptsize partial \& random\\\scriptsize node $\leq 35~000$}
\\
\hline 
\makecell{\footnotesize Prefix\\\footnotesize selection}
&
\makecell{\scriptsize $5$ random prefixes\\\scriptsize per trace}
\\
\hline 
\makecell{\footnotesize Mutant\\\footnotesize selection}
&
\makecell{\scriptsize $1$ mutant\\\scriptsize of each kind\\\scriptsize per prefix}
\\
\hline 
\end{tabular}

            \caption{Selection criteria.}
        \end{subfigure}

\vspace*{.5cm}

        \begin{subfigure}[t]{.975\linewidth}
            \centering
            \includegraphics[width=\textwidth]{images/experiments/plots/abp_rw.png}
            \caption{Experimental data}
        \end{subfigure}

\vspace*{.5cm}

        \begin{subfigure}[t]{.975\linewidth}
            \centering

\begin{tabular}{|l|l|l|l|l|}
\hline
min & q1 & M & q3 & max\\
\hline
0.000119 & 0.001764 & 0.007855 & 0.006350 & 3.206065\\
\hline
\end{tabular}

\vspace*{.1cm}

\begin{tabular}{|l|l|l|l|l|l|l|}
\hline
Mdn & $\sigma$\\
\hline
0.003609 & 0.04388982\\
\hline
\end{tabular}

            \caption{Statistics}
        \end{subfigure}
    \end{minipage}
    \begin{minipage}{.39\linewidth}
        \centering
        \begin{subfigure}[t]{.975\linewidth}
            \centering
            \includegraphics[scale=.25]{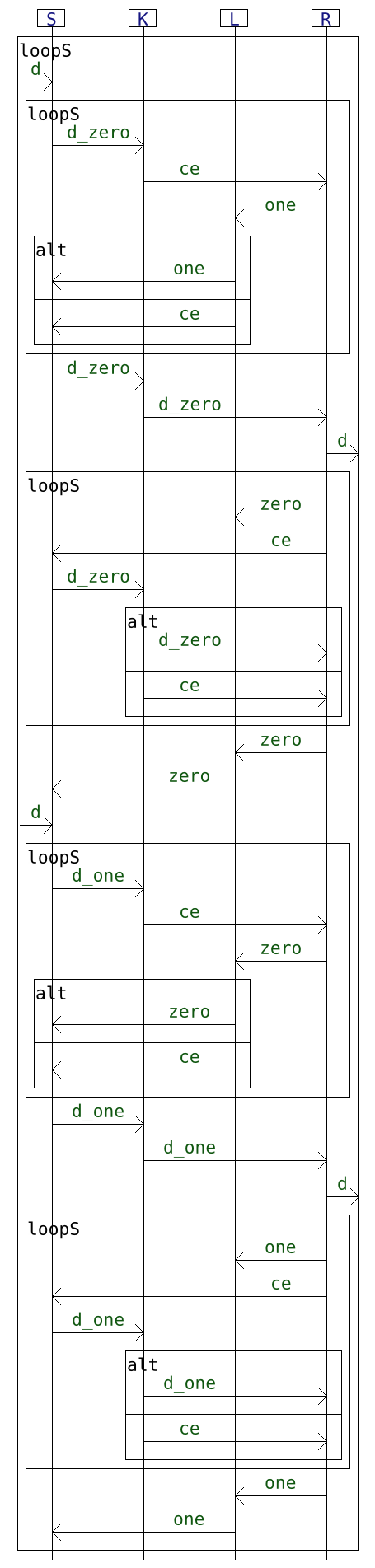}
            \caption{Diagram representation}
        \end{subfigure}
    \end{minipage}
    \caption{Alternating Bit Protocol example adapted from \cite{high_level_message_sequence_charts}.}
    \label{fig:anx_abp}
\end{figure}

\clearpage

\begin{figure}[t]
    \centering
    \begin{subfigure}[t]{.975\linewidth}
        \centering
        \includegraphics[scale=.25]{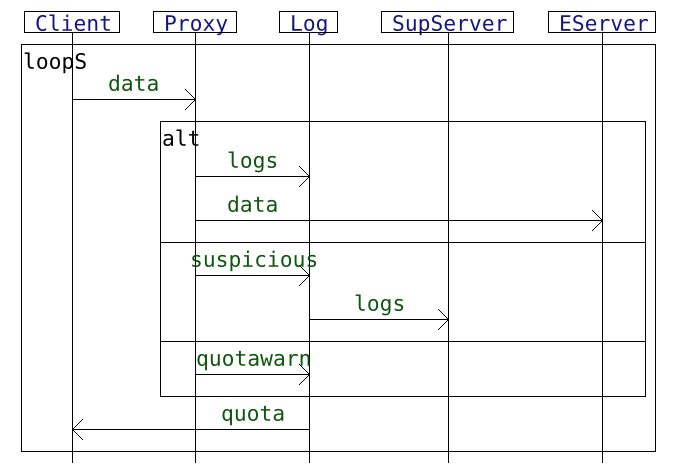}
        \caption{Diagram representation}
    \end{subfigure}

\vspace*{.5cm}

    \begin{subfigure}[t]{.975\linewidth}
        \centering 
            
\begin{tabular}{|c|l|}
\hline 
\makecell{\footnotesize Exploration\\\footnotesize criteria}
&
\makecell{\scriptsize loop $\leq 10$\\\scriptsize partial \& random\\\scriptsize node $\leq 20~000$}
\\
\hline 
\makecell{\footnotesize Prefix\\\footnotesize selection}
&
\makecell{\scriptsize $5$ random prefixes\\\scriptsize per trace}
\\
\hline 
\makecell{\footnotesize Mutant\\\footnotesize selection}
&
\makecell{\scriptsize $1$ mutant\\\scriptsize of each kind\\\scriptsize per prefix}
\\
\hline 
\end{tabular}

        \caption{Selection criteria}
    \end{subfigure}

\vspace*{.5cm}

    \begin{subfigure}[t]{.975\linewidth}
        \centering
        \includegraphics[width=.75\textwidth]{images/experiments/plots/network_rw.png}
        \caption{Experimental data}
    \end{subfigure}

\vspace*{.5cm}

    \begin{subfigure}[t]{.975\linewidth}
        \centering

\begin{tabular}{|l|l|l|l|l|}
\hline
min & q1 & M & q3 & max\\
\hline
0.0000593 & 0.0005464 & 0.0033469 & 0.0055067 & 0.0212768 \\
\hline
\end{tabular}

\vspace*{.1cm}

\begin{tabular}{|l|l|l|l|l|l|l|}
\hline
Mdn & $\sigma$\\
\hline
0.0035957 & 0.002896387\\
\hline
\end{tabular}

        \caption{Statistics}
    \end{subfigure}

    \caption{Network example from \cite{comprehensive_multiparty_session_types}}
    \label{fig:anx_network}
\end{figure}

\end{document}